\definecolor{DarkBlue}{rgb}{0.1,0.1,0.5}
\definecolor{DarkGreen}{rgb}{0.1,0.5,0.1}
\renewcommand*{\backref}[1]{}
\renewcommand*{\backrefalt}[4]{%
    \ifcase #1 (Not cited.)%
    \or        (Cited on page~#2)%
    \else      (Cited on pages~#2)%
    \fi}
\newcommand*{\tikzmk}[1]{\tikz[remember picture,overlay,] \node (#1) {};\ignorespaces}
\newcommand{\boxit}[1]{\tikz[remember picture,overlay]{\node[yshift=3pt,xshift=4pt,fill=#1,opacity=.25,fit={(A)($(B)+(1.0\linewidth,.8\baselineskip)$)}] {};}\ignorespaces}
\colorlet{pink}{red!40}
\colorlet{blue}{cyan!60}
\colorlet{mygray}{gray!55}
\renewcommand{\paragraph}{%
  \@startsection{paragraph}{4}%
  {\z@}{1.0ex \@plus 1ex \@minus .2ex}{-1em}%
  {\normalfont\normalsize\bfseries}%
}
\let\oldnl\nl
\newcommand{\nonl}{\renewcommand{\nl}{\let\nl\oldnl}}
\newtheorem{lemma}{Lemma}
\newtheorem{claim}{Claim}
\newtheorem{corollary}{Corollary}
\newtheorem{definition}{Definition}
\theoremstyle{definition}
\newtheorem{remark}[definition]{Remark}
\newcommand{\Alg}{\textrm{\textsc{Alg}}}
\newcommand{\APXhard}{\text{APX-hard}}
\newcommand{\BuildHierarchy}{\textrm{\textsc{BuildHierarchy}}}
\newcommand{\coNPc}{\text{co-NP-complete}}
\renewcommand{\d}{\mathbf d}
\newcommand{\e}{\mathbf e}
\newcommand{\EF}{\mathrm{EF}}
\newcommand{\EFone}{\textrm{EF1}}
\newcommand{\F}{{\mathcal F}}
\newcommand{\FNP}{\text{FNP}}
\newcommand{\fPO}{\textrm{fPO}}
\newcommand{\G}{{\mathcal G}}
\renewcommand{\H}{\mathcal H}
\newcommand{\I}{{\mathcal I}}
\newcommand{\level}{{\mathrm{level}}}
\newcommand{\MBB}{\mathrm{MBB}}
\newcommand{\N}{{\mathbbm N}}
\newcommand{\NPhard}{\text{NP-hard}}
\newcommand{\NP}{\text{NP}}
\newcommand{\NW}{\mathrm{NSW}}
\newcommand{\NSW}{\mathrm{NSW}}
\renewcommand{\O}{\mathcal O}
\newcommand{\p}{\mathbf p}
\renewcommand{\P}{\text{P}}
\newcommand{\pEF}{\mathrm{pEF}}
\newcommand{\pEFone}{\mathrm{pEF1}}
\newcommand{\PO}{\textrm{PO}}
\newcommand{\poly}{\mathrm{poly}}
\newcommand{\q}{\mathbf q}
\newcommand{\TFNP}{\mathrm{TFNP}}
\renewcommand{\u}{\mathbf u}
\newcommand{\V}{\mathcal V}
\newcommand{\x}{\mathbf x}
\newcommand{\opt}{\mathbf \omega}
\newcommand{\X}{\mathcal X}
\newcommand{\y}{\mathbf y}
\newcommand{\z}{\mathbf z}
\begin{document}

\title{\bfseries Finding Fair and Efficient Allocations}

\author{Siddharth Barman\thanks{Indian Institute of Science. \texttt{barman@csa.iisc.ernet.in} \\ \hspace*{13pt} Supported in part by a Ramanujan Fellowship (SERB - {SB/S2/RJN-128/2015}).}, \quad Sanath Kumar Krishnamurthy\thanks{Chennai Mathematical Institute. \texttt{sanathkumar9@cmi.ac.in}}, \quad Rohit Vaish\thanks{Indian Institute of Science. \texttt{rohitv@iisc.ac.in}}}

\date{}
\maketitle

\begin{abstract}
We study the problem of allocating a set of \emph{indivisible} goods among a set of agents in a \emph{fair} and \emph{efficient} manner. An allocation is said to be \emph{fair} if it is envy-free up to one good (EF1), which means that each agent prefers its own bundle over the bundle of any other agent up to the removal of one good. In addition, an allocation is deemed \emph{efficient} if it satisfies Pareto efficiency. While each of these well-studied properties is easy to achieve separately, achieving them together is far from obvious. Recently, \citet{CKM+16unreasonable} established the surprising result that when agents have additive valuations for the goods, there always exists an allocation that simultaneously satisfies these two seemingly incompatible properties. Specifically, they showed that an allocation that maximizes the Nash social welfare objective is both EF1 and Pareto efficient. However, the problem of maximizing Nash social welfare is $\NPhard{}$. As a result, this approach does not provide an efficient algorithm for finding a fair and efficient allocation.

In this paper, we bypass this barrier, and develop a pseudopolynomial time algorithm for finding allocations that are EF1 and Pareto efficient; in particular, when the valuations are bounded, our algorithm finds such an allocation in polynomial time. Furthermore, we establish a stronger existence result compared to \citet{CKM+16unreasonable}: For additive valuations, there always exists an allocation that is EF1 and \emph{fractionally} Pareto efficient.

Another key contribution of our work is to show that our algorithm provides a polynomial-time 1.45-approximation to the Nash social welfare objective. This improves upon the best known approximation ratio for this problem (namely, the 2-approximation algorithm of \citealp{CDG+17convex}), and also matches the lower bound on the integrality gap of the convex program of \citet{CDG+17convex}. Unlike many of the existing approaches, our algorithm is completely combinatorial, and relies on constructing \emph{integral} Fisher markets wherein specific equilibria are not only efficient, but also fair.
\end{abstract}

\section{Introduction}
\label{sec:Introduction}
The theory of fair division addresses the fundamental problem of allocating goods or resources among agents in a fair and efficient manner. Such problems arise in many real-world settings such as government auctions, divorce settlements, and border disputes. Starting with the work of \citet{S48problem}, there is now a vast literature in economics and mathematics to formally address fair division \citep{BT96fair,M04fair,BCE+16handbook}. 
 Many interesting connections have also been found between fair division and fields such as topology, measure theory, combinatorics, and algorithms \citep{M08using}.

Much of the prior work in fair division, though, has focused on \emph{divisible} goods, which model resources that can be fractionally allocated (such as land). A standard fairness concept in this setting is \emph{envy-freeness}~\citep{F67resource}, which requires that each agent prefers its own allocation over that of any other agent. A well-known result of \citet{V74equity} shows that
for the divisible setting, there always exists an allocation that is both envy-free (i.e., fair) and \emph{Pareto efficient}. Furthermore, such an allocation can be computed in polynomial time~\citep{EG59consensus,DPS+08market}. These results, however, do not extend to the setting of \emph{indivisible} goods, which model discrete resources such as courses at universities~\citep{OSB10finding} or inherited artwork. In fact, many of the classical solution concepts and algorithms that have been developed for divisible goods are not directly applicable to the indivisible setting. For example, an envy-free allocation fails to exist even in the simple setting of a single indivisible good and two agents.

These considerations have motivated recent work in the theoretical computer science and economics communities on developing relevant notions of fairness, along with existence results and algorithms for the problem of fairly allocating indivisible goods \citep{LMM+04approximately,B11combinatorial,BL16characterizing,KPW18fair}. We contribute to this line of work by showing that guarantees analogous to the fundamental result of \citet{V74equity} hold even for indivisible goods in terms of a natural and necessary relaxation of envy-freeness. Specifically, we show that for additive valuations,\footnote{\emph{Additivity} means that an agent's valuation for a set of goods is the sum of its valuations for the individual goods in that set.} a fair and efficient allocation always exists, and such an allocation can be computed in (pseudo)-polynomial time.

We consider an allocation of indivisible goods to be fair if it is \emph{envy-free up to one good} ($\EFone{}$). This notion was defined by \citet{B11combinatorial}, and provides a compelling relaxation of the envy-freeness property.\footnote{The notion of \EFone{} has found practical appeal on the popular fair division website ``Spliddit'' \citep{GP15spliddit} and in course allocation at Wharton School of Business \citep{BCK+16course}.} An allocation is said to be $\EFone{}$ if each agent prefers its own bundle over the bundle of any other agent up to the removal of the most valuable good from the other agent's bundle. Although the existence of envy-free allocations is not guaranteed in the context of indivisible goods, an $\EFone{}$ allocation always exists---even under general, combinatorial valuations---and can be found in polynomial time~\citep{LMM+04approximately}.

With this notion of fairness in hand, it is relevant to ask whether we can achieve \emph{efficiency along with fairness} while allocating indivisible goods.\footnote{Note that fairness, by itself, does not guarantee efficiency; in fact, an $\EFone{}$ allocation can be highly inefficient (\Cref{subsec:EF1-inefficient}).} This question was recently studied by \citet{CKM+16unreasonable}, who showed a striking result that there is no need to trade efficiency for fairness: For additive valuations, an allocation that maximizes the \emph{Nash social welfare}~\citep{N50bargaining,KN79nash}---defined to be the geometric mean of the agents' valuations---is both fair ($\EFone{}$) and Pareto efficient. However, maximizing the Nash social welfare ($\NSW$) over integral allocations is an \NPhard{} problem \citep{NNR+14computational}. (In fact, the problem is known to be \APXhard{} \citep{Lee17APX}). Therefore, this existence result does not automatically provide an efficient algorithm for finding a fair and efficient allocation of indivisible goods. Our work bypasses this limitation by providing \emph{a pseudopolynomial time algorithm for finding an $\EFone{}$ and Pareto efficient allocation} of indivisible goods under additive valuations. In particular, when the valuations are \emph{bounded}, our algorithm finds such an allocation in \emph{polynomial time}. It is worth pointing out that the problem of maximizing $\NSW$ remains \APXhard{} even for bounded valuations \citep{Lee17APX}.

A related problem is that of developing approximation algorithms for $\NSW$ maximization.
This problem has received considerable attention in recent years~\citep{CG15approximating,AGM+18nash,AGS+17nash,BGH+17earning,CDG+17convex,GHM18Approximating}. The first constant-factor (specifically, $2.89$) approximation for this problem was provided by \citet{CG15approximating}. This approximation factor was subsequently improved to $e$ \citep{AGS+17nash}, and most recently to $2$ \citep{CDG+17convex}. Similar approximation guarantees have also been developed for more general market models such as piecewise-linear concave utilities \citep{AGM+18nash}, budget additive valuations \citep{GHM18Approximating}, and multi-unit markets \citep{BGH+17earning}. 

While the problem of approximating $\NSW$ is interesting in its own right, it is relevant to note that an allocation that approximates this objective is, in and of itself, not guaranteed to be EF1 or Pareto efficient (see \Cref{subsec:ApproxNash_NotEF1_NotPO} for an example).\footnote{We also provide an example (\Cref{subsec:spending-restricted}) in which \emph{every} rounding of the ``spending restricted outcome''---a market equilibrium notion used in the design of approximation algorithms for NSW~\citep{CG15approximating,AGM+18nash,CDG+17convex}---violates EF1.} A second key contribution of our work is to show that our algorithm provides \emph{a polynomial-time $1.45$-approximation to the $\NSW$ maximization problem}. Thus, not only does our algorithm improve upon the best-known approximation ratio for this problem (namely, the $2$-approximation algorithm of \citet{CDG+17convex}), it is also guaranteed to return a fair and efficient outcome. The following list summarizes our contributions.

\paragraph{Our contributions}
\begin{itemize}
	\item We develop an \emph{algorithm} for computing an $\EFone{}$ and Pareto efficient allocation for additive valuations. The running time of our algorithm is pseudopolynomial for general integral valuations (\Cref{THM:EF1+PO_BOUNDEDVALUATIONS_PSEUDOPOLYTIME}) and polynomial when the valuations are bounded (\Cref{rem:EF1+PO_BoundedValuations_PolyTime}). In addition, our algorithm can find an approximate $\EFone$ and approximate Pareto efficient allocation in polynomial time even without the bounded valuations assumption (\Cref{rem:EpsEFoneDeltaPOPolytime}).
	
	\item We establish a \emph{stronger existence result} compared to \citet{CKM+16unreasonable}: For additive valuations, there always exists an allocation that is $\EFone{}$ and \emph{fractionally} Pareto efficient~(\Cref{thm:EF1+fPO_Existence}). In other words, the problem of finding an $\EFone{}$ and fractionally Pareto efficient allocation is \emph{total}. An interesting complexity-theoretic implication of this result is that there exists a nondeterministic polynomial time algorithm for finding an $\EFone{}$ and Pareto efficient allocation (\Cref{rem:EF+PO_NonDeterministicPolytime}). This implication does not directly follow from the existence result of \citet{CKM+16unreasonable}, as the problem of verifying whether an arbitrary allocation is Pareto efficient is known to be \coNPc{}~\citep{KBK+09complexity}.
	
	\item We show that our algorithm provides a \emph{polynomial-time $1.45$-approximation} for the Nash social welfare ($\NSW$) maximization problem (\Cref{THM:APPROXNASH}). This improves upon the best known approximation factor for this problem (namely, the 2-approximation algorithm of \citet{CDG+17convex}), and also matches the lower bound of $e^{1/e} \approx 1.44$ on the integrality gap of the convex program of \citet{CDG+17convex}. 
	An interesting byproduct of our analysis is a novel connection between envy-freeness and $\NSW$: Under identical valuations, an $\EFone$ allocation provides a $1.45$-approximation to the maximum $\NSW$ (\Cref{lemma:id-val}).
\end{itemize}

\paragraph{Our techniques}  
It is known from the fundamental theorems of welfare economics that markets tend toward efficiency. Intuitively, our results are based on establishing a complementary result that markets can be \emph{fair} as well. In particular, we construct a Fisher market along with an underlying equilibrium which is integral (i.e., corresponding to an allocation of the indivisible goods) and $\EFone{}$. The fact that this allocation is a market equilibrium ensures, via the first welfare theorem, that it is Pareto efficient as well. 

More concretely, we start with a Pareto efficient allocation, and iteratively modify the allocation by exchanging goods between the agents. The goal of the exchange  step is to locally move toward a fair allocation. Additionally, throughout these exchanges, we maintain a set of prices that ensure that the current allocation corresponds to an equilibrium outcome for the existing market. We stop when the equilibrium of the market (i.e., the allocation at hand) satisfies \emph{price envy-freeness up to one good} (refer to \Cref{subsec:MarketTerminology} for a formal definition). Essentially, this property ensures that under the given market prices, the spending of an agent is at least that of any other agent up to the removal of the highest priced good from the other agent's bundle. Requiring the spendings to be balanced in this manner implies the desired $\EFone{}$ property for the corresponding fair division instance; see \Cref{sec:Algorithm} for a detailed description of this construction.

At a conceptual level, our approach differs from the existing approaches in two important ways: First, our algorithm works with an \emph{integral} Fisher market at every step, thereby breaking away from the standard \emph{relax-and-round} paradigm where a  fractional market equilibrium is first computed (typically as a solution of some convex program) followed by a rounding step \citep{CG15approximating,CDG+17convex,AGS+17nash,AGM+18nash,GHM18Approximating}. Second, unlike all existing approaches, our algorithm uses the notion of \emph{price envy-freeness up to one good} as a measure of balanced spending in the Fisher market. To the best of our knowledge, this notion is novel to this work, and might find future use in the design of fair and efficient algorithms for other settings.

\section{Preliminaries}
\label{sec:Preliminaries}

\subsection{The Fair Division Model}
\label{subsec:Model}

\paragraph{Problem instance} 
An \emph{instance} of the fair division problem is a tuple $\langle [n], [m], \V \rangle$, where $[n] = \{1,2,\dots,n\}$ denotes the set of $n \in \mathbb{N}$ \emph{agents}, $[m] = \{1,2,\dots,m\}$ denotes the set of $m \in \mathbb{N}$ \emph{goods}, and the \emph{valuation profile} $\V = \{v_1,v_2,\dots,v_n\}$ specifies the preferences of each agent $i \in [n]$ over the set of goods $[m]$ via a \emph{valuation function} $v_i: 2^{[m]} \rightarrow \mathbb{Z_+} \cup \{0\}$. We will assume throughout that the valuation functions are \emph{additive}, i.e., for each agent $i \in [n]$ and any set of goods $G \subseteq [m]$, $v_i(G) := \sum_{j \in G} v_i(\{j\})$.\footnote{We will assume that $v_i(\{\emptyset\}) = 0$ for all $i \in [n]$.} For simplicity, we will write $v_{i,j}$ instead of $v_i(\{j\})$ for a singleton good $j \in [m]$. Thus, $v_{i,j}$ is non-negative and integral for each agent $i \in [n]$ and each good $j \in [m]$. We will also assume, without loss of generality, that for each good $j \in [m]$, there exists some agent $i \in [n]$ with a nonzero valuation for it, i.e., $v_{i,j} > 0$. Finally, we let $v_{\max} \coloneqq \max_{i,j} v_{i,j}$.

\paragraph{Allocation}
An \emph{allocation} $\x \in \{0,1\}^{n \times m}$ refers to an $n$-partition $(\x_1,\dots,\x_n)$ of $[m]$, where $\x_i \subseteq [m]$ is the \emph{bundle} allocated to agent $i$. We let $\X$ denote the set of all $n$ partitions of $[m]$. Given an allocation $\x$, the valuation of an agent $i \in [n]$ for the bundle $\x_i$ is $v_i(\x_i) = \sum_{j \in \x_i} v_{i,j}$.

Another useful notion is that of a fractional allocation. A \emph{fractional allocation} $\x \in [0,1]^{n \times m}$ refers to a (possibly) fractional assignment of the goods to the agents such that no more than one unit of each good is allocated, i.e., for all $j \in [m]$, we have $\sum_{i \in [n]} x_{i,j} \leq 1$. We will use the term \emph{allocation} to refer to an integral allocation, and explicitly write \emph{fractional allocation} otherwise.

\subsection{Fairness Notions}
\label{subsec:FairnessNotions}

\paragraph{Envy-freeness and its variants}
Given an instance $\langle [n], [m], \V \rangle$ and an allocation $\x$, we say that an agent $i \in [n]$ \emph{envies} another agent $k \in [n]$ if $i$ strictly prefers the bundle of $k$ over its own bundle, i.e., $v_i(\x_k) > v_i(\x_i)$. An allocation $\x$ is said to be \emph{envy-free} ($\EF{}$) if each agent prefers its own bundle over that of any other agent, i.e., for every pair of agents $i,k \in [n]$, we have $v_i(\x_i) \geq v_i(\x_k)$.

An allocation $\x$ is said to be \emph{envy-free up to one good} (\EFone{}) if for every pair of agents $i,k \in [n]$, there exists a good $j \in \x_k$ such that $v_i(\x_i) \geq v_i(\x_k \setminus \{j\})$. Given any $\varepsilon > 0$, an allocation $\x$ is said to be \emph{$\varepsilon$-approximately envy-free up to one good} ($\varepsilon$-\EFone{}) if for every pair of agents $i,k \in [n]$, there exists a good $j \in \x_k$ such that $(1+\varepsilon) v_i(\x_i) \geq v_i (\x_k \setminus \{j\})$.

\paragraph{Nash social welfare}
Given an allocation $\x$, write $\NW(\x) := \left( \prod_{i \in [n]} v_i(\x_i) \right)^{\frac{1}{n}}$ to denote the \emph{Nash social welfare} of $\x$. An allocation $\x^*$ said to be \emph{Nash optimal} if $\textstyle{ \x^* \in \arg\max_{\x \in \X} \NW(\x) }$.

\subsection{Efficiency Notions}
\label{subsec:EfficiencyNotions}

\paragraph{Pareto efficiency}
Given an instance $\langle [n], [m], \V \rangle$ and an allocation $\x$, we say that $\x$ is Pareto dominated by another allocation $\y$ if $v_k(\y_k) \geq v_k(\x_k)$ for every agent $k \in [n]$, and  $v_i(\y_i) > v_i(\x_i)$ for some agent $i \in [n]$. An allocation is said to be \emph{Pareto efficient} or \emph{Pareto optimal} (\PO{}) if it is not Pareto dominated by any other allocation. 
Similarly, $\x$ is \emph{$\varepsilon$-Pareto efficient} ($\varepsilon$-$\PO$) if it is not $\varepsilon$-Pareto dominated by any other allocation $\y$, i.e., there does not exist an allocation $\y$ such that $v_k(\y_k) \geq (1+\varepsilon) v_k(\x_k)$ for every agent $k \in [n]$ and $v_i(\y_i) > (1+\varepsilon) v_i(\x_i)$ for some agent $i \in [n]$.

Some of our results use a generalization of Pareto efficiency, which we call \emph{fractional Pareto efficiency}. An allocation is said to be fractionally Pareto efficient (\fPO{}) if it not Pareto dominated by any fractional allocation. Thus, a fractionally Pareto efficient allocation is also Pareto efficient, but the converse is not necessarily true (\Cref{subsec:approxNSW+fPO_non-existance} provides an example).

\section{Main Results}
\label{sec:Results_Highlights}

This section provides the statements of our three main results: an algorithm for finding an $\EFone$ and $\PO$ allocation (\Cref{THM:EF1+PO_BOUNDEDVALUATIONS_PSEUDOPOLYTIME}), an existence result for $\EFone$ and $\fPO$ allocation (\Cref{thm:EF1+fPO_Existence}), and an approximation algorithm for Nash social welfare (\Cref{THM:APPROXNASH}).

\paragraph{Algorithmic Result:}

\begin{restatable}{theorem}{EFonePOBoundedValuationsPseudoPolyTime}
 \label{THM:EF1+PO_BOUNDEDVALUATIONS_PSEUDOPOLYTIME}
 Given any fair division instance $\I = \langle [n], [m], \V \rangle$ with additive valuations, an allocation that is envy-free up to one good $(\EFone{})$ and Pareto efficient $(\PO{})$ can be found in $\O\left( \poly(m,n, v_{\max} ) \right)$ time, where $v_{\max} = \max_{i,j} v_{i,j}$.
\end{restatable}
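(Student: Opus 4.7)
The plan is to recast the problem as finding a suitably balanced equilibrium of an integral Fisher market. For a price vector $p \in \mathbb{Q}_{>0}^m$ and an integral allocation $\x$, the pair $(\x,p)$ is a market equilibrium (with equal unit budgets) if every good $j \in \x_i$ maximizes the bang-per-buck ratio $v_{i,j}/p_j$ over all goods, and the market clears. By the first welfare theorem, any such allocation is fractionally $\PO$, which in particular yields $\PO$. To couple this with fairness, we introduce \emph{price envy-freeness up to one good} (\pEFone): for all $i,k$ with $\x_k \neq \emptyset$, $p(\x_i) \geq p(\x_k) - \max_{j \in \x_k} p_j$, where $p(\x_i) := \sum_{j \in \x_i} p_j$. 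A direct swap argument shows that if $(\x,p)$ is an equilibrium and $\x$ is \pEFone, then $\x$ is \EFone for the original valuations, since the $\MBB$ property upper-bounds the value an agent places on a good in another's bundle by a fixed multiple of its price.

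The algorithm maintains this equilibrium invariant throughout. We initialize with a welfare-maximizing allocation $\x^0$ (each good assigned to an agent valuing it most), and set $p_j^0 := v_{i(j),j}$ where $i(j)$ is the owner of $j$ in $\x^0$. The outer loop identifies a least spender $i^*$ and searches the \MBB\ graph---vertices are agents and goods, with an edge $i \sim j$ whenever $j$ is $\MBB$ for $i$, and a directed edge from each good to its current owner---for an alternating path from $i^*$ to some agent $h$ violating \pEFone against $i^*$. If such a path exists, we shift goods along the path toward $i^*$; every transferred good moves along an \MBB\ edge, so the equilibrium is preserved and $p(\x_{i^*})$ strictly increases.

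If no such path exists, let $S$ be the component of $i^*$ in the \MBB\ graph. We multiplicatively scale up the prices of all goods currently held by agents in $S$ by the smallest factor $\beta > 1$ that either (i) creates a new \MBB\ edge from some agent in $S$ to a good outside $S$, enlarging $S$, or (ii) causes \pEFone\ to be satisfied between $i^*$ and every high-spender outside $S$. Since this rescaling leaves relative \MBB\ ratios \emph{within} $S$ untouched, equilibrium is preserved; the rescaling is precisely the mechanism by which wealth leaks from the overspending side back toward $i^*$. Because $S$ strictly grows each time it enlarges, after at most $n+m$ price updates either the graph becomes fully connected or a violating path to $i^*$ becomes available.

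The main obstacle is establishing pseudopolynomial termination. The key structural observation is that along any chain of \MBB\ edges, prices must be proportional to ratios of integer valuations, so with a canonical choice of initial prices every price that ever arises is a rational of the form $p_j = \prod_t (v_{i_t,j_t}/v_{i_t,j_t'})$ with numerator and denominator bounded by $v_{\max}^{\O(m)}$. Coupled with this, we design a potential $\Phi$ combining the minimum spending $\min_i p(\x_i)$, the number of agents achieving it, and $|S|$, and arrange the steps so that $\Phi$ strictly improves at every inner swap or price update. Since $\Phi$ can take only $\poly(m,n,v_{\max})$ distinct values under the bounded-denominator structure, the algorithm terminates in $\O(\poly(m,n,v_{\max}))$ steps, and each step runs in polynomial time by standard BFS in the \MBB\ graph. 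The delicate technical point will be choosing $\beta$ so that prices never accumulate denominators outside this controlled family, and verifying that the combined potential genuinely captures progress both during path augmentations and during price rises.
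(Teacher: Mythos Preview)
Your high-level architecture matches the paper: maintain an integral Fisher-market equilibrium (MBB allocation plus prices), use $\pEFone$ as the fairness proxy, alternate between swaps along MBB-alternating paths and multiplicative price rises on the least spender's component. The correctness direction ($\pEFone$ + MBB $\Rightarrow$ $\EFone$, and equilibrium $\Rightarrow$ $\fPO \Rightarrow \PO$) is also the same.

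The genuine gap is in your termination argument. Two claims do not hold as stated. First, the assertion that ``after at most $n+m$ price updates either the graph becomes fully connected or a violating path becomes available'' is false: once a swap occurs, the identity of the least spender can change, the hierarchy is rebuilt from scratch, and the component $S$ may shrink again. Price rises and swaps interleave in a complicated way, and bounding the total number of price rises is the crux of the analysis (the paper devotes several lemmas to it). Second, your rational-price bound is too weak for the conclusion you draw: even granting that every price is a product of $\O(m)$ valuation ratios, the resulting denominators are bounded by $v_{\max}^{\O(m)}$, which is \emph{exponential} in $m$, not polynomial. So a potential that ranges over such values cannot by itself give a $\poly(m,n,v_{\max})$ bound on the number of steps.

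The paper sidesteps exactly this difficulty by a device you are missing: it first passes to an $\varepsilon$-rounded instance $\I'$ in which every nonzero valuation is an integer power of $(1+\varepsilon)$, and runs an \emph{approximate} version of the algorithm (using $\varepsilon$-path-violators and a $3\varepsilon$-$\pEFone$ stopping rule). In $\I'$ all prices remain powers of $(1+\varepsilon)$, so the number of price rises is at most $n\log_{1+\varepsilon}(m v_{\max})$, yielding $\poly(m,n,\tfrac{1}{\varepsilon},\ln v_{\max})$ time. The pseudopolynomial bound then comes from choosing $\varepsilon = \Theta(1/(m^3 v_{\max}^4))$ small enough that (i) $3\varepsilon$-$\EFone$ for $\I'$ implies $7\varepsilon$-$\EFone$ for $\I$, which by integrality of valuations collapses to exact $\EFone$, and (ii) $\fPO$ for $\I'$ implies $\PO$ for $\I$. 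Without this rounding-and-integrality trick, you have no handle on how many distinct price vectors the algorithm can visit.
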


\begin{remark}
\label{rem:EF1+PO_BoundedValuations_PolyTime}
Note that when all valuations are \emph{polynomially bounded} (i.e., there exists a polynomial $f(m,n)$ such that for all $i \in [n]$ and $j \in [m]$, $v_{i,j} \leq f(m,n)$), an $\EFone$ and $\PO$ allocation can be computed in \emph{polynomial} time. In particular, this is true when all valuations are bounded by a constant. As mentioned earlier in \Cref{sec:Introduction}, the problem of maximizing $\NSW$ remains \APXhard{} even for constant valuations \citep{Lee17APX}, and therefore our result circumvents the intractability associated with computing a Nash optimal allocation in order to achieve these two properties.
\end{remark}

\begin{remark}
\label{rem:EpsEFoneDeltaPOPolytime}
If we relax the fairness and efficiency requirements in \Cref{THM:EF1+PO_BOUNDEDVALUATIONS_PSEUDOPOLYTIME} to their approximate analogues, then our algorithm is guaranteed to run in polynomial time. Specifically, our algorithm can find an $\varepsilon$-$\EFone{}$ and $\varepsilon$-$\PO$ allocation in $\O\left( \poly(m,n, \frac{1}{\varepsilon}, \ln v_{\max} ) \right)$ time, where $v_{\max} = \max_{i,j} v_{i,j}$. 
(Refer to \Cref{lem:Approx_EF_Approx_PO_Polytime} in \Cref{subsec:ALG_analysis_generalcase}).
\end{remark}

The proof of \Cref{THM:EF1+PO_BOUNDEDVALUATIONS_PSEUDOPOLYTIME} is provided in \Cref{sec:MainResult_EFonePO_Pseudopoly}.

\paragraph{Existence Result:}

\begin{restatable}{theorem}{EFonefPOExistence}
 \label{thm:EF1+fPO_Existence}
 Given any fair division instance with additive valuations, there always exists an allocation that is envy-free up to one good $(\EFone{})$ and fractionally Pareto efficient $(\fPO{})$.
\end{restatable}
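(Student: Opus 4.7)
The plan is to upgrade the algorithmic guarantee of \Cref{THM:EF1+PO_BOUNDEDVALUATIONS_PSEUDOPOLYTIME} from $\EFone + \PO$ to $\EFone + \fPO$. As previewed in the techniques discussion, the algorithm of that theorem maintains, at every step, an integral allocation $\x$ together with a price vector $\p$ such that every good $j \in \x_i$ maximizes the ratio $v_{i,j}/p_j$ among all goods, i.e., $j$ lies in agent $i$'s maximum bang-per-buck ($\MBB$) set. I will show in two stages: first, that any such ``$\MBB$-supported'' allocation is automatically $\fPO$, not merely $\PO$; second, that invoking the algorithm on any integer-valued instance extracts a concrete $\EFone + \fPO$ allocation, thereby establishing existence.

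For the first stage, let $\alpha_i \coloneqq \max_j v_{i,j}/p_j$ denote agent $i$'s $\MBB$ ratio under $\p$. The $\MBB$-supported condition says $v_{i,j} \le \alpha_i p_j$ for all $j$, with equality when $j \in \x_i$, so $v_i(\x_i) = \alpha_i \sum_{j \in \x_i} p_j$. Suppose, for contradiction, that a fractional allocation $\y$ Pareto-dominates $\x$. Then for every agent $i$,
\[
 v_i(\y_i) \;=\; \sum_j y_{i,j}\, v_{i,j} \;\le\; \alpha_i \sum_j y_{i,j}\, p_j,
\]
so $\sum_j y_{i,j}\, p_j \ge v_i(\y_i)/\alpha_i \ge v_i(\x_i)/\alpha_i = \sum_{j \in \x_i} p_j$, with strict inequality for at least one~$i$. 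Summing over $i$ and using $\sum_i y_{i,j} \le 1$ yields $\sum_j p_j < \sum_i \sum_j y_{i,j}\, p_j \le \sum_j p_j$, the desired contradiction. Hence $\x$ is $\fPO$. (The edge cases $\alpha_i=0$ or $p_j=0$ are handled by the standard convention of assigning zero-valued goods to any agent who already has positive budget, which does not affect $\fPO$.)

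For the second stage, \Cref{THM:EF1+PO_BOUNDEDVALUATIONS_PSEUDOPOLYTIME} provides an algorithm that, on any integer-valued instance, halts with an $\EFone{}$ allocation via the price envy-freeness up to one good condition. Inspection of the algorithm reveals that it initializes from an $\MBB$-supported allocation (for instance, one maximizing social welfare at uniform prices) and that every local modification --- good transfers between agents and multiplicative price rescalings --- preserves $\MBB$-support by design. The main obstacle is precisely this invariance check: one must verify that each transfer only moves a good across an $\MBB$ edge of both the sender and receiver, and that each price update raises no ratio $v_{i,j}/p_j$ for a currently held good $j$ above $\alpha_i$. Once this invariant is in hand, the first stage upgrades the output's efficiency from $\PO$ to $\fPO$ for free, and the termination guarantee supplies $\EFone{}$, proving the theorem.
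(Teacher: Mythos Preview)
Your first stage is fine --- it is exactly the first welfare theorem (the paper's \Cref{prop:FirstWelfareTheorem}), and your proof of it is correct. The gap is in the second stage. The algorithm underlying \Cref{THM:EF1+PO_BOUNDEDVALUATIONS_PSEUDOPOLYTIME} does \emph{not} run on the given integer-valued instance $\I$; it runs on the $\varepsilon$-rounded instance $\I'$ (see the proof of \Cref{THM:EF1+PO_BOUNDEDVALUATIONS_PSEUDOPOLYTIME} and \Cref{lem:ALG_RunningTime_PowersOfr}, whose termination analysis requires power-of-$(1+\varepsilon)$ valuations). Consequently the $\MBB$ invariant you correctly identify holds with respect to the \emph{rounded} valuations $\V'$, and your first-stage argument yields only that the output $\x$ is $\fPO$ for $\I'$. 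The passage back to $\I$ is handled by \Cref{lem:Small_delta_PO}, but that lemma delivers only $\PO$ for $\I$, not $\fPO$; indeed, once you perturb the valuations the supporting prices for $\I'$ need not satisfy the $\MBB$ condition for $\I$, so there is no free upgrade. If instead you intend to run \Alg{} directly on $\I$, the termination guarantee evaporates, since the running-time bound genuinely uses that all bang-per-buck ratios are integral powers of $(1+\varepsilon)$.

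The paper closes this gap with a compactness argument rather than a single run of the algorithm. It takes a sequence $\varepsilon_z \to 0$, obtains for each $z$ an allocation $\x^z$ that is $\EFone$ for $\I$ and $\fPO$ for the $\varepsilon_z$-rounded instance $\I^z$, together with supporting prices $\p^z$ (via the second welfare theorem, \Cref{thm:SecondWelfareTheoremFisherMarkets}). Since there are only finitely many integral allocations, some fixed $\x$ recurs infinitely often; normalizing the corresponding prices into $[0,1]^m$ and applying Bolzano--Weierstrass yields a limit price vector $\p^*$. Passing to the limit in the $\MBB$ inequalities (using $v_{i,j}^{z_a} \to v_{i,j}$) shows that $\x$ is $\MBB$-supported by $\p^*$ \emph{in the original instance} $\I$, whence $\x$ is $\fPO$ for $\I$ by the first welfare theorem. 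This limiting step is the missing idea in your proposal.
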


\begin{remark}
\label{rem:EF+PO_NonDeterministicPolytime}
Consider the canonical binary relation $\mathcal{R}^{\EFone{}+\PO{}}$ associated with the problem of finding an $\EFone{}$ and $\PO$ allocation, defined as follows: For a fair division instance $\mathcal{I}$ and an allocation $\x$, the relation $\mathcal{R}^{\EFone{}+\PO{}}(\I,\x)$ holds if and only if $\x$ is an $\EFone{}$ and $\PO{}$ allocation of $\I$. It is relevant to note that under standard complexity theoretic assumptions, $\mathcal{R}^{\EFone{}+\PO{}}$ is not in $\TFNP$.\footnote{It is known that determining whether an arbitrary allocation is $\PO$ is $\coNPc$ \citep{KBK+09complexity}. This fact can be used to show that verifying whether a given allocation is $\EFone$ and $\PO$ is also $\coNPc$. Hence, the binary relation $\mathcal{R}^{\EFone{}+\PO{}}$ cannot be efficiently verified (i.e., it is not in $\FNP$), unless $\P$ = $\NP$.} By contrast, the binary relation $\mathcal{R}^{\EFone{}+\fPO{}}(\I,\x)$, which holds if and only if $\x$ is an $\EFone$ and $\fPO$ allocation for the instance $\I$, admits efficient verification.\footnote{$\EFone$ can be checked by considering $\O(n^2)$ inequalities, and $\fPO$ can be verified by a linear program (\Cref{subsec:Second_Welfare_Theorem_for_Fisher_Markets}).} Since \Cref{thm:EF1+fPO_Existence} shows that $\mathcal{R}^{\EFone{}+\fPO{}}$ is total, we get that the binary relation $\mathcal{R}^{\EFone{}+\fPO{}}$ is in $\TFNP$. Thus, there exists a nondeterministic polynomial time algorithm for finding an $\EFone{}$ and $\fPO$ (and hence $\EFone{}$ and $\PO$) allocation.	
\end{remark}

The proof of \Cref{thm:EF1+fPO_Existence} is provided in \Cref{sec:EF1+fPO_Existence}.

\paragraph{Approximating Nash Social Welfare:}

\begin{restatable}{theorem}{ApproxNash}
 \label{THM:APPROXNASH}
 For additive valuations, there exists a polynomial-time $1.45$-approximation algorithm for the Nash social welfare maximization problem.
\end{restatable}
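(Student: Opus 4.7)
The plan is to feed the algorithm from Theorem~\ref{THM:EF1+PO_BOUNDEDVALUATIONS_PSEUDOPOLYTIME} (more precisely, its polynomial-time approximate variant from Remark~\ref{rem:EpsEFoneDeltaPOPolytime}) into a Fisher-market argument that reduces the general Nash social welfare bound to the identical-valuations case. Running that algorithm with an inverse-polynomially small $\varepsilon$ produces, in polynomial time, an allocation $\x$ that is $\varepsilon$-$\EFone$ and $\varepsilon$-$\PO$, together with prices $\p = (p_1,\dots,p_m)$ that support $\x$ as a (near-)equilibrium of the induced integral Fisher market. The termination condition of the algorithm, \emph{price envy-freeness up to one good}, guarantees that the spending $\p(\x_i) := \sum_{j \in \x_i} p_j$ of each agent is, up to the removal of a single highest-priced good from the other agent's bundle, at least that of every other agent.

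Next I would relate $\NSW(\x)$ to the NSW of an arbitrary allocation $\y$. Because $\x$ is a market equilibrium with prices $\p$, each agent spends only on maximum-bang-per-buck goods; setting $\alpha_i := v_i(\x_i)/\p(\x_i)$, the MBB inequality $v_{i,j}/p_j \leq \alpha_i$ for every good $j$ gives $v_i(\y_i) \leq \alpha_i \, \p(\y_i)$ for every agent $i$ and every allocation $\y$. Taking the geometric mean over agents and using $\prod_i \alpha_i \, \p(\x_i) = \prod_i v_i(\x_i) = \NSW(\x)^n$ yields the key inequality
\begin{equation*}
\NSW(\y) \;\leq\; \left(\prod_i \alpha_i\right)^{1/n}\!\left(\prod_i \p(\y_i)\right)^{1/n} \;=\; \frac{\NSW(\x)}{\left(\prod_i \p(\x_i)\right)^{1/n}} \cdot \left(\prod_i \p(\y_i)\right)^{1/n}.
\end{equation*}
Hence it suffices to show that $\left(\prod_i \p(\x_i)\right)^{1/n}$ is within a factor of $1.45$ of $\max_{\y} \left(\prod_i \p(\y_i)\right)^{1/n}$.

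This last step is precisely where the identical-valuations case enters. Consider the virtual fair-division instance in which every agent has the common additive valuation $v^{\p}(j) = p_j$. Under this common valuation, the price envy-freeness up to one good property says exactly that $\x$ is $\EFone$ in the virtual instance (with $\varepsilon$-slack). Lemma~\ref{lemma:id-val} then says that any $\EFone$ allocation under identical valuations is a $1.45$-approximation to the optimum NSW, giving $\left(\prod_i \p(\x_i)\right)^{1/n} \geq \tfrac{1}{1.45} \max_{\y} \left(\prod_i \p(\y_i)\right)^{1/n}$. Combining with the chain above produces $\NSW(\x) \geq \tfrac{1}{1.45}\, \NSW(\y)$ for every $\y$, and running the polynomial-time algorithm of Remark~\ref{rem:EpsEFoneDeltaPOPolytime} yields the claimed complexity bound.

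The main obstacle I expect is managing the approximation slack inherited from the $\varepsilon$-$\EFone$ and $\varepsilon$-$\PO$ output: the MBB bound then weakens to $v_{i,j}/p_j \leq (1+\varepsilon)\alpha_i$, and the virtual-instance $\EFone$ guarantee is likewise perturbed, so the argument only yields a $(1 + O(\varepsilon))^{O(1)} \cdot c$ factor, where $c$ is the constant produced by Lemma~\ref{lemma:id-val} (which I expect to be essentially $e^{1/e} \approx 1.4446$, matching the integrality gap mentioned in the introduction). Choosing $\varepsilon$ inverse-polynomial in $m$ and $n$ absorbs these losses into the gap between $e^{1/e}$ and $1.45$ without breaking the polynomial runtime from Remark~\ref{rem:EpsEFoneDeltaPOPolytime}, closing the argument.
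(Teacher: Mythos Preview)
Your proposal is correct and is essentially the paper's own proof: both arguments use the MBB property of the algorithm's output $(\x,\p)$ to bound $v_i(\y_i)\le\alpha_i\,\p(\y_i)$ for any allocation $\y$, then pass to the identical-valuations instance with common valuation $p_j$ and invoke \Cref{lemma:id-val} on the (approximate) price-$\EFone$ allocation $\x$. The only cosmetic differences are that the paper phrases the reduction via an explicit scaled instance $\I^{\text{sc}}$ and uses a fixed constant $\varepsilon=\tfrac{1}{300}$ (so $3\varepsilon$-$\pEFone$ becomes $\tfrac{1}{100}$-$\EFone$ in the identical instance, and $e^{(1+0.01)/e}<1.45$), whereas you take $\varepsilon$ inverse-polynomial---a constant already suffices, which slightly simplifies the runtime accounting.
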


Our proof of \Cref{THM:APPROXNASH} draws on the following interesting connection between approximate envy-freeness and Nash social welfare:

\begin{restatable}{lemma}{EFoneNSWIdenticalVals}
\label{lemma:id-val}
Given a fair division instance with identical and additive valuations, any $\varepsilon\text{-}\EFone{}$ allocation provides a $e^{(1+\varepsilon)/e}$-approximation to Nash social welfare.
\end{restatable}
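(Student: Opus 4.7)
My plan is to bound $\NSW(\x^*)/\NSW(\x)$ by decomposing the $\varepsilon$-$\EFone{}$ allocation $\x$ into pieces whose structure constrains how the Nash-optimal allocation $\x^*$ can redistribute value. First, I would sort bundles so that $v(\x_1) \leq v(\x_2) \leq \cdots \leq v(\x_n)$ and set $a := v(\x_1)$. The $\varepsilon$-$\EFone{}$ condition applied from agent $1$ to each agent $k$ yields a good $g_k \in \x_k$ of value $M_k \geq v(\x_k) - (1+\varepsilon) a$; thus every bundle $\x_k$ decomposes into a top good of value $M_k$ and a residual of value at most $(1+\varepsilon) a$. I would then classify bundles as \emph{small} (value $\leq (1+\varepsilon) a$) or \emph{large} (value $> (1+\varepsilon) a$), with counts $s$ and $n - s$ respectively, and let $L \subseteq [n]$ index the large bundles.

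Next, I would use this decomposition to upper bound $\prod_i v(\x^*_i)$. For any alternative allocation $\y$, write $v(\y_i) = H_i + R_i$, where $H_i$ collects the large top goods $g_k$ (for $k \in L$) that are allocated to $\y_i$, and $R_i$ collects the remaining goods in $\y_i$. Two invariants follow: $\sum_i H_i = \sum_{k \in L} M_k$ (the large top goods are a fixed multiset to be partitioned), and $\sum_i R_i \leq n(1+\varepsilon) a$ (a total residual budget from the small bundles and the residuals of the large ones). The principal case distributes one large top good per agent, leaving $s$ agents with $H_i = 0$, and spends the residual budget to level those $s$ agents to a common value $c \leq (1+\varepsilon)na/s$; a Lagrangian argument then yields
\[
\prod_i v(\y_i) \;\leq\; \left(\tfrac{(1+\varepsilon)na}{s}\right)^{s} \prod_{k \in L} M_k .
\]
Combining this with the lower bound $\prod_i v(\x_i) \geq a^{s} \prod_{k \in L} M_k$ (each small bundle has value at least $a$, and each large bundle has value at least $M_k$) gives
\[
\frac{\NSW(\x^*)}{\NSW(\x)} \;\leq\; \left(\tfrac{(1+\varepsilon) n}{s}\right)^{s/n} .
\]
The right-hand side, as a function of $s \in [1, n]$, is maximized at $s = (1+\varepsilon) n / e$ with value $e^{(1+\varepsilon)/e}$, yielding the claim.

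The main obstacle is the complementary regime in which some $M_k$ for $k \in L$ is smaller than $(1+\varepsilon) n a / s$; there the Lagrangian optimum levels more than $s$ agents, and the explicit bound above no longer falls out directly. In this regime I would exploit that the large bundles of $\x$ are necessarily close in value to the small ones (since $v(\x_k) \leq M_k + (1+\varepsilon) a$ becomes a strong constraint when $M_k$ is small), so that either a reclassification that absorbs borderline large bundles into the ``small'' group, or a direct AM--GM bound using that all bundle values are then comparable, recovers the analysis. Checking that both regimes yield the same unified bound $e^{(1+\varepsilon)/e}$, and patching the two cases together cleanly, is where the most careful bookkeeping lies.
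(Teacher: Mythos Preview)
Your decomposition into top goods and residuals, the budget bound $\sum_i R_i \leq n(1+\varepsilon)a$, and the final optimization of $((1+\varepsilon)n/s)^{s/n}$ are all in the right spirit and parallel the paper's argument. The gap is exactly where you flag it, and it is not cosmetic. Your upper bound $\prod_i v(\y_i) \leq ((1+\varepsilon)na/s)^s \prod_{k\in L} M_k$ requires that water-filling the residual budget stops at the $s$ empty agents, which fails precisely when $\min_{k\in L} M_k < (1+\varepsilon)na/s$; in that regime the relaxed optimum is strictly larger than your bound. Neither of your proposed patches works cleanly. Reclassifying a borderline $k_0\in L$ as small increases $s$ to $s+1$ but also adds $M_{k_0}$ back into the residual budget, so the new ratio bound becomes $((n(1+\varepsilon)a + M_{k_0})/((s+1)a))^{(s+1)/n}$, and since $M_{k_0}$ can be as large as $(1+\varepsilon)na/s$ this need not stay below $e^{(1+\varepsilon)/e}$. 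The direct AM--GM route is worse: when all bundle values lie in $[a,\,2(1+\varepsilon)a]$ it only gives a ratio bound of order $2(1+\varepsilon)$, which already exceeds $e^{1/e}\approx 1.44$ at $\varepsilon=0$.

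The paper sidesteps the case split by choosing the threshold adaptively rather than fixing it at $(1+\varepsilon)a$. It compares $\x$ not to the integral optimum but to the NSW-maximizing \emph{partially-fractional} allocation $\omega$ in which only the $n-1$ top goods $g_1,\ldots,g_{n-1}$ are forced to be integral; a short structural lemma (\Cref{lem:opt-struct}) shows that in such an optimum every above-minimum agent holds exactly one top good and nothing else. Setting $\alpha := \min_k v(\omega_k)/\ell$ and $H := \{k : v(\x_k) > \alpha\ell\}$, one gets directly $v(\omega_k)\le v(\x_k)$ for $k\in H$ and $v(\omega_k)=\alpha\ell$ for $k\notin H$, with no case analysis. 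A budget count analogous to yours then yields $\NSW(\omega)/\NSW(\x) \le \alpha^{(1+\varepsilon)/\alpha + 1/n}$; the extra $1/n$ is removed by a scaling trick (analyzing $c$ disjoint copies of the instance and letting $c\to\infty$), and minimizing $\alpha^{(1+\varepsilon)/\alpha}$ over $\alpha$ gives $e^{(1+\varepsilon)/e}$. The key move you are missing is letting the comparison allocation $\omega$ determine the threshold, rather than committing to $(1+\varepsilon)a$ up front.
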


The proof of \Cref{THM:APPROXNASH} is provided in \Cref{sec:ApproxNash}.

\section{Our Algorithm}
\label{sec:Algorithm}


This section presents our algorithm. We start with the relevant preliminaries in \Cref{subsec:MarketTerminology} that provide the necessary definitions required for describing the algorithm. The pseudocode of the algorithm appears in \Cref{subsec:Description_of_ALG} along with a brief description.

\subsection{Market Terminology}
\label{subsec:MarketTerminology}

\paragraph{Fisher market} The Fisher market is a fundamental model in the economics of resource allocation \citep{BS00compute}. It captures the setting where a set of buyers enter the market with prespecified budgets, and use it to buy goods that provide maximum utility per unit of money spent. Specifically, a Fisher market consists of a set $[n] = \{1,2,\dots,n\}$ of $n$ \emph{buyers}, a set $[m] = \{1,2,\dots,m\}$ of $m$ divisible \emph{goods} (exactly one unit of each good is available), and a \emph{valuation profile} $\V = \{v_1,v_2,\dots,v_n\}$. Each buyer $i \in [n]$ has an initial \emph{endowment} (or budget) $e_i > 0$. The endowment holds no intrinsic value for a buyer and is only used for buying the goods. We call $\e = (e_1,\dots,e_n)$ the \emph{endowment vector}, and denote a \emph{market instance} by $\langle [n], [m], \V, \e \rangle$.

A \emph{market outcome} is given by the pair $\langle \x, \p \rangle$, where the \emph{allocation vector} $\x = (\x_1,\dots,\x_n)$ is a fractional allocation of the $m$ goods, and the \emph{price vector} $\p = (p_1,\dots,p_m)$ associates a price $p_j \geq 0$ with each good $j \in [m]$. The \emph{spending} of buyer $i$ under the market outcome $\langle \x, \p \rangle$ is given by $\p(\x_i) = \sum_{j = 1}^m x_{i,j} p_j$. The \emph{valuation} derived by the buyer $i$ under the market outcome $\langle \x, \p \rangle$ is given by $v_i(\x_i) =  \sum_{j=1}^m x_{i,j} v_{i,j}$.

Given a price vector $\p = (p_1,\dots,p_m)$, define the \emph{bang per buck} ratio of buyer $i$ for good $j$ as $\alpha_{i,j} \coloneqq v_{i,j}/p_j$, and its \emph{maximum bang per buck} ratio as $\alpha_i \coloneqq \max_j \alpha_{i,j}$.\footnote{If $v_{i,j} = 0$ and $p_j = 0$, then we define $\alpha_{i,j} = 0$.} Let $\MBB_i \coloneqq \{j \in [m] : v_{i,j}/p_j = \alpha_i\}$ denote the set of all goods that maximize the bang per buck ratio for buyer $i$ at the price vector $\p$. We call $\MBB_i$ the \emph{maximum bang per buck set} (or $\MBB{}$ set) of buyer $i$ at the price vector $\p$.

An outcome $\langle \x, \p \rangle$ is said to be a \emph{Fisher market equilibrium} if it satisfies the following conditions:
\begin{itemize}
	\item \emph{Market clearing}: Each good is either priced at zero or is completely allocated. That is, for each good $j \in [m]$, either $p_j = 0$ or $\sum_{i=1}^n x_{i,j} = 1$.
	\item \emph{Budget exhaustion}: Buyers spend their endowments completely, i.e., $\p(\x_i) = e_i$ for all $i \in [n]$.
	\item \emph{Maximum bang per buck allocation}: Each buyer's allocation is a subset of its $\MBB{}$ set. That is, for any buyer $i \in [n]$ and any good $j \in [m]$, $x_{i,j} > 0 \implies j \in \MBB_i$.
	Stated differently, each buyer only spends on its maximum bang per buck goods. Notice that a consequence of spending only on $\MBB{}$ goods is that each buyer maximizes its utility at the given prices $\p$ under the budget constaints.\footnote{One might expect utility maximization condition to be equivalent to the $\MBB{}$ allocation condition. \Cref{subsec:Fisher_counterexample} shows that this is not the case; indeed, $\MBB{}$ allocation is a strictly stronger requirement.}
\end{itemize}

We refer the reader to \Cref{sec:Market_Appendix} for additional market preliminaries.

\begin{restatable}[First Welfare Theorem; {\citealp[Chapter~16]{MWG+95microeconomic}}]{prop}{FirstWelfareTheorem}
 \label{prop:FirstWelfareTheorem}
 For a Fisher market with additive valuations, any equilibrium outcome is fractionally Pareto efficient ($\fPO$).
\end{restatable}

\paragraph{Price envy-freeness and its variants} Several of our results rely on constructing market outcomes with a property called \emph{price envy-freeness}---a notion we consider to be of independent interest. Specifically, let $\x$ be an allocation and let $\p$ be a price vector for a given Fisher market. We say that $\x$ is \emph{price envy-free} ($\pEF{}$) with respect to $\p$ if for every pair of buyers $i,k \in [n]$, we have $\p(\x_i) \geq \p(\x_k)$.\footnote{Equivalently, for every pair of buyers $i,k \in [n]$, we require $\p(\x_i) = \p(\x_k)$.} Similarly, $\x$ is said to be \emph{price envy-free up to one good} ($\pEFone{}$) with respect to $\p$ if for every pair of buyers $i,k \in [n]$, there exists a good $j \in \x_k$ such that $\p(\x_i) \geq \p \left( \x_k \setminus \{j\} \right)$. Finally, given any $\varepsilon > 0$, we say that an allocation $\x$ is \emph{$\varepsilon$-approximately price envy-free up to one good} ($\varepsilon$-$\pEFone{}$) with respect to $\p$ if for every pair of buyers $i,k \in [n]$, there exists a good $j \in \x_k$ such that $(1+\varepsilon) \p(\x_i) \geq \p (\x_k \setminus \{j\})$. 

\paragraph{$\MBB{}$ graph and alternating paths} The \emph{$\MBB{}$ graph} of a Fisher market instance with a price vector $\p$ is defined as a bipartite graph $G$ whose vertex set consists of the set of agents $[n]$ and the set of goods $[m]$, and there is an edge between an agent $i \in [n]$ and a good $j \in [m]$ if $j \in \MBB_i$ (called an \emph{$\MBB$ edge}). Given an allocation $\x$, we can augment the $\MBB{}$ graph by adding \emph{allocation edges}, i.e., an edge between an agent $i \in [n]$ and a good $j \in [m]$ such that $j \in \x_i$. For an augmented $\MBB$ graph, we define an \emph{alternating path} $P = (i,j_1,i_1,j_2,i_2,\dots,i_{\ell - 1},j_\ell,k)$ from agent $i$ to agent $k$ (and involving the agents $i_1,i_2,\dots,i_{\ell-1}$ and the goods $j_1,j_2,\dots,j_\ell$) as a series of alternating $\MBB{}$ and allocation edges such that $j_1 \in \MBB_i \cap \x_{i_1}$, $j_2 \in \MBB_{i_1} \cap \x_{i_2}$,$\dots$, $j_\ell \in \MBB_{i_{\ell - 1}} \cap \x_k$. If such a path exists, we say that the agent $k$ is \emph{reachable} from agent $i$ via an alternating path. Notice that no agent or good is allowed to repeat in an alternating path. We say that the path $P$ is of \emph{length} $2\ell$ since it consists of $\ell$ $\MBB{}$ edges and $\ell$ allocation edges.

\paragraph{Hierarchy structure} Let $G$ denote the augmented $\MBB{}$ graph for a Fisher market instance with the market outcome $(\x,\p)$. Fix a \emph{source} agent $i \in [n]$ in $G$. Define the \emph{level} of an agent $k \in [n]$ as half the length of the shortest alternating path from $i$ to $k$ (if one exists). The level of the source agent $i$ is defined to be zero. If there is no alternating path from $i$ to some agent $k$ in $G$ (i.e., if $k$ is not reachable from $i$), then the level of $k$ is set to be $n$. The \emph{hierarchy} structure $\H_i$ of agent $i$ is defined as a level-wise collection of all agents that are reachable from $i$, i.e., $\H_i = \{\H_i^{0},\H_i^{1},\H_i^{2},\dots,\}$, where $\H_i^{\ell}$ denotes the set of agents that are at level $\ell$ with respect to the agent $i$. \Cref{subsec:BuildHierarchy} provides a polynomial time subroutine called $\BuildHierarchy{}$ for constructing the hierarchy.

Given a hierarchy $\H_i$, we will overload the term \emph{alternating path} to refer to a series of alternating $\MBB{}$ and allocation edges \emph{connecting agents at a lower level to those at a higher level}. That is, a path $P = (i,j_1,i_1,j_2,i_2,\dots,i_{\ell - 1},j_\ell,k)$ involving agents from the hierarchy $\H_i$ is said to be an alternating path if (1) $j_1 \in \MBB_i \cap \x_{i_1}$, $j_2 \in \MBB_{i_1} \cap \x_{i_2}$,$\dots$, $j_\ell \in \MBB_{i_{\ell - 1}} \cap \x_k$, and (2) $\level(i) < \level(i_1) < \level(i_2) < \dots < \level(i_{\ell - 1}) < \level(k)$. In particular, an alternating path in a hierarchy cannot have edges between agents at the same level.

\paragraph{Violators and path-violators} Given a Fisher market instance and a market outcome $(\x,\p)$, an agent $i \in [n]$ with the smallest spending among all the agents is called the \emph{least spender}, i.e., $i \in \arg\min_{k \in [n]} \p(\x_k)$ (ties are broken according to a prespecified ordering over the agents). An agent $k \in [n]$ is said to be a \emph{violator} if for every good $j \in \x_k$, we have that $\p(\x_k \setminus \{j\}) > \p(\x_i)$, where $i$ is the least spender. Similarly, agent $k \in [n]$ is said to be an $\varepsilon$-\emph{violator} if for every good $j \in \x_k$, we have that $\p(\x_k \setminus \{j\}) > (1+\varepsilon) \p(\x_i)$. Notice that an agent can be a violator without being an $\varepsilon$-\emph{violator}. Also notice that if no agent is a violator ($\varepsilon$-violator), then the allocation $\x$ is $\pEFone{}$ ($\varepsilon$-$\pEFone{}$) with respect to $\p$.

A closely related notion is that of a \emph{path-violator}. Let $i$ denote the least spender, and let $\H_i$ denote the hierarchy of agent $i$. An agent $k \in \H_i$ is said to be a \emph{path-violator} with respect to the alternating path $P = (i,j_1,i_1,j_2,i_2,\dots,i_{\ell - 1},j_\ell,k)$ if $\p(\x_k \setminus \{j_\ell\}) > \p(\x_i)$. Observe that a path-violator (along a path $P$) need not be a violator, since there can be a good $j \in \x_k$ not on the path $P$ such that $\p(\x_k \setminus \{j\}) \leq \p(\x_i)$. Similarly, an agent $k \in \H_i$ is said to be an \emph{$\varepsilon$-path-violator} with respect to the alternating path $P = (i,j_1,i_1,j_2,i_2,\dots,i_{\ell - 1},j_\ell,k)$ if $\p(\x_k \setminus \{j_\ell\}) > (1+\varepsilon) \p(\x_i)$.

\subsection{Description of the Algorithm}
\label{subsec:Description_of_ALG}

Given any fair division instance $\I = \langle [n],[m],\V \rangle$ as input and a parameter $\varepsilon > 0$, our algorithm (Algorithm~\ref{alg:Main}), referred to as \Alg{} from here onwards, constructs a market equilibrium $(\x,\p)$ with respect to a Fisher market instance $\langle [n],[m],\V, \e \rangle$ (for a suitable endowment vector $\e$). The pair $(\x,\p)$ has the following two properties: (1) $\x$ is an integral allocation, and (2) $\x$ is $3\varepsilon$-$\pEFone{}$ with respect to $\p$. The second property allows us to show that the allocation $\x$ is $3\varepsilon$-\EFone{} for the corresponding fair division instance $\I$ (see \Cref{lem:eps-pEF1_implies_eps-EF1} below). Furthermore, by the first welfare theorem (\Cref{prop:FirstWelfareTheorem}), the allocation $\x$ is also guaranteed to be fractionally Pareto efficient (\fPO{}) for the Fisher market instance, and consequently for the fair division instance $\I$.

\begin{lemma}
 \label{lem:eps-pEF1_implies_eps-EF1}
 Let $\varepsilon \geq 0$, and let $\x$ and $\p$ be an allocation and a price vector respectively for a market instance $\langle [n], [m], \V, \e \rangle$ such that (1) $\x$ is $\varepsilon$-approximately price-envy-free up to one good $(\varepsilon\text{-}\pEFone{})$, and (2) $\x_i \subseteq \MBB_i$ for each buyer $i \in [n]$. Then, $\x$ is $\varepsilon$-approximately envy-free up to one good $(\varepsilon\text{-}\EFone{})$ for the associated fair division instance $\langle [n], [m], \V \rangle$.
\end{lemma}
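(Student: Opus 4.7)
The plan is to use the maximum-bang-per-buck hypothesis as a bridge between the price world and the valuation world. Fix an agent $i$. Since $\x_i \subseteq \MBB_i$, every good $j \in \x_i$ satisfies $v_{i,j} = \alpha_i \, p_j$, where $\alpha_i$ is agent $i$'s maximum bang-per-buck ratio at the prices $\p$. Summing over $j \in \x_i$ gives the identity $v_i(\x_i) = \alpha_i \, \p(\x_i)$. On the other hand, for any good $j$ whatsoever (whether or not it belongs to $\x_i$), the definition of maximum bang-per-buck ensures $v_{i,j} \leq \alpha_i \, p_j$, so for any other bundle $\x_k$ and any subset $S \subseteq \x_k$, we obtain the one-sided bound $v_i(S) \leq \alpha_i \, \p(S)$.

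With these two relations in hand, I would then invoke $\varepsilon$-$\pEFone$ directly. For any pair $i,k$, the $\varepsilon$-$\pEFone$ property guarantees a good $j \in \x_k$ for which $(1+\varepsilon)\,\p(\x_i) \geq \p(\x_k \setminus \{j\})$. Multiplying through by $\alpha_i \geq 0$ and applying the identity on the left-hand side together with the inequality on the right-hand side yields
\[
(1+\varepsilon)\, v_i(\x_i) \;=\; (1+\varepsilon)\, \alpha_i\, \p(\x_i) \;\geq\; \alpha_i\, \p(\x_k \setminus \{j\}) \;\geq\; v_i(\x_k \setminus \{j\}),
\]
which is exactly the $\varepsilon$-$\EFone$ condition with the same witnessing good $j$.

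There is essentially no obstacle, but one small subtlety worth flagging is the degenerate case where $p_j = 0$ for some $j \in \x_i$, which makes $\alpha_{i,j}$ formally undefined; the paper's footnote handles this by convention, and one should also separately check the boundary case $\alpha_i = 0$ (equivalently, $v_i(\x_i) = 0$), where the inequality $v_i(\x_k \setminus \{j\}) \leq \alpha_i\, \p(\x_k \setminus \{j\}) = 0$ still holds and $\EFone$ is trivial. Beyond that bookkeeping, the argument is a two-line chain of inequalities, and it makes precise the intuition that when each buyer only holds MBB goods, envy at prices translates faithfully into envy at valuations.
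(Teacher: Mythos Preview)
Your proposal is correct and follows essentially the same approach as the paper's own proof: both multiply the $\varepsilon$-$\pEFone$ inequality by $\alpha_i$, use $\x_i \subseteq \MBB_i$ to convert $\alpha_i\,\p(\x_i)$ into $v_i(\x_i)$, and use the definition of maximum bang-per-buck to bound $\alpha_i\,\p(\x_k \setminus \{j\}) \geq v_i(\x_k \setminus \{j\})$. Your additional remarks on the degenerate cases $p_j = 0$ and $\alpha_i = 0$ are sound and slightly more careful than the paper, but do not change the argument.
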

\begin{proof}
Since $\x$ is $\varepsilon$-$\pEFone{}$ with respect to the price vector $\p$, for any pair of buyers $i,k \in [n]$, there exists a good $j \in \x_k$ such that $(1 + \varepsilon) \p(\x_i) \geq \p\left( \x_k \setminus \{j\} \right)$. Multiplying both sides by the maximum bang per buck ratio $\alpha_i$ of agent $i$, we get
\begin{alignat*}{4}
	&& \alpha_i \cdot (1 + \varepsilon) \p(\x_i) && \; \geq \; & \alpha_i \cdot \p\left( \x_k \setminus \{j\} \right)&&\\
	\implies \; && (1 + \varepsilon) v_i(\x_i) && \; \geq \; & \alpha_i  \cdot  \p\left( \x_k \setminus \{j\} \right)&& \quad \text{(since $\x_i \subseteq \MBB_i$)}\\
	\implies \; && (1 + \varepsilon) v_i(\x_i) && \; \geq \; & v_i \left( \x_k \setminus \{j\} \right),&&
\end{alignat*}
which is the $\varepsilon$-\EFone{} guarantee for the allocation $\x$.
\end{proof}

In order to construct the desired Fisher market equilibrium, our algorithm starts with a welfare-maximizing allocation $\x$ and a price vector $\p$ such that $\x$ is \fPO{} and each agent gets a subset of its $\MBB$ goods (this is Phase 1 of \Alg{}). If the allocation $\x$ is $3\varepsilon$-$\pEFone{}$ with respect to $\p$, then the algorithm terminates with the output $(\x,\p)$. Otherwise, the algorithm proceeds to the next phase.

In Phase 2, the algorithm works with the hierarchy of the least spending agent, and performs a series of exchanges (or swaps) of goods between the agents in the hierarchy (without changing the prices). The swaps are aimed at ensuring that at the end of Phase 2, no agent in the hierarchy is $\varepsilon$-$\pEFone{}$ envied by the least spender. Furthermore, all exchanges in Phase 2 happen only along the $\MBB{}$ edges, thus maintaining at each stage the condition that $\x$ is an equilibrium allocation, and hence, \fPO{}.

If, at the end of Phase 2, the current allocation $\x$ is still not $3\varepsilon$-$\pEFone{}$ with respect to the price vector $\p$, the algorithm moves to Phase 3. This phase consists of uniformly raising the prices of the goods owned by the members of the hierarchy. The prices are raised until either the allocation $\x$ becomes $3\varepsilon$-$\pEFone{}$ with respect to the new price vector $\p$, or a new agent gets added to the hierarchy. In the latter case, the algorithm goes back to the start of Phase 2. 

It is relevant to note that establishing the time complexity of this algorithm is an intricate task; a priori, it is not even clear whether the algorithm terminates. The stated running time bound is in fact obtained via a number of involved arguments which, in particular, rely on analyzing the spending of the agents in different phases.

\renewcommand{\floatpagefraction}{.8}
\begin{algorithm}
 \DontPrintSemicolon
 \KwIn{An instance $\I = \langle [n],[m],\V \rangle$ such that valuations are power-of--$(1+\varepsilon)$.}
 \KwOut{An integral allocation $\x$ and a price vector $\p$.}
 \BlankLine
 \tcp{---------------------------------------------------------------------------------------------Phase 1: Initialization---------------------------------------------------------------------------------------------}
 \BlankLine
 \tikzmk{A}
 $\x \leftarrow $ Welfare-maximizing allocation (allocate each good $j$ to the agent $i \in \arg\max_{k \in [n]} v_{k,j}$)\;
 $\p \leftarrow $ For each good $j \in [m]$, set $p_j = v_{i,j}$ if $j \in \x_i$.\;
 \lIf{$(\x,\p)$ is $3\varepsilon$-$\pEFone{}$\label{algline:TerminatePhase1}}{\KwRet{$(\x,\p)$}}
 \nonl \tikzmk{B}
 \boxit{mygray}
 \BlankLine
 \tcp{------------------------------------------------------------Phase 2: Removing price-envy within hierarchy------------------------------------------------------------}
 \BlankLine
 \oldnl \tikzmk{A}
 $i \leftarrow $ least spender under $(\x,\p)$\tcc*{break ties lexicographically}\label{algline:Refresh_LeastSpender}
 $\H_i \leftarrow \BuildHierarchy(i,\x,\p)$\;
 $\ell \leftarrow 1$\;
 \While{$\H_i^{\ell}$ is non-empty and $(\x,\p)$ is not $3\varepsilon$-$\pEFone{}$}{
 		\uIf{$h \in \H_i^{\ell}$ is an $\varepsilon$-path-violator  along the alternating path $P = \{i,j_1,h_1,\dots,j_{\ell - 1},h_{\ell - 1},j,h\}$}{
			$\x_h \leftarrow \x_h \setminus \{j\}$ and $\x_{h_{\ell - 1}} \leftarrow \x_{h_{\ell - 1}} \cup \{j\}$\tcc*{Swap operation}
 			Repeat Phase 2 starting from Line~\ref{algline:Refresh_LeastSpender}
 		}\Else{$\ell \leftarrow \ell + 1$}
 }
 \uIf{$(\x,\p)$ is $3\varepsilon$-$\pEFone{}$}{\KwRet{$(\x,\p)$}}\Else{Move to Phase 3 starting from Line~\ref{algline:Start_Of_Phase_3}}
 \BlankLine
 \BlankLine
 \nonl \tikzmk{B}
 \boxit{mygray}
 \BlankLine
 \tcp{-------------------------------------------------------------------------------------------------Phase 3: Price-rise-------------------------------------------------------------------------------------------------}
 \oldnl \tikzmk{A} 
 $\alpha_1 \leftarrow \min\limits_{h \in \H_i, \, j \in [m] \setminus \x_{\H_i}} \frac{\alpha_h}{v_{h,j}/p_j}$, where $\alpha_h$ is the maximum bang per buck ratio for agent $h$, and $\x_{\H_i}$ is the set of goods currently owned by members of the hierarchy $\H_i$\label{algline:Start_Of_Phase_3}\;
 \BlankLine
 \tcc*{$\alpha_1$ corresponds to raising prices until a new agent gets added to the hierarchy}
 \BlankLine
 $\alpha_2 \leftarrow \frac{1}{\p(\x_i)} \max\limits_{k \in [n] \setminus \H_i} \min\limits_{j \in \x_k} \p(\x_k \setminus \{j\})$\;
  \BlankLine
  \tcc*{$\alpha_2$ corresponds to raising prices until the $\pEFone$ condition is satisfied}
 \BlankLine
 $\alpha_3 \leftarrow (1+\varepsilon)^s$, where $s$ is the smallest integral power of $(1+\varepsilon)$ such that $(1+\varepsilon)^s > \frac{ \p(\x_h)}{\p(\x_i)}$; here $i$ is the least spender and $h \in \arg\min_{k \in [n] \setminus \H_i} \p(\x_k)$.\label{algline:alpha_3}\;
  \BlankLine
  \tcc*{$\alpha_3$ corresponds to raising prices in multiples of $(1+\varepsilon)$ until the identity of the least spender changes}
 \BlankLine
 $\alpha \leftarrow \min(\alpha_1,\alpha_2,\alpha_3)$\label{algline:alpha_defn}\;
 \ForEach{good $j \in \x_{\H_i}$}{
 	$p_j \leftarrow \alpha \cdot p_j$
 }
 \uIf{$\alpha = \alpha_2$\label{algline:alpha_2}}{\KwRet{$(\x,\p)$}}\Else{Repeat Phase 2 starting from Line~\ref{algline:Refresh_LeastSpender}}
 \BlankLine
 \BlankLine
 \nonl \tikzmk{B}
 \boxit{mygray}
 \caption{\Alg}
 \label{alg:Main}
\end{algorithm}

\section{Proof of Theorem~\ref{THM:EF1+PO_BOUNDEDVALUATIONS_PSEUDOPOLYTIME}}
\label{sec:MainResult_EFonePO_Pseudopoly}

This section presents the analysis of our algorithm and a proof of \Cref{THM:EF1+PO_BOUNDEDVALUATIONS_PSEUDOPOLYTIME}. \Cref{subsec:ALG_analysis_powersofr} presents the analysis of our algorithm for valuations that satisfy the power-of-$(1+\varepsilon)$ property. \Cref{subsec:ALG_analysis_generalcase} extends this analysis to general valuations, culminating in the proof of \Cref{THM:EF1+PO_BOUNDEDVALUATIONS_PSEUDOPOLYTIME}.

\subsection[Analysis of \Alg{} when the Valuations are power-of-r]{Analysis of \Alg{} when the Valuations are power-of-$(1+\varepsilon)$}
\label{subsec:ALG_analysis_powersofr}

In this section, we will analyze \Alg{} under the assumption that 
 all valuations are \emph{power-of-$(1+\varepsilon)$}, i.e., there exists $\varepsilon > 0$ such that for each agent $i \in [n]$ and each good $j \in [m]$, we have $v_{i,j} \in \{0,(1+\varepsilon)^a\}$ for some natural number $a$ (possibly depending on $i$ and $j$). We will start by defining the notion of a \emph{time step} that will be useful in the subsequent analysis. 

\paragraph{Time steps and events}\sloppypar
The execution of \Alg{} can be described in terms of the following four \emph{events}: (1) \emph{Swap} operation in Phase 2, (2) \emph{Change} in the identity of least spender in Phase 2, (3) \emph{Price-rise} by a factor of $\alpha$ in Phase 3, 
 and (4) \emph{Termination step}. We use the term \emph{time step} (or simply a \emph{step}) to denote the indexing of any execution of \Alg{}, e.g., \Alg{} might perform a swap operation on the first and second time steps, followed by a price-rise in the third time step, and so on. We will use the phrase ``at time step $t$'' to denote the state of the algorithm \emph{before} the event at time step $t$ takes place. Notice that each event stated above runs in polynomial time, and therefore it suffices to analyze the running time of \Alg{} in terms of the total number of events (or time steps).
 
 We will now proceed to analyzing the correctness (\Cref{lem:ALG_outputs_eps-EF1_and_fPO}) and the running time (\Cref{lem:ALG_RunningTime_PowersOfr}) of $\Alg{}$ for power-of-$(1+\varepsilon)$ valuations.

\begin{restatable}[\textbf{Correctness of \Alg{} for power-of--$(1+\varepsilon)$ instance}]{lemma}{ALGCorrectness}
 \label{lem:ALG_outputs_eps-EF1_and_fPO}
 Given any power-of-$(1+\varepsilon)$ instance as input, the allocation returned by \Alg{} is $3\varepsilon$-approximately envy-free up to one good ($3\varepsilon$-$\EFone$) and fractionally Pareto efficient ($\fPO$).
\end{restatable}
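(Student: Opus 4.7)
The plan is to reduce the lemma to two market-theoretic facts already at our disposal: \Cref{lem:eps-pEF1_implies_eps-EF1}, which upgrades $\varepsilon$-price envy-freeness up to one good to $\varepsilon$-$\EFone{}$ once each agent's bundle sits inside her maximum bang-per-buck set; and the First Welfare Theorem (\Cref{prop:FirstWelfareTheorem}), which converts any Fisher-market equilibrium into a fractionally Pareto efficient allocation. Concretely, I would prove two invariants of \Alg{}. The \emph{equilibrium invariant} (I1) asserts that at every time step the pair $(\x,\p)$, together with the endowment vector $e_i := \p(\x_i)$, constitutes a Fisher-market equilibrium; in particular $\x$ is integral, the market clears, and $\x_i \subseteq \MBB_i$ for every $i \in [n]$. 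The \emph{termination invariant} (I2) asserts that whenever \Alg{} returns $(\x,\p)$, the allocation $\x$ is $3\varepsilon$-$\pEFone{}$ with respect to $\p$. Granting both, the $\MBB$-containment part of (I1) together with (I2) and \Cref{lem:eps-pEF1_implies_eps-EF1} yields $3\varepsilon$-$\EFone{}$, while the equilibrium part of (I1) together with \Cref{prop:FirstWelfareTheorem} yields $\fPO$, which is precisely the statement of the lemma.

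For (I1), I would induct on the time step. Phase 1 is immediate: each good $j$ is allocated to an agent $i \in \arg\max_k v_{k,j}$ and priced at $v_{i,j}$, so every owner has bang-per-buck exactly $1$ on every good she holds, matching the maximum possible ratio. A Phase 2 swap moves a good $j$ from $h$ to $h_{\ell-1}$ along the edge $(h_{\ell-1},j)$ which, by the definition of an alternating path, is an $\MBB$ edge for $h_{\ell-1}$; prices do not change, so no other agent's $\MBB$ relation is disturbed, and the market still clears. The only subtle case is the Phase 3 price rise: the threshold $\alpha_1$ is chosen precisely so that, after scaling the prices of the goods in $\x_{\H_i}$ by any factor $\alpha \le \alpha_1$, no hierarchy agent acquires a new $\MBB$ edge to a good outside $\x_{\H_i}$. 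Since every good a hierarchy agent owns lies in $\x_{\H_i}$ and therefore rescales by the same $\alpha$, her $\MBB$ relation on her own bundle is preserved. For a non-hierarchy agent $k$, the prices of the goods in $\x_k$ are unchanged while all other goods' prices only weakly increase, so $\x_k \subseteq \MBB_k$ is preserved as well.

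For (I2), \Alg{} halts at one of three points. Two of these (at the ends of Phase 1 and Phase 2) are explicitly guarded by a $3\varepsilon$-$\pEFone{}$ check, so only the $\alpha=\alpha_2$ branch of Phase 3 requires work. Let $i$ denote the old least spender. By the definition of $\alpha_2$, after rescaling we have $\p_{\text{new}}(\x_i) = \max_{k \notin \H_i}\min_{j \in \x_k}\p_{\text{new}}(\x_k \setminus \{j\})$, so $i$ is $\pEFone{}$ towards every non-hierarchy agent under $\p_{\text{new}}$. For each hierarchy agent $h$, the Phase 2 loop having exited without a swap guarantees $h$ is not an $\varepsilon$-path-violator; thus along her shortest alternating path ending at some good $j^\star \in \x_h \subseteq \x_{\H_i}$, we have $\p_{\text{old}}(\x_h \setminus \{j^\star\}) \le (1+\varepsilon)\p_{\text{old}}(\x_i)$, and uniform scaling by $\alpha_2$ preserves this inequality under $\p_{\text{new}}$. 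Hence $i$ is $\varepsilon$-$\pEFone{}$ towards every agent after the rise. To propagate this guarantee from $i$ to the new least spender $i^\star$, I would use the $\alpha_3$ guard on Line~\ref{algline:alpha_3}: since $\alpha \le \alpha_3$ and the minimality in the definition of $\alpha_3$ gives $\alpha_3/(1+\varepsilon) \le \p_{\text{old}}(\x_{i^\star})/\p_{\text{old}}(\x_i)$, we obtain $\p_{\text{new}}(\x_i) \le (1+\varepsilon)\p_{\text{new}}(\x_{i^\star})$. Combining with the $\varepsilon$-$\pEFone{}$ bound from $i$ yields $(1+\varepsilon)^2\p_{\text{new}}(\x_{i^\star}) \ge \p_{\text{new}}(\x_k \setminus \{j\})$ for every pair, and since every other agent has spending at least that of $i^\star$, the same bound propagates to every spender. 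Using $(1+\varepsilon)^2 \le 1+3\varepsilon$ (valid for $\varepsilon \le 1$) delivers the desired $3\varepsilon$-$\pEFone{}$.

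The main obstacle is the $\alpha=\alpha_2$ termination analysis above, where one must simultaneously track hierarchy agents (whose spendings are rescaled by $\alpha_2$), non-hierarchy agents (whose spendings are untouched), and a possible change in the identity of the least spender, which is exactly what the $\alpha_3$ guard was introduced to control. The two compounding factors of $(1+\varepsilon)$---one from the Phase 2 $\varepsilon$-path-violator tolerance, one from the $\alpha_3$ guard governing the least-spender change---are what force the tolerance in the lemma to be $3\varepsilon$ rather than $\varepsilon$; the remaining steps are routine verification.
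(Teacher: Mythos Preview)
Your proposal is correct and follows essentially the same approach as the paper: establish the $\MBB$-containment invariant across all three phases to invoke the First Welfare Theorem for $\fPO$, then analyze the $\alpha=\alpha_2$ termination by combining the Phase~2 exit condition (no $\varepsilon$-path-violators in the hierarchy, preserved under uniform scaling), the definition of $\alpha_2$ (handles non-hierarchy agents), and the $\alpha_3$ guard (controls the change of least spender), picking up the two $(1+\varepsilon)$ factors that together yield $3\varepsilon$-$\pEFone$. The only minor omissions are that you should explicitly note the trivial sub-case where $i$ remains the least spender after the rise (then $\varepsilon$-$\pEFone$ already suffices), and that at $\alpha=\alpha_1$ a new $\MBB$ edge \emph{is} created rather than none appearing---but neither affects the argument.
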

\begin{proof}
Let the output of \Alg{} be $(\x,\p)$. The fact that $\x$ is \fPO{} follows from the observation that at each step of the algorithm, the allocation of any agent is a subset of its $\MBB{}$ goods, i.e., at each time step, we have $\x_i \subseteq \MBB_i$ for each agent $i \in [n]$. This is certainly true at the end of Phase 1 by way of setting the prices. In Phase 2, each swap operation only happens along an alternating $\MBB{}$-allocation edge, which maintains the $\MBB{}$ condition. Phase 3 involves raising the prices of the goods owned by the members of the hierarchy $\H_i$ without changing the allocation. We will argue that for each agent $k \in [n]$, if $\x_k \subseteq \MBB_k$ before the price-rise, then the same continues to hold after the price-rise. Indeed, for any agent $k \notin \H_i$, we have $\x_k \cap \x_{\H_i} = \emptyset$. As a result, raising the prices of the goods in $\x_{\H_i}$ does not affect the bang per buck ratio of agent $k$ for the goods in $\x_k$ (and can only reduce its bang per buck ratio for the goods in $\x_{\H_i}$), thus maintaining the above condition. For any agent $k \in \H_i$, we have $\MBB_k \subseteq \x_{\H_i}$ by construction of the hierarchy. Raising the prices of the goods in $\x_{\H_i}$ therefore corresponds to lowering the $\MBB$ ratios for the agents in $\H_i$. By choice of $\alpha_1$, the price-rise stops as soon as a new $\MBB{}$-edge appears between an agent $k \in \H_i$ and a good $j \notin \x_{\H_i}$. This ensures that the \emph{new} maximum bang per buck ratio for any agent $k \in \H_i$ does not fall below its second highest bang per buck ratio prior to the price-rise, thus guaranteeing $\x_k \subseteq \MBB_k$. 

We can now define a Fisher market where each agent's endowment equals its spending under $\x$. Since $(\x,\p)$ is an equilibrium for this market, we have that $\x$ is \fPO{} (\Cref{prop:FirstWelfareTheorem}).

Next, we will argue that $\x$ is $3\varepsilon\text{-}\EFone{}$. Notice that \Alg{} terminates only if either the current outcome $(\x,\p)$ is $3\varepsilon\text{-}\pEFone{}$, or when $\alpha = \alpha_2$ (Line~\ref{algline:alpha_2}). In the first case, we get that $\x$ is $3\varepsilon\text{-}\EFone{}$ for the underlying fair division instance (\Cref{lem:eps-pEF1_implies_eps-EF1}).  Therefore, we only need to analyze the second case.

 Let us suppose that the termination happens at time step $t$, and let $\q$ be the price vector maintained by $\Alg$ just \emph{before} the price-rise step that lead to termination. After the time step $t$, $\Alg$ terminates with the allocation $\x$ and price vector $\p$. Since Phase 3 does not change the ownership of the goods, the allocation maintained by $\Alg$ just before termination is also $\x$. 
 
 Let $i$ be the least spender at time step $t$, and let $\H_i$ be the hierarchy of agent $i$. Since Phase 3 only affects the prices of the goods in $\x_{\H_i}$, we have that $\p{(\x_k)} = \q{(\x_k)}$ for all $k \in [n] \setminus \H_i$, and $\p{(\x_k)} = \alpha_2 \q{(\x_k)}$ for all $k \in \H_i$. Additionally, at the end of (any execution of) Phase 2, no agent in the least spender's hierarchy is an $\varepsilon$-path-violator (and hence is also not an $\varepsilon$-violator). Thus,
\begin{align}
\label{eqn:No_Eps_Price_Envy_Within_Hierarchy}
	& (1+\varepsilon) \q(\x_i) \geq \max_{k \in \H_i} \min_{j \in \x_k} \q(\x_k \setminus \{j\}) \nonumber \\
	\implies & (1+\varepsilon) \p(\x_i) \geq \max_{k \in \H_i} \min_{j \in \x_k} \p(\x_k \setminus \{j\}).
\end{align}

By definition of $\alpha_2$, we have the following condition for the agents outside the hierarchy:
\begin{align}
\label{eqn:No_Price_Envy_Outside_Hierarchy}
	\p(\x_i) = \alpha_2 \q(\x_i) \geq \max_{k \in [n] \setminus \H_i} \min_{j \in \x_k} \q(\x_k \setminus \{j\}) = \max_{k \in [n] \setminus \H_i} \min_{j \in \x_k} \p(\x_k \setminus \{j\}).
\end{align}
 
\Cref{eqn:No_Eps_Price_Envy_Within_Hierarchy,eqn:No_Price_Envy_Outside_Hierarchy} together imply that
\begin{align}
\label{eqn:No_Price_Envy_Overall}
	(1+\varepsilon) \p(\x_i) \geq \max_{k \in [n]} \min_{j \in \x_k} \p(\x_k \setminus \{j\}),
\end{align}
 which means that the outcome $(\x,\p)$ is $\varepsilon$-$\pEFone$ for agent $i$. If agent $i$ is a least spender under $(\x,\p)$ (i.e., $i$ continues to a least spender after the price rise), then $\x$ is $\varepsilon\text{-}\pEFone$ with respect to $\p$, and the lemma follows. Otherwise, an agent $h \in \arg\min_{k \in [n] \setminus \H_i} \q(\x_k)$ must become the least spender after the final price-rise step. In this case, we have that 
 \begin{align*}
 	 (1+\varepsilon) \q(\x_h) & \geq \alpha_3 \q(\x_i) \quad \text{(by definition of }\alpha_3) \\
 	\implies  (1+\varepsilon) \q(\x_h) & \geq \alpha_2 \q(\x_i) \quad (\alpha = \alpha_2 \implies \alpha_2 \leq \alpha_3) \\
 	\implies  (1+\varepsilon) \p(\x_h) & \geq \p(\x_i) \qquad (\text{since }\p(\x_i) = \alpha_2 \q(\x_i) \text{ and } \p(\x_h) = \q(\x_h))\\
 	\implies  (1+\varepsilon)^2 \p(\x_h) & \geq \min_{j \in \x_k} \p(\x_k \setminus \{j\}) \qquad \text{ for all } k \in [n],
 \end{align*}
 where the last inequality follows from \Cref{eqn:No_Price_Envy_Overall}. Since $0 < \varepsilon < 1$, we have that $(1+\varepsilon)^2 < 1+3\varepsilon$. Thus, the new least spender (agent $h$) satisfies $(1+3\varepsilon) \p(\x_h) \geq \min_{j\in\x_k}\p(\x_k\setminus\{j\})$ for all $ k \in [n]$. This implies that $(\x,\p)$ is $3\varepsilon\text{-}\pEFone$. The stated claim now follows from \Cref{lem:eps-pEF1_implies_eps-EF1}.
\end{proof}

\begin{restatable}[\textbf{Running time bound for power-of--$(1+\varepsilon)$ instance}]{lemma}{ALGRunningTime}
 \label{lem:ALG_RunningTime_PowersOfr}
 Given any power-of-$(1+\varepsilon)$ instance as input, \Alg{} terminates in time $\O \left( \poly(m,n,\frac{1}{\varepsilon},\ln v_{\max}) \right)$, where $v_{\max} = \max_{i,j} v_{i,j}$.
\end{restatable}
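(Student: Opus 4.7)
The strategy is to bound the total number of events executed by \Alg{}---namely swap operations, $\alpha_1$-price-rises, $\alpha_3$-price-rises, and least-spender changes---each by $\poly(m,n,1/\varepsilon,\ln v_{\max})$. Since the subroutine $\BuildHierarchy$, the computation of each of $\alpha_1,\alpha_2,\alpha_3$, and every allocation update are individually polynomial time, this event-count bound immediately yields the claimed running time.

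The cornerstone of the analysis is a structural invariant: throughout the execution of \Alg{}, every price $p_j$ remains a power of $(1+\varepsilon)$. This is true at the end of Phase~1 since $p_j = v_{i,j}$; it is preserved by swaps, which do not touch $\p$; and it is preserved by each Phase~3 price-rise, because $\alpha_3$ is explicitly a power of $(1+\varepsilon)$ while $\alpha_1 = \alpha_h \cdot p_j / v_{h,j}$ is a quotient of two bang-per-buck ratios of the form $v/p$, each of which is itself a power of $(1+\varepsilon)$. A first consequence is that every spending $\p(\x_k)$ lies in a discrete additive set of sums of at most $m$ powers of $(1+\varepsilon)$, and can therefore take only $\poly(m,n,1/\varepsilon,\ln v_{\max})$ distinct values once a polynomial bound on $p_{\max}$ is in place. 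A second consequence is that $M \coloneqq \min_k \p(\x_k)$ is \emph{monotonically non-decreasing} across all events: the $\varepsilon$-path-violator condition $\p(\x_h \setminus \{j\}) > (1+\varepsilon)\p(\x_i)$ ensures that after a swap the violator's spending is still strictly above the old least spender's, while in Phase~3 the least spender sits at level~$0$ of $\H_i$ and so its spending is multiplied by $\alpha \geq 1$.

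I would then bound the three event types in turn. For \emph{$\alpha_3$-price-rises}, the choice of $s$ in Line~\ref{algline:alpha_3} lets me show that $M$ grows by a factor of at least $(1+\varepsilon)$ (either the current least spender remains least and its spending scales by $\alpha_3 \geq 1+\varepsilon$, or the new least spender $h \notin \H_i$ already satisfied $\p(\x_h) \geq (1+\varepsilon)^{s-1}\p(\x_i)$); coupled with a polynomial upper bound on $M$, this allows at most $\O((\ln v_{\max} + \ln m)/\varepsilon)$ such events. For \emph{$\alpha_1$-price-rises}, each such event creates a fresh $\MBB$ edge from an agent in $\H_i$ to a good outside $\x_{\H_i}$, which can happen at most $nm$ times before the hierarchy is rebuilt or an $\alpha_3$- or $\alpha_2$-event intervenes; an $\alpha_2$-event terminates the algorithm outright. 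The main obstacle is the \emph{swap count}: between two consecutive price-rises I intend to use the lexicographically-ordered vector of spendings of the agents in the current hierarchy as a potential. A swap at level $\ell$ strictly reduces $\p(\x_h)$ by $p_j$ and increases $\p(\x_{h_{\ell-1}})$ by the same amount, pushing mass down the hierarchy, so after the subsequent rebuild either this lex-potential strictly decreases or the least-spender identity changes (an event already controlled by the monotonicity of $M$). Since spendings live in the discrete set from the invariant, the potential attains only polynomially many distinct values, bounding the swaps per Phase~2 invocation. The delicate piece---which I expect to require a careful case analysis of how $\H_i$ reshapes after a swap---is verifying that the lex-potential is genuinely monotone across hierarchy rebuilds; combining it with the bounds on Phase~3 events yields the claimed overall bound.
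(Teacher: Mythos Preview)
Your high-level decomposition---bound swaps, $\alpha_1$-rises, $\alpha_3$-rises separately, using the power-of-$(1+\varepsilon)$ price invariant and monotonicity of the least spender's spending---matches the paper's, but two of your three event bounds have genuine gaps.

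\textbf{Swap count.} Your lex-potential on spendings does not work as stated. The claim that ``spendings live in the discrete set from the invariant, so the potential attains only polynomially many distinct values'' is false: sums of at most $m$ powers of $(1+\varepsilon)$ can take exponentially many values (think of $\varepsilon=1$ and binary subset sums). More seriously, after every swap the hierarchy is rebuilt from scratch, so there is no stable ordering against which a lex-potential on spendings is monotone; you flag this as ``delicate'' but do not resolve it. The paper instead uses a purely combinatorial potential that is independent of prices: for each agent $h\neq i$, it tracks $m\cdot(n-\level(h))$ plus the number of \emph{critical} goods of $h$ (goods on a shortest alternating path from $i$ to $h$). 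A swap is shown to strictly decrease this potential by at least $1$, and its value is bounded by $\poly(m,n)$, giving a $\poly(m,n)$ bound on consecutive swaps with a fixed least spender---no dependence on $\varepsilon$ or $v_{\max}$ at all.

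\textbf{Upper bound on the least spender's spending.} You invoke ``a polynomial upper bound on $M$'' to cap the number of price-rises, but never justify it; prices are being raised, so it is not a priori clear why $M$ stays bounded. The paper proves $M \le m\,v_{\max}$ at every Phase~3 step via a non-trivial argument: the set $E_t$ of $3\varepsilon$-violators is shown to be monotone non-increasing across price-rises, any agent in $E_t$ never gains a good and never has its goods repriced (since $E_t\cap\H_{i_t}=\emptyset$), so its spending is capped by its Phase~1 spending $\le m\,v_{\max}$; and the least spender's spending is trivially below any violator's.

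\textbf{Minor issues.} Your $\alpha_3$ argument overclaims: when the least spender changes to $h\notin\H_i$, you only get $\p(\x_h)\ge(1+\varepsilon)^{s-1}\p(\x_i)$, which for $s=1$ is vacuous. The paper fixes this by pigeonhole over $n$ consecutive $\alpha_3$-events. Your $\alpha_1$ argument via ``at most $nm$ fresh $\MBB$ edges'' is not correct as written (edges are also destroyed, and the hierarchy is rebuilt after each Phase~3 step); the paper instead observes that when $\alpha=\alpha_1<\alpha_3$ the least spender's identity is preserved and its spending jumps by a positive integral power of $(1+\varepsilon)$, directly bounding the count by $\log_{1+\varepsilon}(m\,v_{\max})$.
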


The proof of \Cref{lem:ALG_RunningTime_PowersOfr} appears in \Cref{subsec:Proof_Of_ALG_RunningTime_PowersOfr}.

\subsection{Analysis of \Alg{} for General Valuations: Proof of Theorem~\ref{THM:EF1+PO_BOUNDEDVALUATIONS_PSEUDOPOLYTIME}}
\label{subsec:ALG_analysis_generalcase}

In this section, we will show that for any given fair division instance $\I = \langle [n], [m], \V \rangle$ with integral valuations, an allocation that is envy-free up to one good $(\EFone{})$ and Pareto efficient $(\PO{})$ can be found in pseudopolynomial time (\Cref{THM:EF1+PO_BOUNDEDVALUATIONS_PSEUDOPOLYTIME}). In particular, for bounded valuations, this result provides a polynomial time algorithm for computing an $\EFone{}$ and $\PO$ allocation (\Cref{rem:EF1+PO_BoundedValuations_PolyTime}).

We will prove \Cref{THM:EF1+PO_BOUNDEDVALUATIONS_PSEUDOPOLYTIME} by running \Alg{} on a \emph{$\varepsilon$-rounded version} $\I' = \langle [n], [m], \V' \rangle$ of the given instance $\I$ for some parameter $\varepsilon > 0$. The instance $\I'$ is a power-of-$(1+\varepsilon)$ instance\footnote{Recall that in a \emph{power-of-$(1+\varepsilon)$} instance, we have $v_{i,j} \in \{0,(1+\varepsilon)^a\}$ for some $a \in \N$ (possibly depending on $i$ and $j$).} constructed by \emph{rounding up} the valuations in $\I$ to the nearest integer power of $(1+\varepsilon)$. From \Cref{lem:ALG_outputs_eps-EF1_and_fPO}, we know that the allocation returned by $\Alg{}$ is $3\varepsilon$-$\EFone{}$ and $\fPO$ with respect to $\I'$ (for any given $\varepsilon > 0$). We will show that for an appropriate choice of $\varepsilon$, the same allocation turns out to be $\EFone{}$ and $\PO{}$ with respect to the original instance $\I$. In addition, the running time bound in \Cref{lem:ALG_RunningTime_PowersOfr} instantiated for this choice of $\varepsilon$ will show that $\Alg$ runs in pseudopolynomial time.

More formally, the \emph{$\varepsilon$-rounded version} $\I' = \langle [n], [m], \V' \rangle$ of the given instance $\I$ is constructed as follows: For each agent $i \in [n]$ and each good $j \in [m]$, the valuation $v'_{i,j}$ is given by
$$ v'_{i,j}:=
\begin{cases} 
(1+\varepsilon)^{\lceil \log_{1+\varepsilon}v_{i,j} \rceil} & \text{ if } v_{i,j}> 0, \\
0 & \text{ if } v_{i,j}= 0. \\
\end{cases}$$

Notice that $v_{i,j} \leq v'_{i,j} \leq (1+\varepsilon)v_{i,j}$ for each agent $i$ and each good $j$.

\Cref{lem:Small_delta_PO} establishes that for an appropriate choice of $\varepsilon$, an allocation that is $\fPO$ for the $\varepsilon$-rounded instance $\I'$ is $\PO$ with respect to the original instance $\I$. The proof of \Cref{lem:Small_delta_PO} appears in \Cref{subsec:OmittedProofs_ALG_analysis_generalcase}.

\begin{restatable}{lemma}{SmallDeltaPO}
\label{lem:Small_delta_PO}
Let $\I = \langle [n], [m], \V \rangle$ be a fair division instance, and let $\varepsilon \leq \frac{1}{6 m^3 v_{\max}^4}$. Then, an allocation $\x$ that is $\fPO$ for $\I'$ (the $\varepsilon$-rounded version of $\I$) is $\PO$ for the original instance $\I$.
\end{restatable}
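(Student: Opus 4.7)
The plan is to prove the contrapositive: assume $\x$ is not $\PO$ in $\I$ and show that $\x$ fails $\fPO$ in the rounded instance $\I'$. By assumption, there is an integral allocation $\y$ that Pareto-dominates $\x$ in $\I$, so there is a strict agent $i$ with $v_i(\y_i) \geq v_i(\x_i) + 1$ (using integrality of the valuations) while every other agent $k$ satisfies $v_k(\y_k) \geq v_k(\x_k)$. My goal is to construct a fractional allocation $\z$ that Pareto-dominates $\x$ under $\V'$, which contradicts the $\fPO$ hypothesis on $\x$ in $\I'$.

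First, I would translate the $\I$-domination into a ``near-domination'' in $\I'$ using $v_{k,j} \leq v'_{k,j} \leq (1+\varepsilon)v_{k,j}$. For the strict agent, $v'_i(\y_i) - v'_i(\x_i) \geq v_i(\y_i) - (1+\varepsilon)v_i(\x_i) \geq 1 - \varepsilon m v_{\max}$, a strictly positive \emph{surplus} (for the given $\varepsilon$). For any other agent $k$, the rounding can induce a \emph{deficit} $v'_k(\x_k) - v'_k(\y_k) \leq \varepsilon v_k(\x_k) \leq \varepsilon m v_{\max}$. If no deficit occurs, $\y$ itself Pareto-dominates $\x$ in $\I'$ and we are done; otherwise I would modify $\y$ to cover the deficits while preserving a strict gain at $i$.

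The modification uses agent $i$'s surplus as ``currency''. For each deficient agent $k$, I would donate a small fraction $\gamma_k$ of some good $g \in \y_i$ to $k$, chosen so that $\gamma_k v'_{k,g}$ exceeds $k$'s deficit. When $v'_{k,g} \geq 1$, this requires $\gamma_k = \Theta(\varepsilon m v_{\max})$; agent $i$'s resulting loss of value is at most $\gamma_k v'_{i,g} = O(\varepsilon m v_{\max}^2)$ per deficient agent. Summed over the at most $n \leq m$ deficient agents, the total cost to $i$ is $O(\varepsilon m^2 v_{\max}^2)$, and the condition $\varepsilon \leq \tfrac{1}{6 m^3 v_{\max}^4}$ ensures this is strictly less than the surplus $1 - \varepsilon m v_{\max}$. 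Hence the resulting fractional $\z$ satisfies $v'_k(\z_k) \geq v'_k(\x_k)$ for every $k$ with strict inequality at $i$, contradicting that $\x$ is $\fPO$ in $\I'$.

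The main obstacle is the case where the deficient agent $k$ values no good in $\y_i$ (i.e., $v'_{k,g} = 0$ for every $g \in \y_i$), so direct donation from $i$ is impossible. Then the transfer must be routed through a multi-hop chain: take a good valued by $k$ from some intermediate agent $k'$, and compensate $k'$ via a good from $\y_i$ that $k'$ values (iterating if necessary). Each hop can inflate the exchange rate by a factor of $v_{\max}$, which is why the bound on $\varepsilon$ includes extra factors of $v_{\max}$ compared to the direct-transfer estimate. Controlling the accumulated exchange rate and the chain length, then combining with the per-agent deficit $\varepsilon m v_{\max}$ and the count of deficient agents, yields the stated threshold $\varepsilon \leq \tfrac{1}{6 m^3 v_{\max}^4}$.
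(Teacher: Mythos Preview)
Your approach is genuinely different from the paper's, and the multi-hop portion contains a real gap.

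The paper does not construct a dominating fractional allocation in $\I'$ at all. Instead it uses the supporting price vector $\p$ for $\x$ (in context, the one produced by $\Alg$, for which the paper separately proves $\p([m])\le m^3 v_{\max}^3$). From $\x_k\subseteq \MBB'_k$ and $v_{k,j}\le v'_{k,j}\le(1+\varepsilon)v_{k,j}$ one gets $v_k(\x_k)/\p(\x_k)\ge \alpha_k/(1+\varepsilon)$, where $\alpha_k=\max_j v_{k,j}/p_j$. Then $v_k(\y_k)\ge v_k(\x_k)$ forces $\p(\y_k)\ge \p(\x_k)/(1+\varepsilon)$, with an extra additive $1/\alpha_i$ for the strict agent $i$. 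Summing over agents and using that both $\x$ and $\y$ partition $[m]$ yields $\varepsilon\big(\p([m])\,\alpha_i-1\big)\ge 1$, which combined with $\alpha_i\le v_{\max}$ and $\p([m])\le m^3 v_{\max}^3$ contradicts $\varepsilon\le \tfrac{1}{6m^3 v_{\max}^4}$. The price vector plays the role of a universal exchange rate, so no routing is needed.

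Your direct-donation case is fine, but the multi-hop case is only sketched, and the sketch does not establish the claim. Three concrete gaps:
\begin{enumerate}
\item \emph{Existence of chains.} You need that every deficient agent $k$ is reachable from $i$ along edges of the form ``$b$ values some good in $\y_a$''. Nothing in the hypotheses guarantees this directed graph is connected from $i$; it is entirely possible that no agent other than $i$ values any good in $\y_i$, in which case your transfer has no first hop.
\item \emph{Chain length / exchange rate.} You assert that each hop inflates the exchange rate by at most $v_{\max}$ and that this ``yields the stated threshold,'' but the threshold $\varepsilon\le \tfrac{1}{6m^3 v_{\max}^4}$ accommodates only a constant number of $v_{\max}$ factors. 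A chain of length $\ell$ could cost $v_{\max}^\ell$ at $i$; without a bound $\ell=O(1)$, the arithmetic does not close. You give no argument bounding $\ell$.
\item \emph{Feasibility.} You do not check that the cumulative fraction removed from any single good (over all chains and all deficient agents) stays below $1$, so the resulting $\z$ may fail to be a valid fractional allocation.
\end{enumerate}
The paper's price argument avoids all three issues simultaneously; if you want to salvage the constructive route, you would need substantially more structure (for instance, a careful flow argument on a bipartite exchange graph) than what you have written.
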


\Cref{lem:Small_epsilon_int} establishes that for a small enough $\delta$, a $\delta$-$\EFone$ allocation is in fact $\EFone$.

\begin{restatable}{lemma}{ExactEFoneSmallEpsilon} 
\label{lem:Small_epsilon_int}
Let $\I = \langle [n], [m], \V \rangle$ be a fair division instance, and let $0 < \delta \leq \frac{1}{2 m v_{\max}}$. Then, an allocation $\x$ is $\delta\text{-}\EFone{}$ for $\I$ if and only if it is $\EFone{}$ for $\I$. 
\end{restatable}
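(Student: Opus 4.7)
The plan is to prove the two directions separately, with the forward direction being immediate and the reverse direction relying on the integrality of the valuations.

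For the forward direction, if $\x$ is $\EFone$, then for every pair $i,k \in [n]$ there exists $j \in \x_k$ with $v_i(\x_i) \geq v_i(\x_k \setminus \{j\})$. Multiplying the left-hand side by $(1+\delta) \geq 1$ (since $\delta > 0$) only makes the inequality looser, so $(1+\delta) v_i(\x_i) \geq v_i(\x_k \setminus \{j\})$, giving $\delta$-$\EFone$.

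For the reverse direction, I would argue by contradiction. Suppose $\x$ is $\delta$-$\EFone$ but not $\EFone$. Then there exists a pair $i,k \in [n]$ such that for every $j \in \x_k$, $v_i(\x_i) < v_i(\x_k \setminus \{j\})$, while $\delta$-$\EFone$ guarantees some particular $j^* \in \x_k$ with $(1+\delta) v_i(\x_i) \geq v_i(\x_k \setminus \{j^*\})$. The crucial observation is that since all $v_{i,j}$ are nonnegative integers, the quantities $v_i(\x_i)$ and $v_i(\x_k \setminus \{j^*\})$ are integers, so the strict inequality $v_i(\x_i) < v_i(\x_k \setminus \{j^*\})$ upgrades to $v_i(\x_k \setminus \{j^*\}) \geq v_i(\x_i) + 1$. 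Combining with the $\delta$-$\EFone$ bound yields $\delta \cdot v_i(\x_i) \geq 1$, i.e., $v_i(\x_i) \geq 1/\delta \geq 2 m v_{\max}$.

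The contradiction then comes from the trivial upper bound $v_i(\x_i) \leq \sum_{j \in \x_i} v_{i,j} \leq m \cdot v_{\max} < 2 m v_{\max}$, so no such pair $(i,k)$ can exist and $\x$ must be $\EFone$. There is no real obstacle here; the proof is a short integrality-gap argument, and the factor of $2$ in the hypothesis $\delta \leq 1/(2 m v_{\max})$ is simply a convenient safety margin (any $\delta < 1/(m v_{\max})$ would work). The only care needed is to keep track of the direction of the strict/non-strict inequalities when invoking integrality.
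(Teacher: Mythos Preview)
Your proof is correct and follows essentially the same approach as the paper: both arguments hinge on the integrality of valuations together with the bound $\delta \cdot v_i(\x_i) \leq \delta \cdot m v_{\max} \leq \tfrac{1}{2} < 1$, which forces the integer gap $v_i(\x_k \setminus \{j\}) - v_i(\x_i)$ to be at most zero. The only cosmetic difference is that the paper argues directly (bounding the difference by $\tfrac{1}{2}$ and invoking integrality) while you phrase it as a contradiction.
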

\begin{proof}
If $\x$ is $\delta\text{-}\EFone{}$, we have that for every pair of agents $i,k\in [n]$, there exists a good $j \in \x_k$ such that $(1+\delta) v_i(\x_i) \geq v_i(\x_k\setminus\{j\})$. The bound on $\delta$ implies that $v_i(\x_k\setminus\{j\}) - v_i(\x_i) \leq \frac{1}{2}$. Integrality of valuations gives $v_i(\x_k\setminus\{j\}) - v_i(\x_i) \leq 0$, as desired.  
\end{proof}

\EFonePOBoundedValuationsPseudoPolyTime*
\begin{proof}
Let $\I' = \langle [n], [m], \V' \rangle$ be the $\varepsilon$-rounded version of $\I$ with $\varepsilon = \frac{1}{14 m^3 v_{\max}^4}$. From \Cref{lem:ALG_outputs_eps-EF1_and_fPO,lem:ALG_RunningTime_PowersOfr}, we know that an allocation $\x$ that is $3\varepsilon\text{-}\EFone{}$ and $\fPO{}$ for $\I'$ can be found in $\O\left( \poly( m,n,\frac{1}{\varepsilon}, \ln v_{\max} ) \right)$ time. Under the stated choice of $\varepsilon$, \Cref{lem:Small_delta_PO} implies that $\x$ must be Pareto efficient ($\PO$) for $\I$. Therefore, we only need to show that $\x$ is $\EFone{}$ for the instance $\I$.

Since $\x$ is $3\varepsilon\text{-}\EFone{}$ for $\I'$, we have that for every pair of agents $i,k \in [n]$, there exists a good $j \in \x_k$ such that
$(1+3\varepsilon) v'_i(\x_i) \geq v'_i(\x_k\setminus\{j\})$. Furthermore, since $\I'$ is a $\varepsilon$-rounded version of $\I$, we have that $v'_{i,j}\leq (1+\varepsilon) v_{i,j}$ for each good $j\in[m]$. Hence, $(1+\varepsilon)(1+3\varepsilon) v_i(\x_i) \geq v'_i(\x_k\setminus\{j\})$. Finally, since the valuations in $\I'$ are a rounded-up version of those in $\I$, we have that $v_{i,j}\leq v'_{i,j}$ for each good $j\in[m]$, and thus $(1+\varepsilon)(1+3\varepsilon) v_i(\x_i) \geq v_i(\x_k\setminus\{j\})$. For $\varepsilon \leq 1$, this expression simplifies to $(1+7\varepsilon) v_i(\x_i) \geq v_i(\x_k\setminus\{j\})$, which means that $\x$ is $7\varepsilon$-$\EFone{}$ for the instance $\I$. Instantiating \Cref{lem:Small_epsilon_int} for $\delta = 7\varepsilon$ gives that $\x$ is $\EFone{}$ for $\I$. 
\end{proof}

\begin{restatable}{lemma}{ApproxEFApproxPOPolytime}
\label{lem:Approx_EF_Approx_PO_Polytime}
Given the $\varepsilon$-rounded version $\I'$ (of the instance $\I$) as input, \Alg{} finds a $7\varepsilon$-$\EFone{}$ and $\varepsilon$-$\PO$ allocation for $\I$ in $\O(\poly(m,n,\frac{1}{\varepsilon}, \ln v_{\max}))$ time.
\end{restatable}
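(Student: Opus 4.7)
The plan is to obtain all three claims by invoking the power-of-$(1+\varepsilon)$ analysis of Lemmas~\ref{lem:ALG_outputs_eps-EF1_and_fPO} and~\ref{lem:ALG_RunningTime_PowersOfr} on the rounded instance $\I'$, and then translating the resulting guarantees back to the original instance $\I$ using the fact that for every agent $i \in [n]$ and every good $j \in [m]$, we have $v_{i,j} \le v'_{i,j} \le (1+\varepsilon) v_{i,j}$, which implies $v_i(S) \le v'_i(S) \le (1+\varepsilon) v_i(S)$ for every bundle $S \subseteq [m]$.

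First, I would invoke \Cref{lem:ALG_outputs_eps-EF1_and_fPO} on $\I'$ to obtain an allocation $\x$ that is $3\varepsilon$-$\EFone{}$ and $\fPO{}$ with respect to $\I'$, and \Cref{lem:ALG_RunningTime_PowersOfr} to certify that \Alg{} produces $\x$ within $\O\!\left(\poly\!\left(m,n,\tfrac{1}{\varepsilon},\ln v'_{\max}\right)\right)$ time. Since $v'_{\max} \le (1+\varepsilon) v_{\max}$, we have $\ln v'_{\max} = \O(\ln v_{\max})$ (for, say, $\varepsilon \le 1$), so the running time bound is of the desired form.

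Next, to obtain the $7\varepsilon$-$\EFone{}$ guarantee for $\I$, I would mimic the argument used in the proof of \Cref{THM:EF1+PO_BOUNDEDVALUATIONS_PSEUDOPOLYTIME}. For any pair of agents $i,k \in [n]$, the $3\varepsilon$-$\EFone{}$ property of $\x$ in $\I'$ gives a good $j \in \x_k$ with $(1+3\varepsilon) v'_i(\x_i) \ge v'_i(\x_k \setminus \{j\})$. Chaining $v'_i(\x_i) \le (1+\varepsilon) v_i(\x_i)$ on the left and $v_i(\x_k \setminus \{j\}) \le v'_i(\x_k \setminus \{j\})$ on the right yields $(1+\varepsilon)(1+3\varepsilon)\,v_i(\x_i) \ge v_i(\x_k \setminus \{j\})$, and since $(1+\varepsilon)(1+3\varepsilon) = 1 + 4\varepsilon + 3\varepsilon^2 \le 1 + 7\varepsilon$ for $\varepsilon \le 1$, we conclude that $\x$ is $7\varepsilon$-$\EFone{}$ for $\I$.

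Finally, for the $\varepsilon$-$\PO{}$ claim I would argue by contradiction. Suppose some allocation $\y$ $\varepsilon$-Pareto dominates $\x$ in $\I$, i.e., $v_k(\y_k) \ge (1+\varepsilon) v_k(\x_k)$ for all $k \in [n]$ and $v_i(\y_i) > (1+\varepsilon) v_i(\x_i)$ for some $i$. Using $v'_k(\y_k) \ge v_k(\y_k)$ and $v'_k(\x_k) \le (1+\varepsilon) v_k(\x_k)$, the weak inequalities transfer to $v'_k(\y_k) \ge v'_k(\x_k)$ for all $k$, and similarly the strict inequality gives $v'_i(\y_i) > v'_i(\x_i)$. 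Hence $\y$ would Pareto dominate $\x$ in $\I'$, contradicting the $\fPO{}$ property of $\x$ in $\I'$. So no such $\y$ exists, and $\x$ is $\varepsilon$-$\PO{}$ for $\I$, completing the proof. The only real subtlety is being careful with the direction of the inequalities when moving valuations between $\I$ and $\I'$; there is no deeper combinatorial obstacle, since all the heavy lifting has already been done in Lemmas~\ref{lem:ALG_outputs_eps-EF1_and_fPO} and~\ref{lem:ALG_RunningTime_PowersOfr}.
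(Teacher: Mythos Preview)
Your proposal is correct and follows essentially the same approach as the paper: invoke \Cref{lem:ALG_outputs_eps-EF1_and_fPO} and \Cref{lem:ALG_RunningTime_PowersOfr} on $\I'$, reuse the $7\varepsilon$-$\EFone{}$ derivation from the proof of \Cref{THM:EF1+PO_BOUNDEDVALUATIONS_PSEUDOPOLYTIME}, and prove $\varepsilon$-$\PO{}$ by the exact same contradiction argument (translating $v'_k(\y_k) \ge v_k(\y_k)$ and $(1+\varepsilon)v_k(\x_k) \ge v'_k(\x_k)$ back to $\I'$). Your added remark that $\ln v'_{\max} = \O(\ln v_{\max})$ is a harmless extra detail the paper leaves implicit.
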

\begin{proof}
Let $\x$ be the allocation returned by \Alg{}. From \Cref{lem:ALG_outputs_eps-EF1_and_fPO}, we know that $\x$ is $3\varepsilon$-$\EFone$ and $\fPO$ for the $\varepsilon$-rounded instance $\I'$. By an argument similar to the one in the proof of \Cref{THM:EF1+PO_BOUNDEDVALUATIONS_PSEUDOPOLYTIME}, this implies that $\x$ is $7\varepsilon$-$\EFone$ for the original instance $\I$. The running time guarantee follows from \Cref{lem:ALG_RunningTime_PowersOfr}. Hence, we only need to show that $\x$ is $\varepsilon$-$\PO$.

Suppose, for contradiction, that $\x$ is $\varepsilon$-Pareto dominated by an allocation $\y$. Thus, $v_k(\y_k) \geq (1+\varepsilon) v_k(\x_k)$ for every agent $k \in [n]$ and $v_i(\y_i) > (1+\varepsilon) v_i(\x_i)$ for some agent $i \in [n]$. By construction of the $\varepsilon$-rounded instance $\I'$, we know that $v_{k,j} \leq v'_{k,j} \leq (1+\varepsilon)v_{k,j}$ for each agent $k$ and each good $j$. Using the inequality $v'_{k,j} \leq (1+\varepsilon)v_{k,j}$ in a good-by-good manner for the bundle $\x_k$, along with the additivity assumption of valuations in the instance $\I$, we get that $(1+\varepsilon) v_k(\x_k) \geq v'_k(\x_k)$. By a similar application of the inequality $v_{k,j} \leq v'_{k,j}$ for the bundle $\y_k$, we get $v'_k(\y_k) \geq v_k(\y_k)$. Combining these relations gives $v'_k(\y_k) \geq v'_k(\x_k)$ for every agent $k \in [n]$ and $v'_i(\y_i) > v'_i(\x_i)$ for some agent $i \in [n]$. However, this means that the allocation $\y$ Pareto dominates the allocation $\x$ in the instance $\I'$, which is a contradiction since $\x$ is $\fPO$ for $\I'$.
\end{proof}

\section{Existence Result: Proof of Theorem~\ref{thm:EF1+fPO_Existence}}
\label{sec:EF1+fPO_Existence}

\EFonefPOExistence*
\begin{proof}

Given any fair division instance $\I = \langle [n], [m], \V \rangle$, define $\varepsilon_z \coloneqq \frac{1}{14 z m^3 v_{\max}^4}$ for any natural number $ z \in \N$. Write $\I^{z} = \langle [n], [m], \V^{z} \rangle$ to denote the $\varepsilon_z$-rounded version of $\I$, with $\V^{z}=\{v_{1}^z, v_{2}^z, \ldots,v_n^z \}$ being the set of rounded valuations. By instantiating \Cref{lem:ALG_outputs_eps-EF1_and_fPO} with $\varepsilon=\varepsilon_z$, we can ensure that each rounded version $\I^z$ admits an allocation $\x^z$ which is $\varepsilon_z\text{-}\EFone{}$ and $\fPO$.

Note that for all $z \geq 1$, allocation $\x^z$ is an $\EFone{}$ allocation for the instance $\I$. This follows from the analysis of Theorem~\ref{THM:EF1+PO_BOUNDEDVALUATIONS_PSEUDOPOLYTIME}, wherein we showed that if an allocation $\x$ is $\varepsilon_z\text{-}\EFone{}$ with respect to $\I^z$ (the $\varepsilon_z$-rounded version of $\I$), then $\x$ is $\EFone{}$ for $\I$, as long as $\varepsilon_z\leq \frac{1}{14 m^3 v_{\max}^4}$. 

To complete the proof, we will show that in the sequence of $\EFone{}$ allocations $(\x^z)_{z \in \N}$, there exists one which is also $\fPO{}$ for $\I$. In particular, by applying the second welfare theorem of Fisher markets (\Cref{thm:SecondWelfareTheoremFisherMarkets} in \Cref{subsec:Second_Welfare_Theorem_for_Fisher_Markets}), we get that for every $z$, there exists a price vector $\p^z$ such that $\x^z$ satisfies the $\MBB$ condition with respect to $\p^z$ and the valuations in $\I^z$. That is, $ \x^z_i \subseteq \MBB^{z}_i \coloneqq \left\{j \in [m] \, : \, \frac{v_i^z (j)}{\p^z_j} = \alpha_i^z \right\}$ for each agent $i \in [n]$; here, $\alpha_{i}^z$ is the maximum bang per buck ratio of agent $i$ with respect to the price vector $\p^z$, i.e., $\alpha_{i}^z \coloneqq \max_j \frac{v_{i}^z (j)}{p_{j}}$.

Observe that if the outcome $(\x^z,\I^z,\p^z)$ satisfies the $\MBB$ condition, then so does $(\x^z,\I^z,c\cdot\p^z)$, for any positive $c$. In particular, by scaling $\p^z$ by a factor of $c = \frac{1}{\max_{j\in[m]}\p^z_j}$, we ensure that the price vector $ c \cdot \p^z $ lies in $[0,1]^m$  for all $z$. Therefore, without loss of generality, we will assume that $\p^z$ is bounded for all $z$. In addition, since there are only finitely many integral allocations of $[m]$, there exists an infinite sequence $ z_1, z_2,\dots$ such that $\x^{z_a}=\x^{z_b}$ for all $a,b \in \N$. Let $\x$ denote the allocation $\x^{z_a}$, i.e., $\x:=\x^{z_a}$  for all $a \in \N$. We know from the Bolzano-Weierstrass theorem that there exists a subsequence of $(\p^{z_1},\p^{z_2}, \ldots)$ and a price vector $\p^*$ such that the subsequence converges to $\p^*$, i.e., $\lim_{a \rightarrow \infty} \p^{z_{q_a}}=\p^*$. Let this subsequence be $(\p^{z_{q_1}}, \p^{z_{q_2}}, \dots)$.

Overall, by a relabeling of the indices, we can guarantee the existence of a sequence of natural numbers $(z_1,z_2, \ldots)$ such that $\x^{z_a}=\x$ for all $a \in \N$ and $\lim_{a\rightarrow\infty} \p^{z_a}=\p^*$. In particular, we have that $\x$ is an $\fPO{}$ allocation for each instance in the sequence $\I^{z_1}, \I^{z_2}, \ldots$, and hence it satisfies the $\MBB$ condition in these instances as well. 

Since $\I^{z_a}$ is an $\varepsilon_{z_a}$-rounded version of $\I$, for any agent $i \in [n]$ and good $j \in [m]$, we have that $v_{i,j} \leq v_{i,j}^{z_a} \leq (1+\varepsilon_{z_a}) v_{i,j}$. Hence, from the sandwich theorem, we get that $v_{i,j}^{z_a}$ converges to $v_{i,j}$ as $a$ tends to infinity, i.e., $\lim_{a\rightarrow\infty}v_{i,j}^{z_a} = v_{i,j}$.

Using the fact that $\x$ satisfies the $\MBB$ condition with respect to $(\p^{z_a},\I^{z_a})$, we will show that it satisfies the $\MBB$ condition for $(\p^*, \I)$ as well. Specifically, consider any agent $i \in [n]$ along with any pair of goods $j \in [m]$ and $j'\in \x_i$ such that $v_{i,j}^{z_a}$ is nonzero, and hence $v_{i,j}$ is nonzero.\footnote{Since $\I^{z_a}$ is a $\varepsilon_{z_a}$-rounded version of $\I$, $v_{i,j}{z_a}$ is zero if and only if $v_{i,j}$ is zero.} Now, the $\MBB$ condition in $\I^{z_a}$ implies that
 \begin{alignat*}{4}
	  && \frac{\p^{z_a}_{j'}}{v_{i,j'}^{z_a}} && \; \leq \; & \frac{\p^{z_a}_{j}}{v_{i,j}^{z_a}}.&&
	 \end{alignat*}
	 Under the limit of $a$ approaching infinity, we get
	 \begin{alignat*}{4}
	 && \lim_{a\rightarrow\infty}\frac{\p^{z_a}_{j'}}{v_{i,j'}^{z_a}} && \; \leq \; & \lim_{a\rightarrow\infty}\frac{\p^{z_a}_{j}}{v_{i,j}^{z_a}} &&\\
	 \implies \; && \frac{\p^{*}_{j'}}{v_{i,j'}} && \; \leq \; & \frac{\p^{*}_{j}}{v_{i,j}}.&&
	 \end{alignat*}
	 
Therefore, the allocation $\x$ satisfies the $\MBB$ condition with respect to the price vector $\p^*$ and the fair division instance $\I$. Hence, using the first welfare theorem (\Cref{prop:FirstWelfareTheorem}), we get that $\x$ is $\fPO$ for $\I$. This completes the proof of \Cref{thm:EF1+fPO_Existence}.
\end{proof}

\section{Nash Social Welfare Approximation: Proof of Theorem~\ref{THM:APPROXNASH}}
\label{sec:ApproxNash}
This section proves that \Alg{} provides a $1.45$ ($\approx e^{1/e}$)-approximation for the Nash social welfare maximization problem in polynomial time. We begin by showing (in \Cref{lemma:id-val}) that if the valuations of all the agents are identical, then any $\varepsilon$-$\EFone{}$ allocation provides a $e^{(1+\varepsilon)/e}$-approximation to Nash social welfare. We will then use this result, along with an appropriate choice of $\varepsilon$, to prove the desired approximation bound in \Cref{THM:APPROXNASH}.

\EFoneNSWIdenticalVals*

The proof of \Cref{lemma:id-val} makes use a structural result stated as \Cref{lem:opt-struct}. A relevant notion used in these results is that of \emph{partially-fractional allocations}, defined as follows: Given a subset $B \subset [m]$ of the set of goods, a partially-fractional allocation (with respect to $B$) is one where the goods in $B$ have to be integrally allocated and the remaining goods in $[m]\setminus B$ can be fractionally allocated. Formally, a partially-fractional allocation $\y \in [0,1]^{n \times m}$ is such that for each agent $i \in [n]$, we have $\y_{i,j} \in \{0, 1\}$ for each $j \in B$, and $\y_{i,j} \in [0,1]$ for each $j \in [m]\setminus B$. We write $\F_B$ to denote the set of all partially-fractional allocations with respect to $B$.


\begin{restatable}{lemma}{NashOptStruct}
\label{lem:opt-struct}
Let $\I = \langle [n], [m], \V \rangle$ be an instance with additive and identical valuation functions (denoted by $v$ for each agent) such that $m \geq n$. Let $B \subset [m]$ be a subset of goods such that $|B| < n$. Then, there is a partially-fractional allocation $\omega = (\omega_1, \ldots, \omega_n) \in \F_B$ that maximizes Nash social welfare (among allocations in $\F_B$) such that
\begin{enumerate}[(1)]
\item Each agent gets at most one good from $B$ under $\opt$. That is, for each agent $i \in [n]$, we have that $\textstyle{ | \{ j \in [m] \, : \, \opt_{i,j} > 0 \} \cap B | \leq 1 }$.
\item Any agent with strictly-better-than-the-worst allocation under $\opt$ gets exactly one integral good (and no fractional good). That is, for any agent $i \in [n]$ such that $v(\opt_i) > \min_k v(\opt_k)$, we have $\opt_{i,j} = 1 $ for some $j \in B$, and $\opt_{i,j'} = 0$ for all $j' \neq j$.
\end{enumerate}
\end{restatable}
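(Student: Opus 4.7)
The plan is to characterize NSW maximizers over $\F_B$ via the classical \emph{water-filling} structure of log-concave allocation, and then deduce both properties by exchange arguments. Existence is routine: the set $\F_B$ is a finite union of compact convex polytopes (one per integer assignment of the $B$-goods), hence compact, and NSW is continuous on the positive orthant. Since every good has strictly positive value (by the WLOG assumption of \Cref{sec:Preliminaries}) and $m\geq n$, the uniform-fractional split has positive NSW, so a maximizer exists. Fixing any integer assignment $\sigma\colon B\to[n]$ and writing $b_i\coloneqq \sum_{j\in\sigma^{-1}(i)} v(j)$ and $F\coloneqq v([m]\setminus B)$, the inner fractional problem
\[
\max \prod_{i\in[n]} (b_i+f_i) \quad \text{subject to}\quad f_i\geq 0,\ \ \sum_i f_i=F
\]
is a log-concave program whose KKT conditions give a threshold $L=L(\sigma)$ and water-filled values $y_i = b_i+f_i = \max\{b_i,L\}$, with total $\sum_i y_i=v([m])$ \emph{independent} of $\sigma$. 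Thus maximizing NSW reduces to choosing $\sigma$ so that the water-filled profile $y$ is as balanced as possible.

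For Property~(1), suppose an NSW-maximizer $\omega$ assigns two goods $j_1,j_2\in B$ to the same agent $i$, with $v(j_2)\leq v(j_1)$. Because $|B|<n$, some agent $k$ has no $B$-good, so $b_k=0$. Let $\sigma'$ move $j_2$ from $i$ to $k$; then $b'_i=b_i-v(j_2)\geq v(j_2)=b'_k$ (using $b_i\geq v(j_1)+v(j_2)\geq 2v(j_2)$), so the new integer profile $b'$ is strictly more equal than $b$ in the majorization order. If $b_i\geq L(\sigma)$ (the ``high'' case for $i$), then by strict Schur-concavity of the water-filled product as a function of $b$ on the fixed-sum set, water-filling under $\sigma'$ yields strictly larger NSW, contradicting optimality; hence at any maximizer, every agent holding $\geq 2$ $B$-goods must satisfy $b_i<L(\sigma)$. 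In that ``low'' sub-case I would simultaneously shift $v(j_2)$ worth of fractional mass from $k$ back to $i$, so every agent's final value is preserved and NSW is unchanged; this produces another maximizer with strictly fewer $B$-goods on $i$. Taking the maximizer that minimizes the secondary objective $\sum_t |\omega_t\cap B|^2$ therefore forces every agent to hold at most one good from $B$.

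Property~(2) follows from the water-filling formula combined with Property~(1). If $v(\omega_i)>\min_k v(\omega_k)$, then since some agent with no $B$-good (which exists because $|B|<n$) is water-filled to exactly $L$, the minimum is $L$, so $v(\omega_i)>L$. But $y_i=\max\{b_i,L\}$ forces $b_i>L$, hence $f_i=0$, so $i$ owns no fractional goods. By Property~(1), $i$ owns at most one good from $B$, and because $v(\omega_i)>0$ this one good is its entire bundle, as claimed.

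The main obstacle is the strict Schur-concavity assertion used in the ``high'' sub-case of Property~(1): the threshold $L$ itself shifts when the low/high partition changes, so one cannot directly compare $\prod_t \max\{b_t,L\}$ and $\prod_t \max\{b'_t,L'\}$ without care. The cleanest resolution is a direct computation that case-splits on whether agent $k$ becomes ``high'' after the transfer and whether agent $i$ remains ``high''; in each sub-case the final profile $y'$ is majorized by $y$ and the log-sum strictly increases by concavity of $\log$. Alternatively, one may invoke the standard fact that the water-filling map $b\mapsto y$ preserves majorization, together with strict Schur-concavity of $\prod_i y_i$ on the positive simplex $\{y : \sum_i y_i = v([m]),\, y_i>0\}$.
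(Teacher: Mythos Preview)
Your approach via the water-filling/KKT characterization is a genuinely different route from the paper's, which proceeds entirely by elementary exchange arguments and never invokes convex optimization. The paper argues in the reverse order: it first observes that any agent receiving a positive fraction of a divisible good must sit at the minimum value $\mu$ (else a small fractional transfer toward a least-valued agent strictly raises the product), which immediately gives the ``no fractional goods'' half of Property~(2). It then shows that an above-$\mu$ agent holding two $B$-goods can donate one to a $B$-free agent $k$ (whose bundle is purely fractional, value $\mu$) and rebalance the pair's fractional mass so as to strictly improve the product via AM--GM, finishing Property~(2). Property~(1) for the remaining $\mu$-valued agents is a value-preserving swap, iterated at most $n$ times. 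Your route packages the same balancing intuition into a single water level $L$, which yields a cleaner global picture and makes Property~(2) a one-line corollary of $y_i=\max\{b_i,L\}$ once Property~(1) is in hand.

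The one soft spot is precisely the one you flag. The water-filled log-product is Schur-concave in $b$ but not \emph{strictly} so on the whole simplex (when both affected coordinates lie below $L$, a Robin-Hood transfer leaves $y$ and hence the product unchanged), and majorization-preservation of $b\mapsto y$ by itself only delivers a weak inequality. To extract strictness in the high sub-case $b_i>L$ you do need the explicit two-agent computation---move $j_2$ to $k$ and shift an appropriate amount of fractional mass from $k$ back to $i$ so as to pull the pair toward $(b_i+L)/2$, checking feasibility using $v(j_2)\leq b_i/2$---and once you carry that out you have essentially reproduced the paper's AM--GM exchange step. So both proofs bottom out in the same local two-agent balancing; your framework adds structural clarity but postpones rather than avoids that computation.
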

\begin{proof}(of \Cref{lem:opt-struct})
We will consider a partially-fractional allocation $\opt= (\opt_1, \opt_2, \ldots, \opt_n ) \in \F_B$ that maximizes the Nash social welfare over $\F_B$, and show that it can be transformed (without decreasing NSW) into an allocation in $\F_B$ that satisfies the stated properties.

Observe that if an agent $i \in [n]$ receives a fractional good under $\opt$ (i.e., $\opt_{i,j'} >0$ for some $j' \notin B$), then its value $v(\opt_i)$ is equal to $\mu \coloneqq \min_k v(\opt_k)$. This is because if $v(\opt_i) > \mu$, then we can ``redistribute'' $j'$ between agent $i$ and the least valued agent $\arg\min_{k} v(\opt_k)$ to obtain another fractional allocation in $\F_B$ with Nash social welfare strictly greater than that of $\opt$. This contradicts the optimality of $\opt$.

Therefore, any agent with value strictly greater than $\mu$ can only receive integral goods. We will show that any such agent necessarily receives exactly one integral good, and hence establish property (2). Suppose, for contradiction, that some agent $i \in [n]$ receives distinct goods $a,b \in B$ (i.e., $\opt_{i,a} = \opt_{i,b} = 1$), and $v(\opt_i) > \mu$. Since $|B| < n$, there exists some agent $k \in [n]$ that does not receive any integral good from $B$. Since agent $k$ only receives fractional goods under $\opt$, it follows from the above argument that $v(\opt_k) = \mu$. We can now assign one of the integral goods (say $a$) to agent $k$, and redistribute the fractional goods in $\opt_k$ between the agents $i$ and $k$ to obtain a new partially-fractional allocation with strictly greater Nash social welfare than $\opt$. This contradicts the optimality of $\opt$. Therefore, any agent with value strictly greater than $\mu$ must receive exactly one integral good under $\opt$.

Next we address property (1). Note that this property already holds for agents with value strictly greater than $\mu$. Hence, we only need to consider agents whose value is exactly equal to $\mu$. Let $i \in [n]$ be an agent that receives two distinct goods $a,b \in B$ such that $v(\opt_i) = \mu$. Since $|B| < n$, there must exist an agent $k \in [n]$ that only receives fractional goods under $\opt$. We can now perform a ``swap'' by assigning one of the integral goods (say $a$) to agent $k$, and fractional goods of total value $v(a)$ in $\opt_k$ to agent $i$ to obtain a new partially-fractional allocation in $\F_B$ with the same Nash social welfare as $\opt$. Such a swap is always possible, since $v(a) \leq v(\opt_i) = \mu = v(\opt_k)$. By repeating this process at most $n$ times, we can obtain an NSW maximizer satisfying property (1).
\end{proof}

We will now prove \Cref{lemma:id-val}.

\begin{proof}(of \Cref{lemma:id-val})
Let $\I = \langle [n], [m], \V \rangle$ denote the given instance with identical and additive valuations, and let $v$ denote the valuation function of all the agents. Write $\x = (\x_1, \x_2, \ldots, \x_n)$ to denote an $\varepsilon\text{-}\EFone{}$ allocation of $\I$. Let $\ell$ denote the value of the least valued bundle in $\x$, i.e., $\ell \coloneqq \min_{i \in [n]} v(\x_i)$. By reindexing, we have that $v(\x_1) \geq v(\x_2) \geq \ldots \geq  v(\x_{n})= \ell$.

We will use $g_i$ to denote a largest valued good in the bundle $\x_i$ of agent $i$, i.e., $g_i \in \arg\max_{g \in \x_i} v(g)$. The fact that $\x$ is an $\varepsilon\text{-}\EFone{}$ allocation implies that 
	\begin{align}
		\label{eq:EF1}
		v(\x_i \setminus \{g_i \}) \leq (1+\varepsilon)\ell \qquad \text{for all $i\in[n]$.}
	\end{align}
	
Define $B \coloneqq \{g_1,g_2,\ldots,g_{n-1}\}$. 
Let $\omega= (\omega_1, \ldots, \omega_n) \in \F_B$ be a partially-fractional allocation (with respect to $B$) that maximizes Nash social welfare among all allocations in $\F_B$. Since $\F_B$ contains all the integral allocations, we have $\NW(\omega) \geq \NW(\x^*)$, where $\x^*$ is a Nash optimal (integral) allocation. Hence, to prove the lemma, it suffices to show that $\NW(\x) \geq \frac{1}{e^{{(1+\varepsilon)}/{e}}} \NW(\omega)$.

Define $\alpha \coloneqq \min_{k\in[n]}v(\omega_k)/\ell$, and let $H \coloneqq \left\{ k \in[n] \, : \, v(\x_k)>\alpha\ell \right\}$. We will now consider partially-fractional allocations wherein only (and all) the goods in $\x_H$ have to be allocated integrally, and the remaining goods can be fractionally allocated. Write $\F_{\x_H}$ to denote the set of all such partially-fractional allocations.

The rest of the proof consists of four parts: First, we will construct an allocation $\x' \in \F_{\x_H}$ such that $\NSW(\x') \leq \NSW(\x)$. (Doing this will allow us to work with the ratio $\frac{\NSW(\x')}{\NSW(\omega)}$, which is convenient to bound from below.) Second, we will derive a lower bound on $\NSW(\x')$ in terms of the relevant parameters $\alpha$, $\ell$, and $n$ (and two other parameters $h$ and $t$ that we will define shortly). Third, we will derive an upper bound on $\NSW(\omega)$ in terms of the same parameters. Finally, we will derive relationships between these parameters in order to achieve the stated approximation ratio.

\begin{itemize}

	\item \emph{Constructing the allocation $\x'$}: We start by initializing $\x'\leftarrow\x$. While there exist two agents $i,k \in [n]$ such that $\ell < v(\x'_i) < v(\x'_k) < \alpha\ell$, we transfer goods of value $\Delta \coloneqq \min\{ v(\x'_i)-\ell,\alpha\ell - v(\x'_k) \}$ from $\x'_i$ (the lesser valued bundle) to $\x'_k$ (the larger valued bundle). In particular, this transfer of goods ensures that the Nash social welfare does not increase. Also, this process must terminate because after every iteration of the while-loop, either $v(\x'_i) = \ell$ or $v(\x'_k) = \alpha\ell$, and therefore at least one of these agents does not participate in future iterations of the while-loop. Moreover, we have $\x'_k = \x_k$ for all $k \in H$, as the agents in $H$ do not participate in the transfer. This proves that $\x' \in \F_{\x_H}$.
	
	\item \emph{Lower bound for $\NSW(\x')$}: Notice that there can be at most one agent $s$ in the allocation $\x'$ such that $v(\x'_s)\in (\ell,\alpha\ell)$. This is because the while-loop continues to execute if there are two or more such agents. For every other agent $k\in [n]\setminus H$, $v(\x'_k)$ is either $\ell$ or $\alpha\ell$. 
	
Let $h \coloneqq |\{ k \in [n] \, : \, v(\x_k) > \alpha \ell \}|$ denote the cardinality of the set $H$ (i.e., $h=|H|$). Let $t \coloneqq |\{ k \in [n] \, : \, v(\x'_k) \geq \alpha \ell \}|$ denote the number of agents with a valuation at least $\alpha\ell$ in the allocation $\x'$. Thus, there are $(n-t)$ agents with valuation strictly below $\alpha \ell$ in $\x'$. We lower bound the valuations of these agents by $\ell$ in order to obtain the following relation:
	\begin{align}
	\label{eqn:NSW_LowerBound}
	\NW(\x') & \geq \left( \prod_{i = 1}^{h} v(\x_i) \times \left( \alpha \ell \right)^{(t-h)} \times \ell^{(n-t)} \right)^{1/n}. 
	\end{align}

	\item \emph{Upper bound for $\NSW(\omega)$}: Recall that $\omega= (\omega_1, \ldots, \omega_n) \in \F_B$ is a partially-fractional allocation (with respect to $B$) that maximizes Nash social welfare, and $|B| < n$. Using \Cref{lem:opt-struct}, we can assume, without loss of generality, that $\omega$ has the following two properties: (1) each agent $i \in [n-1]$ gets the good $g_i$ under $\omega$ (this can be ensured via reindexing since the valuations are identical), and (2) if $v(\omega_i) > \min_{k\in[n]} v(\omega_k)$ for any $i \in [n]$, then agent $i$ gets exactly one integral good under $\omega$ (and no fractional good).
	
	We will now argue that $v(\omega_k)\leq v(\x_k)$ for all $k \in H$. Suppose, for contradiction, that there exists an agent $k \in H$ such that $v(\omega_k) > v(\x_k)$. By definition of $H$, $v(\x_k) > \alpha \ell$ for all $k \in H$, and therefore $v(\omega_k) > \alpha \ell$. We also know that $\min_{k\in[n]} v(\omega_k) = \alpha \ell$, and therefore, by property (2), agent $k$ must get exactly one integral good $g_k$ under $\omega$ (and no fractional good). However, since $g_k \subseteq \x_k$, this contradicts the condition $v(\omega_k) > v(\x_k)$.
	
	By a similar reasoning, we can show that $v(\omega_k) = \alpha\ell$ for all $k \in [n] \setminus H$. Indeed, if $v(\omega_k) > \alpha\ell = \min_a v(\omega_a)$ for some $k \in [n] \setminus H$, then by property (2), agent $k$ must get exactly one integral good $g_k$ under $\omega$ (and no fractional good). This would imply that $v(\x_k) \geq v(g_k) = v(\omega_k) > \alpha \ell$, which contradicts the fact that $k \in [n] \setminus H$.

These observations imply the following upper bound on the Nash social welfare of $\omega$:
	\begin{align}
	\label{eqn:NSW_UpperBound}
	\NW(\omega) & \leq \left( \prod_{i = 1}^{h} v(\x_i) \times \left( \alpha \ell \right)^{(n-h)}\right)^{1/n}. 
	\end{align}

	\item \emph{Deriving relationship between the parameters}: From \Cref{eq:EF1} and from the fact that $\x'_k=\x_k$ for all $k\in H$, we have $v(\x'_k\setminus\{g_k\})=v(\x_k\setminus\{g_k\})\leq (1+\varepsilon)\ell$ for all $k\in H$. Hence, we can upper bound the value of all goods in $[m]$ excluding the $h$ goods in the set $\bigcup_{k \in H} \{g_k\}$, as follows:
	\begin{align}
	\label{ineq:total-val}
	\sum_{i \in H} v(\x'_i \setminus \{ g_i \}) + \sum_{i \in [n]\setminus H} v( \x'_i) & \leq (1 + \varepsilon) h \ell + \alpha\ell(t-h+1) + \ell(n-t-1).
	\end{align}

	Next, we will derive a lower bound for this total value by considering $\omega$. Note that all the goods in the set $\bigcup_{k \in H} \{g_k\}$ are integrally allocated under $\omega$. Hence, at least $(n-h)$ agents do not receive any good from the set $\bigcup_{k \in H} \{g_k\}$. The cumulative value derived by these agents under $\omega$ is at least $\alpha \ell (n-h)$, since $\min_{k} v(\omega_k) = \alpha \ell$. Using \Cref{ineq:total-val} and the fact that $h\leq t$, we get 
	\begin{align*}
	\alpha \ell (n-h) & \leq (1 + \varepsilon) t \ell + \alpha\ell(t-h+1) + \ell(n-t-1).
	\end{align*}

Simplification gives that $t \geq \frac{ (\alpha -1) (n-1)} {\alpha + \varepsilon}$, which can be further simplified to obtain
	\begin{align*}
	(n-t) \leq \frac{n(1 + \varepsilon) + \alpha -1}{\alpha + \varepsilon} \leq \frac{n(1 + \varepsilon)}{\alpha} + \frac{\alpha -1}{\alpha} \leq \frac{n(1+\varepsilon)}{\alpha} +1.
	\end{align*}
\end{itemize}

Recall that $\NSW(\x) \geq \NSW(\x')$. Using the above relation with \Cref{eqn:NSW_LowerBound,eqn:NSW_UpperBound} gives
	\begin{align}
	\label{eqn:NSWApprox_LowerBound}
		\frac{\NSW(\x)}{\NSW(\omega)}\geq \frac{\NSW(\x')}{\NSW(\omega)}\geq \frac{\left( \prod_{i = 1}^{h} v(\x_i) \times \left( \alpha \ell \right)^{(t-h)} \times \ell^{n-t} \right)^{1/n}}{\left( \prod_{i = 1}^{h} v(\x_i) \times \left( \alpha \ell \right)^{(n-h)}\right)^{1/n}} = \alpha^{-\frac{n-t}{n}} \geq \alpha^{-\frac{1+\varepsilon}{\alpha} - \frac{1}{n}}.
	\end{align}

The $\nicefrac{-1}{n}$ term in the exponent of $\alpha$ can be neglected via a scaling argument, as follows: Construct (for analysis only) a \emph{scaled-up} instance $\I'$ consisting of $c \geq 1$ copies of the instance $\I$. For any allocation $\y$ that is $\varepsilon\text{-}\EFone$ for $\I$, the allocation $\y' = (\y,\y,\dots,\y)$ is $\varepsilon\text{-}\EFone$ for $\I'$. Write $n',\alpha',\ell',\omega'$ to denote the analogues of $n,\alpha,\ell,\omega$ in $\I'$. Also, let $\tilde{\omega}$ denote the fractional allocation $(\omega,\omega,\dots,\omega)$ in $\I'$. It is easy to see that $n' = cn$, $\alpha' = \alpha$, and $\ell' = \ell$. Moreover,
\begin{align*}
\frac{\NSW(\y)}{\NSW(\omega)} = \frac{\NSW(\y')}{\NSW(\tilde{\omega})} \geq \frac{\NSW(\y')}{\NSW(\omega')} \geq \alpha^{-\frac{1+\varepsilon}{\alpha} - \frac{1}{cn}},
\end{align*}
where the first term is for the instance $\I$, and the remaining terms are for the instance $\I'$. In addition, the relation $\NSW(\omega') \geq \NSW(\tilde{\omega})$ follows from the optimality of $\omega'$ for $\I'$. By choosing a  sufficiently large value of $c$, the term $-\nicefrac{1}{cn}$ in the exponent can be made arbitrarily small. Therefore, the lower bound in \Cref{eqn:NSWApprox_LowerBound} is (arbitrarily close to) $\alpha^{-\frac{1+\varepsilon}{\alpha}}$. Finally, notice that the function $z^{-\frac{1+\varepsilon}{z}}$ with $z\geq 0$ is minimized at $z = e$. This gives a lower bound of $e^{-(1+\varepsilon)/e}$, as desired.
\end{proof}

We will now proceed to the proof of \Cref{THM:APPROXNASH}. The proof relies on transforming a general fair division instance into one with identical valuations, and showing that the Nash social welfare of the allocation returned by \Alg{} is preserved in this transformation. 

\ApproxNash* 

\begin{proof}
For a given instance $\I = \langle [n], [m], \V \rangle$, let $\z$ and $\q$ denote the allocation and the price vector respectively that are returned by $\Alg{}$ when provided as input the $\varepsilon$-rounded version of $\I$ (the parameter $\varepsilon$ is set to a small constant). 
Let $\alpha_i \neq 0$ denote the maximum bang per buck ratio of agent $i$ with respect to $\q$. Construct a scaled instance $\I^{\text{sc}} = \langle [n], [m], \V^{\text{sc}} \rangle$ such that $v^{\text{sc}}_{i,j} = \frac{1}{\alpha_i} v_{i,j}$ for all $i$ and $j$.\footnote{A similar scaling was used by \citet{CG15approximating} in their analysis of $\NSW$ approximation.} Then, for any allocation $\y$, $\NSW(\y)$ in $\I^{\text{sc}}$ is $ \frac{1}{\left(\prod_{i=1}^n \alpha_i \right)^{1/n}}$ times $\NSW(\y)$ in the original instance $\I$. Therefore, in order to obtain the desired approximation guarantee, it suffices to show that $\z$ achieves an approximation factor of $1.45$ in the scaled instance $\I^{\text{sc}}$. 

Let $\opt$ denote a Nash optimal (integral) allocation in the original instance $\I$. 
 By the above argument, $\opt$ is Nash optimal in the scaled instance $\I^{\text{sc}}$ as well. Additionally, for each agent $i$ in $\I^{\text{sc}}$, we have that $v^{\text{sc}}_{i,j} = q_j$ for all $j \in \MBB_i$, and $v^{\text{sc}}_{i,j} < q_j $ for all $j \notin \MBB_i$. Therefore, for any agent $i$, we have $v^{\text{sc}}(\z_i) = \q(\z_i)$ (since, from \Cref{lem:ALG_outputs_eps-EF1_and_fPO}, we have that $\z_i \subseteq \MBB_i$), and $v^{\text{sc}}_i(\opt_i) \leq \q(\opt_i)$. Consequently, the Nash social welfare of the computed allocation $\z$ and the optimal allocation $\opt$ satisfy the following relations in the scaled instance $\I^{\text{sc}}$:
\begin{align}
\label{eq:nsw-alloc}
\left( \prod_{i=1}^n v^{\text{sc}}_i( \z_i) \right)^{1/n} & = \left(\prod_{i=1}^n  \q(\z_i) \right)^{1/n} 
\end{align}
and
\begin{align}
\label{ineq:nsw-opt}
\left( \prod_{i=1}^n v^{\text{sc}}_i(\opt_i) \right)^{1/n} & \leq \left(\prod_{i=1}^n  \q(\omega_i) \right)^{1/n}. 
\end{align}
We will further transform the valuations in $\I^{\text{sc}}$ to obtain an instance $\I^{\text{id}} = \langle [n], [m], \V^{\text{id}} \rangle$ with \emph{identical} valuations. Specifically, we set $v^{\text{id}}_{i,j} = q_j$ for all $i$ and $j$. We know from \Cref{lem:ALG_outputs_eps-EF1_and_fPO} that $\z$ is $3\varepsilon$-$\pEFone$ with respect to $\q$ in the original instance $\I$, and that $\z_i \subseteq \MBB_i$ for each agent $i \in [n]$. It then follows that $z$ is $3\varepsilon$-$\EFone$ in the identical valuations instance $\I^{\text{id}}$. Furthermore, when $\varepsilon=\frac{1}{300}$, the allocation $z$ is $\frac{1}{100}$--$\EFone$ in $\I^{\text{id}}$, and therefore from \Cref{lemma:id-val}, we have that 
\begin{align*}
\left(\prod_{i=1}^n  \q(\z_i) \right)^{1/n} & \geq e^{-(1+0.01)/e} \max_{\y \in \X} \left(\prod_{i=1}^n  \q(\y_i) \right)^{1/n} \\
& \geq \frac{1}{1.45} \max_{\y \in \X} \left(\prod_{i=1}^n  \q(\y_i) \right)^{1/n} \\
& \geq \frac{1}{1.45} \left(\prod_{i=1}^n  \q(\omega_i) \right)^{1/n} \\
& \geq \frac{1}{1.45} \left( \prod_{i=1}^n v^{\text{sc}}_i(\opt_i) \right)^{1/n} \qquad \text{ (using \Cref{ineq:nsw-opt})}.
\end{align*}
The previous inequality and \Cref{eq:nsw-alloc} together give us an approximation factor of $1.45$ under the valuation profile $\V^{\text{sc}}$:
\begin{align*}
\left( \prod_{i=1}^n v^{\text{sc}}_i( \z_i) \right)^{1/n} & \geq \frac{1}{1.45} \left( \prod_{i=1}^n v^{\text{sc}}_i(\opt_i) \right)^{1/n},
\end{align*}
which provides a similar approximation guarantee for the original instance $\I$. Finally, observe that the allocation $z$ can be computed in polynomial time for the above choice of $\varepsilon$ (\Cref{lem:Approx_EF_Approx_PO_Polytime}). This completes the proof of \Cref{THM:APPROXNASH}.
\end{proof}

\section{Concluding Remarks}
\label{sec:Concluding_Remarks}

We studied the problem of finding a fair and efficient allocation of indivisible goods. Our work provided a framework based on integral Fisher markets and an (approximate) price envy-freeness condition resulting in a pseudopolynomial algorithm for finding an $\EFone{}$ and $\PO$ allocation, and a polynomial time $1.45$-approximation algorithm for Nash social welfare. Determining whether there exists a (strongly) polynomial time algorithm for the problem of finding an $\EFone{}$ and $\PO$ allocation remains an interesting direction for future work. Extensions of our results to more general classes of valuations (e.g., submodular) will also be interesting.

\bibliographystyle{plainnat}
\bibliography{EF1+PO}

\newpage
\setcounter{section}{0}

\renewcommand{\thesection}{\Alph{section}}
\section{Appendix-I}
 \label{sec:Appendix_I}

\subsection{\BuildHierarchy{} subroutine}
\label{subsec:BuildHierarchy}

This section provides a polynomial time subroutine for constructing the hierarchy of agent $i$. 

Given an allocation $\x$, we use $\x^{-1}(j)$ to refer to the agent that owns the good $j$, i.e., $\x^{-1}(j) = i$ if $j \in \x_i$. Similarly, for a set of goods $G \subseteq [m]$, we write $\x^{-1}(G)$ to refer to the set of all agents that own one or more goods in the set $G$. Also, given a set $S \subseteq [n]$ of agents, we write $\MBB_S$ to refer to the set $\bigcup_{i \in S} \MBB_i$.

\begin{algorithm}[h]
 \DontPrintSemicolon
 \KwIn{An agent $i$, an allocation $\x$, and a price vector $\p$.}
 \KwOut{A hierarchy structure $\H_i = \{\H_i^{0}, \H_i^{1}, \dots  \}$ for agent $i$.}
 \BlankLine
 \tcp{Initialization}
 $\H_i^{0} \leftarrow \{i\}$\tcc*{$i$ is the Level-0 agent}
 $\ell \leftarrow 0$\;
 \tcp{Build the hierarchy level-wise}
 \While{$\H_i^{\ell}$ is non-empty}{
 $\H_i^{\ell + 1} \leftarrow \x ^{-1}\left( \MBB_{\H_i^{\ell}} \right) \setminus \{ \cup_{k=0}^\ell \H_i^{k} \}$\;
 	\tcc*{Add to $\H_i^{\ell + 1}$ any agent that is not currently in the hierarchy and is reachable from some member of $\H_i^{\ell}$ via an $\MBB{}$-allocation edge.}
 	$\ell \leftarrow \ell + 1$\;
 }
 \KwRet{$\H_i = \{\H_i^{0}, \H_i^{1},\dots\}$}.\;
 \caption{\BuildHierarchy}
 \label{alg:Hierarchy_Construction}
\end{algorithm}

\subsection{Proof of Lemma~\ref{lem:ALG_RunningTime_PowersOfr}}
\label{subsec:Proof_Of_ALG_RunningTime_PowersOfr}

This section provides the proof of \Cref{lem:ALG_RunningTime_PowersOfr}. Throughout this section, we will use $i_t \in [n]$ to denote the least spender at time step $t$. Additionally, we will use $\x^t$ and $\p^t$ to denote the allocation and the price vector respectively at time step $t$.

\ALGRunningTime*
\begin{proof}
Follows from \Cref{lem:ALG_Phase2_RunningTimeBound,lem:ALG_Phase3_RunningTimeBound}, which are stated below.
\end{proof}

\begin{restatable}{lemma}{ALGPhaseTwoRunningTimeBound}
 \label{lem:ALG_Phase2_RunningTimeBound}
 Phase 2 of \Alg{} can continue for at most $\poly(n, m, \frac{1}{\varepsilon}) \cdot \ln m v_{\max}$ consecutive time steps before a Phase 3 step occurs.
\end{restatable}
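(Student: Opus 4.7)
The goal is to bound the number of Phase 2 events (swaps and changes in least-spender identity) between two consecutive Phase 3 price-rises by $\poly(n,m,1/\varepsilon) \cdot \ln(m v_{\max})$. The key invariants in Phase 2 are that the price vector $\p$ is frozen (so $\sum_k \p(\x_k)$ and every $p_j$ are constant throughout) and that the $\MBB$ property $\x_k \subseteq \MBB_k$ is preserved, which in particular makes each $p_j$ a fixed power of $(1+\varepsilon)$.

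My plan is to decompose the Phase 2 segment into maximal \emph{sub-phases} during which the least-spender identity is fixed at some agent $i$, and to bound the number of swaps per sub-phase and the number of sub-phases separately. Within a sub-phase, every swap transfers a good $j$ from an $\varepsilon$-path-violator $h \in \H_i^{\ell}$ to its immediate path parent $h_{\ell-1} \in \H_i^{\ell-1}$; the path-violator condition $\p(\x_h \setminus \{j\}) > (1+\varepsilon)\p(\x_i)$, combined with the fact that the algorithm processes lower levels first so $h_{\ell-1}$ itself is not a path-violator at the moment of the swap, provides enough slack to define a leximin-style potential on the sorted vector $\bigl(\lfloor \log_{1+\varepsilon} \p(\x_k) \rfloor\bigr)_{k \in [n]}$ that strictly increases with each swap in lex order. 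Since each component lies in an interval of length $O(\tfrac{1}{\varepsilon}\ln(m v_{\max}))$ and there are $n$ components, this bounds the number of swaps per sub-phase by $\poly(n) \cdot \tfrac{1}{\varepsilon} \ln(m v_{\max})$. To bound the number of sub-phases, I would show that $\min_k \p(\x_k)$ is non-decreasing across the Phase 2 segment, and that each strict increase lifts it by at least a factor of $(1+\varepsilon)$, yielding $O(\tfrac{1}{\varepsilon}\ln(m v_{\max}))$ strict increases with at most $n-1$ identity changes between consecutive strict increases.

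The main obstacle is proving the leximin-monotonicity claim rigorously. After each swap the algorithm rebuilds $\H_i$ and resets $\ell \leftarrow 1$, so the set of path-violators can change in nontrivial ways, and the recipient $h_{\ell-1}$ might already be several log-buckets above $\p(\x_i)$, in which case the swap does not visibly raise its bucket. I plan to handle this by defining the potential on all $n$ agents (agents outside $\H_i$ are untouched by a swap and hence contribute constantly), and by combining the level-by-level ``no path-violator below'' invariant with a careful cascade argument along the alternating path to show that each swap strictly improves the sorted multiset of bucketed spendings in lex order; the drop in $h$'s bucket, which is forced because $h$ exceeded $(1+\varepsilon)\p(\x_i)$ before the swap, serves as the charge-off when the lift at $h_{\ell-1}$ is not directly visible.
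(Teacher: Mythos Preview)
Your decomposition into sub-phases (fixed least spender) and the count of sub-phases is the right skeleton, and indeed matches the paper's structure. But there is a genuine gap in the way you propose to bound the number of swaps inside a single sub-phase.

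\textbf{The leximin potential on bucketed spendings is not monotone under swaps.} Consider a least spender $i$ with $\p(\x_i)=1$, a level-$1$ agent $h_1$ holding a single good of price $10$ on the alternating path (so $\p(\x_{h_1}\setminus\{j'\})=0\le (1+\varepsilon)\cdot 1$ and $h_1$ is \emph{not} an $\varepsilon$-path-violator), and a level-$2$ agent $h_2$ holding two goods of price $2$ each, one on the path. Then $h_2$ is an $\varepsilon$-path-violator ($\p(\x_{h_2}\setminus\{j\})=2>(1+\varepsilon)$), and the swap moves one of these goods to $h_1$. The sorted spending vector goes from $(1,4,10)$ to $(1,2,12)$, which is strictly \emph{smaller} in lex order; passing to $\lfloor\log_{1+\varepsilon}\cdot\rfloor$-buckets does not help. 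The issue is exactly the one you flag: the recipient $h_{\ell-1}$ can already be far above the donor $h_\ell$ in spending (because ``not a path-violator'' only caps $\p(\x_{h_{\ell-1}}\setminus\{j'\})$, not $\p(\x_{h_{\ell-1}})$), so a swap can move mass from a poorer agent to a richer one. Your proposed ``charge-off'' via the drop in $h$'s bucket goes the wrong way for a leximin-\emph{increase} argument.

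The paper bounds swaps within a sub-phase by a purely combinatorial potential that ignores spendings entirely: for a fixed least spender $i$ it sets
\[
f(t)=\sum_{h\neq i}\Bigl(m\bigl(n-\level(h,t)\bigr)+|G_{h,t}|\Bigr),
\]
where $G_{h,t}$ is the set of ``critical'' goods of $h$ (goods on some shortest alternating path defining $h$'s level). One then argues that after a swap from $h_\ell$ to $h_{\ell-1}$: (i) $h_\ell$ either loses a critical good or its level strictly increases; (ii) $h_{\ell-1}$'s level and critical set are unchanged; (iii) every other agent's contribution does not increase. This gives a drop of at least $1$ per swap and a $\poly(n,m)$ bound, independent of $\varepsilon$ and $v_{\max}$. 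Any spending-based potential will struggle precisely because swaps at levels $\ge 2$ do not touch the least spender's bundle.

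Your bound on the number of sub-phases also needs repair: a strict increase of $\min_k \p(\x_k)$ in Phase~2 is caused by the least spender receiving one good, and there is no reason this jump is by a multiplicative factor $(1+\varepsilon)$ (spendings are sums of powers of $1+\varepsilon$, not single powers). The paper's mechanism is different: whenever an agent $i$ \emph{returns} as least spender, either $\x_i$ has strictly grown (no good lost in between) or, because the last good $i$ lost was lost while $i$ was an $\varepsilon$-path-violator, its spending upon return exceeds $(1+\varepsilon)$ times its spending when it left. Pigeonhole over $m(n+1)$ identity changes then forces a factor-$(1+\varepsilon)$ increase of the least spender's spending, and a cap of $m v_{\max}$ on that spending (proved separately) yields the $\poly(n,m,\tfrac{1}{\varepsilon})\cdot\ln(m v_{\max})$ bound.
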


\begin{restatable}{lemma}{ALGPhaseThreeRunningTimeBound}
 \label{lem:ALG_Phase3_RunningTimeBound}
 \Alg{} can perform at most $n \log_{(1+\varepsilon)} m v_{\max}$ number of Phase 3 steps.
\end{restatable}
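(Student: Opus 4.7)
The plan is to bound the number of Phase 3 steps by counting the (agent, step) pairs in which the agent belongs to the hierarchy, and then using that every Phase 3 step has $|\H_{i_t}| \geq 1$ (the least spender always belongs to its own hierarchy). The argument has two core ingredients.

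First, under the power-of-$(1+\varepsilon)$ valuation assumption, both $\alpha_1$ and $\alpha_3$ are integer powers of $(1+\varepsilon)$ strictly greater than one: $\alpha_1$ because every bang-per-buck ratio $v_{i,j}/p_j$ lies in the power-of-$(1+\varepsilon)$ lattice (inductively, since $\alpha_1$ and $\alpha_3$ stay in this lattice), and $\alpha_3$ directly by its definition as the smallest such power exceeding $\p(\x_h)/\p(\x_i)\geq 1$. Consequently, every non-terminating Phase 3 step has a multiplier $\alpha \geq (1+\varepsilon)$, so the price of every good in $\x_{\H_{i_t}}$ scales by a factor of at least $(1+\varepsilon)$ at such a step.

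Second, I would establish a per-agent bound: for each $k \in [n]$, the number of Phase 3 steps $t$ at which $k \in \H_{i_t}$ is at most $\log_{(1+\varepsilon)}(m v_{\max})$. The key observation is that individual good prices evolve monotonically non-decreasingly (Phase 2 swaps do not touch prices, and Phase 3 only scales them up), are lower-bounded by $1$ (by integrality of valuations and the initial assignment in Phase 1), and can be shown separately to stay upper-bounded by $m v_{\max}$ throughout. Since at each Phase 3 step with $k \in \H_{i_t}$ the prices of goods in $\x^t_k$ each scale by at least $(1+\varepsilon)$, an accounting at the level of individual goods yields the claimed per-agent bound. Combining these, we obtain
\begin{align*}
\#\text{(Phase 3 steps)} \;\leq\; \sum_{t \text{ Phase 3}} |\H_{i_t}| \;=\; \sum_{k \in [n]} \big|\{t : t \text{ is a Phase 3 step with } k \in \H_{i_t}\}\big| \;\leq\; n \log_{(1+\varepsilon)}(m v_{\max}).
\end{align*}

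The main obstacle is the per-agent accounting in the second ingredient: Phase 2 swaps can remove a high-priced good from $k$'s bundle between two Phase 3 steps in which $k$ appears in the hierarchy, so a naive geometric-growth argument on the aggregate spending $\p(\x_k)$ can break. The resolution is to work at the level of individual good prices---which only increase monotonically---and carefully aggregate the $(1+\varepsilon)$-multiplicative increases into a per-agent statement using the monotone evolution of prices together with the uniform $[1, m v_{\max}]$ range.
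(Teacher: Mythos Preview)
Your first ingredient is sound and matches the paper: under the power-of-$(1+\varepsilon)$ assumption, every non-terminating Phase 3 multiplier satisfies $\alpha \geq (1+\varepsilon)$.

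The second ingredient, however, has a genuine gap. You claim that for each agent $k$, the number of Phase 3 steps with $k \in \H_{i_t}$ is at most $\log_{(1+\varepsilon)}(m v_{\max})$, and propose to derive this from per-good price bounds. But tracking individual good prices gives at best a \emph{per-good} bound: each good's price can be scaled at most $\log_{(1+\varepsilon)}(\text{price range})$ times. Summing over goods yields $\sum_{t} |\x_{\H_{i_t}}| \leq m \cdot \log_{(1+\varepsilon)}(\cdot)$, which bounds the number of Phase 3 steps by $m$ times the log, not $n$ times. To obtain a \emph{per-agent} bound you would need a quantity attached to agent $k$ that (i) multiplicatively increases by $(1+\varepsilon)$ at every Phase 3 step with $k$ in the hierarchy, (ii) never decreases, and (iii) stays in a bounded range. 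Your own obstacle analysis identifies why $\p(\x_k)$ fails (swaps can shrink it); switching to good prices does not fix this, because between two hierarchy appearances $k$ may own entirely different goods, so no single good's price growth witnesses both events. The ``careful aggregation'' you allude to is precisely the missing step, and I do not see how to carry it out.

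A secondary issue: you assert that individual good prices stay bounded by $m v_{\max}$. The paper does not establish this; the per-good price bound it actually proves (\Cref{lem:Bound_on_final_price_of_a_fixed_good}) is $m^2 v_{\max}^3$, via a nontrivial argument that itself relies on the Phase 3 count being finite.

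The paper's route avoids both difficulties by tracking a single global quantity: the spending of the current least spender. This quantity is monotone non-decreasing across all steps (\Cref{lem:Least_Spender's_Spending_Never_Decreases}) and is bounded above by $m v_{\max}$ at every Phase 3 step (\Cref{lem:Cap_on_least-spender's_spending}). For $\alpha = \alpha_1$ steps it jumps by at least $(1+\varepsilon)$ immediately; for $\alpha = \alpha_3$ steps the least spender's identity changes, so a pigeonhole argument over the $n$ possible least spenders shows the quantity must jump by $(1+\varepsilon)$ within every $n$ such steps. This is where the factor $n$ actually enters.
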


The proofs of \Cref{lem:ALG_Phase2_RunningTimeBound,lem:ALG_Phase3_RunningTimeBound} rely on several intermediate results (\Cref{lem:Least_Spender's_Spending_Never_Decreases,lem:Cap_on_least-spender's_spending,lem:ALG_Phase2_Swaps_PolynomialBound,lem:ALG_Phase2_IdentityChangeImplication,lem:ALG_Phase2_NumberOfIdentityChanges}).

\begin{restatable}{lemma}{LeastSpenderSpendingNeverDecreases}
 \label{lem:Least_Spender's_Spending_Never_Decreases}
 The spending of the least spender cannot decrease with time, i.e., for each time step $t$, $\p^t(\x^t_{i_t}) \leq \p^{t+1}(\x^{t+1}_{i_{t+1}})$.
 \end{restatable}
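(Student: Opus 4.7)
I would prove the lemma by case analysis on which of the four events occurs at time step $t$. Two cases are immediate: the \emph{termination} step is vacuous, and an \emph{identity-change} event in Phase 2 alters neither $\x$ nor $\p$, so $\p^t(\x^t_{i_t})$ and $\p^{t+1}(\x^{t+1}_{i_{t+1}})$ are both equal to $\min_k \p^t(\x^t_k)$. For the remaining two event types it suffices to show that every agent's spending at time $t+1$ is at least $\p^t(\x^t_{i_t})$, since then $\p^{t+1}(\x^{t+1}_{i_{t+1}}) = \min_k \p^{t+1}(\x^{t+1}_k) \geq \p^t(\x^t_{i_t})$.

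For a \emph{swap} in Phase 2 the prices are unchanged and only two bundles are affected: a good $j$ moves from an $\varepsilon$-path-violator $h \in \H_{i_t}^\ell$ to its predecessor on the alternating path (which is $i_t$ itself when $\ell=1$, and $h_{\ell-1}$ otherwise). The recipient's spending weakly increases and every non-participating agent is unaffected, so I only need to control the donor $h$; by the $\varepsilon$-path-violator condition
\[
\p^{t+1}(\x^{t+1}_h) \;=\; \p^t(\x^t_h \setminus \{j\}) \;>\; (1+\varepsilon)\,\p^t(\x^t_{i_t}) \;\geq\; \p^t(\x^t_{i_t}),
\]
so the required lower bound is met at every agent.

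For a \emph{price-rise} step in Phase 3 the allocation is unchanged and the prices of the goods in $\x_{\H_{i_t}}$ are scaled by $\alpha = \min(\alpha_1,\alpha_2,\alpha_3)$, so the crux is to check $\alpha \geq 1$. The bound $\alpha_1 \geq 1$ is immediate from the definition of the maximum bang-per-buck ratio $\alpha_h$; $\alpha_3 \geq 1$ holds because the agent $h$ in its definition is drawn from $[n]\setminus\H_{i_t}$ while $i_t$ is the least spender, so $\p(\x_h) \geq \p(\x_{i_t})$ and the defining exponent $s$ is nonnegative; and $\alpha_2 \geq 1$ requires the observation that Phase 3 is entered only when $(\x,\p)$ fails $3\varepsilon$-$\pEFone$, while the end-of-Phase-2 invariant already gives $\varepsilon$-$\pEFone$ from $i_t$'s perspective against every $k \in \H_{i_t}$. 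Because $i_t$ is the least spender, $\pEFone$ from $i_t$ automatically propagates to all pairs, so the failure must come from some $k \notin \H_{i_t}$ with $\min_{j\in\x_k}\p(\x_k\setminus\{j\}) > \p(\x_{i_t})$, which is precisely the quantity over which $\alpha_2$ takes its maximum. With $\alpha \geq 1$ in hand, every agent in $\H_{i_t}$ has its spending scaled up by $\alpha$, every agent outside retains its spending because its bundle is disjoint from $\x_{\H_{i_t}}$, and in both situations $\p^{t+1}(\x^{t+1}_k) \geq \p^t(\x^t_k) \geq \p^t(\x^t_{i_t})$. The main obstacle is precisely the lower bound on $\alpha_2$, which must reach back to the reason Phase 3 is invoked; every other step in the case analysis is routine bookkeeping using the path-violator inequality and the disjointness of $\x_{\H_{i_t}}$ from outside bundles.
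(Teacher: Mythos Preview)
Your proof is correct and follows the same case-analysis approach as the paper's own argument. Your treatment is in fact more thorough: where the paper simply asserts that in Phase~3 ``the spending of the least spender cannot decrease since the prices of the goods can only increase,'' you explicitly verify $\alpha \geq 1$ by checking each of $\alpha_1,\alpha_2,\alpha_3$ separately, correctly tracing the $\alpha_2 \geq 1$ bound back to the combination of the end-of-Phase-2 invariant and the failure of $3\varepsilon$-$\pEFone$.
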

 \begin{proof}
 There are exactly three ways in which the spending of the least spender can change between consecutive time steps: (1) due to a swap operation in Phase 2, (2) a price-rise in Phase 3, or (3) an identity change. In Phase 2, the spending of the least spender can be affected via a swap operation only if it receives a good, which results in an increase in its spending. In that case, the agent losing the good cannot become the new least spender due to the $\varepsilon$-path-violator condition. Similarly, in Phase 3, the spending of the least spender cannot decrease since the prices of the goods can only increase. Finally, any identity change, either in Phase 2 or in Phase 3, occurs only when the spending of the old least spender grows beyond that of the new one, once again implying the stated condition.
\end{proof}

\begin{restatable}{lemma}{CapOnLeastSpendersSpending}
 \label{lem:Cap_on_least-spender's_spending}
 Let $t$ denote a Phase 3 step. Then, $\p^t(\x^t_{i_t}) \leq m v_{\max}$.
\end{restatable}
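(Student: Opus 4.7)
The plan is to derive the bound from the maximum-bang-per-buck (MBB) invariant maintained by \Alg{}. As established in the proof of Lemma~\ref{lem:ALG_outputs_eps-EF1_and_fPO}, the allocation satisfies $\x^t_i \subseteq \MBB_i^t$ for every agent $i$ at every time step $t$. In particular, for each good $j \in \x^t_{i_t}$ held by the least spender, we have $p_j^t = v_{i_t,j}/\alpha_{i_t}^t$, yielding
\[
\p^t(\x^t_{i_t}) \;=\; \sum_{j \in \x^t_{i_t}} p_j^t \;=\; \frac{v_{i_t}(\x^t_{i_t})}{\alpha_{i_t}^t}.
\]
Since $v_{i_t}(\x^t_{i_t}) \leq \sum_{j \in [m]} v_{i_t,j} \leq m \cdot v_{\max}$ trivially, the target bound reduces to showing $\alpha_{i_t}^t \geq 1$ whenever $t$ is a Phase 3 step.

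To establish this inequality, I would argue by induction over the algorithmic events that the stronger invariant ``$\alpha_k \geq 1$ for every agent $k$ with a nonempty bundle'' holds throughout execution. The base case is the end of Phase 1: the welfare-maximizing assignment places each $j$ with $\arg\max_k v_{k,j}$ and sets $p_j = v_{\text{owner},j}$, so for the owner $v_{\text{owner},j}/p_j = 1$, giving $\alpha_{\text{owner}} \geq 1$. Phase 2 swap operations leave prices unchanged and therefore preserve every $\alpha_k$. The delicate case is Phase 3, where the prices of the goods in $\x_{\H_i}$ are scaled by a factor $\alpha = \min(\alpha_1,\alpha_2,\alpha_3)$, which can reduce $\alpha_h$ to $\alpha_h/\alpha$ for each hierarchy member $h$, and can also reduce $\alpha_k$ for outside agents $k$ whose MBB intersects $\x_{\H_i}$.

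The main obstacle is showing that the invariant $\alpha_{i_{t'}} \geq 1$ is restored at the next Phase 3 step $t'$ following a price rise. I expect the argument to proceed via a case analysis on which of the three caps on $\alpha$ is binding: when $\alpha = \alpha_1$, the rise terminates precisely when a new MBB edge forms, ensuring no hierarchy agent's ratio drops strictly below the bang-per-buck of some good outside the hierarchy; when $\alpha = \alpha_3$, the identity of the least spender switches to an outside agent $h$ whose spending (and whose relevant MBB ratio) was untouched by this rise, so the invariant transfers inductively; and $\alpha = \alpha_2$ corresponds to termination and hence is not followed by another Phase 3 step. Combining these observations with the monotonicity lemma (Lemma~\ref{lem:Least_Spender's_Spending_Never_Decreases}) should give $\alpha_{i_t}^t \geq 1$ for every Phase 3 step, from which the claimed bound $\p^t(\x^t_{i_t}) \leq v_{i_t}(\x^t_{i_t}) \leq m \cdot v_{\max}$ follows immediately.

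A possible alternative route, should the invariant above prove too delicate in its cleanest form, would be to work directly with the quantity $v_{i_t}(\x^t_{i_t}) - \p^t(\x^t_{i_t})$ and show that it remains non-negative for the least spender through a telescoping argument across consecutive Phase 3 steps, again exploiting the bound $\alpha \leq \alpha_3 \leq (1+\varepsilon)\p^t(\x^t_h)/\p^t(\x^t_{i_t})$ on the price-rise factor together with the identity changes it triggers.
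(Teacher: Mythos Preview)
Your approach has a genuine gap: the invariant $\alpha_{i_t}^t \geq 1$ (equivalently, $\p^t(\x^t_{i_t}) \leq v_{i_t}(\x^t_{i_t})$) is not maintained by \Alg{}. At the end of Phase~1 every agent's MBB ratio is exactly $1$, but each price rise in Phase~3 scales the MBB ratio of every hierarchy member down by the factor $\alpha$, and nothing forces it back up. Your $\alpha=\alpha_1$ case stops the rise when some hierarchy agent $h$ acquires a new MBB edge to an outside good $j'$, so that $\alpha_h^{\text{new}} = v_{h,j'}/p_{j'}$; but this ratio is typically \emph{at most} $1$ (for instance, $j'$ may still carry its Phase~1 price $p_{j'}=\max_k v_{k,j'} \geq v_{h,j'}$, or $j'$ may itself have been inflated in an earlier round). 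Your $\alpha=\alpha_3$ case hands the least-spender role to an outside agent, but that agent may well have sat inside the hierarchy during an earlier price rise and already have $\alpha<1$. In short, once prices exceed valuations the invariant is lost, and the paper in fact later proves only the much weaker bound $p_j \leq m^2 v_{\max}^3$ on individual prices (Lemma~\ref{lem:Bound_on_final_price_of_a_fixed_good}), which is incompatible with $\alpha_k \geq 1$ holding globally. Your alternative route is the same claim in disguise, since $v_{i_t}(\x^t_{i_t}) - \p^t(\x^t_{i_t}) \geq 0$ is equivalent to $\alpha_{i_t}^t \geq 1$ under the MBB condition.

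The paper's proof takes an entirely different route that never looks at MBB ratios. It observes that as long as Phase~3 is reached, the set $E_t$ of $3\varepsilon$-violators is nonempty, and trivially $\p^t(\x^t_{i_t}) \leq \p^t(\x^t_h)$ for any $h \in E_t$. The real work is a chain of structural lemmas showing that (i) $E_t$ can only shrink across Phase~3 steps, (ii) agents in $E_t$ never receive goods, and (iii) agents in $E_t$ are never inside the hierarchy, so their goods' prices are never raised. Together these force $\p^t(\x^t_h) \leq \p^{0}(\x^{0}_h)$, and the right-hand side is bounded by the Phase~1 total price $\sum_j \max_k v_{k,j} \leq m\,v_{\max}$.
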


\begin{restatable}{lemma}{PhaseTwoSwapPolyBound}
 \label{lem:ALG_Phase2_Swaps_PolynomialBound}
 \Alg{} can perform at most $\poly(n,m)$ number of consecutive swap operations before either the identity of the least spender changes or a Phase 3 step occurs.
\end{restatable}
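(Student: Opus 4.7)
My plan is to fix a maximal run of consecutive Phase~2 swaps (one with no identity change of the least spender and no Phase~3 step in between) and to exhibit a $\poly(n,m)$-bounded monovariant that strictly decreases with each swap. Throughout such a run, the price vector $\p$ is fixed (prices change only in Phase~3), so the bipartite $\MBB$ graph is fixed as well; only the allocation edges get modified by swaps. Moreover, by \Cref{lem:Least_Spender's_Spending_Never_Decreases}, the spending of the least spender is non-decreasing throughout, which rules out certain regressions.

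The natural candidate monovariant is
\[
\Phi_t \;:=\; \sum_{k \in [n]} \level_t(k),
\]
where $\level_t(k)$ denotes the level of agent $k$ in the hierarchy rooted at the least spender at time $t$, set to $n$ whenever $k \notin \H_{i_t}$. Thus $\Phi_t \in \{0,1,\ldots,n^2\}$. The key structural observation is that when a swap moves good $j$ from $h$ at level $\ell$ to $h_{\ell-1}$ at level $\ell-1$, the old alternating path from the least spender to $h_{\ell-1}$ does not use the modified edge $(h,j)$ and hence remains valid after the swap. So $h_{\ell-1}$ stays at level at most $\ell-1$ in the rebuilt hierarchy, and the good $j$ itself transitions to be owned by an agent of strictly lower level. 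More generally, any agent whose old shortest alternating path avoided $(h,j)$ preserves a path of no greater length in the new augmented graph, so its level cannot increase.

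The main obstacle is the subcase in which agent $h$ (and possibly a subtree of its descendants in the old hierarchy) had all of its shortest alternating paths passing through $(h,j)$, and hence gets pushed to a strictly higher level --- possibly all the way to $n$ --- in the rebuilt hierarchy. Such rebuild effects could in principle counteract the gain obtained from lowering $j$'s level. I plan to address this by either (i) refining $\Phi$ to a lexicographically ordered vector over the multiset of levels, using a weighting that makes decreases at a lower level strictly dominate any compensating increases at higher levels, or (ii) showing that each such ``disconnection event'' strictly shrinks the set of agents reachable in the hierarchy, and hence can occur at most $\O(n)$ times in a row before it becomes impossible to disconnect further, at which point the simple monovariant $\Phi$ decreases monotonically.

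Once the monovariant argument goes through, the range of $\Phi$ gives an $\O(n^2)$ bound on the number of swaps in any run; together with a bound of $\O(n)$ disconnection epochs, the total count is $\poly(n,m)$. Since each swap (including one invocation of \BuildHierarchy{} and the scan for $\varepsilon$-path-violators along alternating paths) runs in polynomial time, this yields the stated $\poly(n,m)$ bound on consecutive swaps before either the least spender's identity changes or a Phase~3 step occurs.
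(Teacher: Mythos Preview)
There is a genuine gap, and it comes from a direction error in your monovariant. After a swap that moves $j$ from $h$ (at level $\ell$) to $h_{\ell-1}$, \emph{no agent's level ever goes down}: a new shortest alternating path would have to use the freshly created allocation edge $(j,h_{\ell-1})$, but any agent $h'$ with $j\in\MBB_{h'}$ already had (before the swap) an alternating path to $h$ via $j$, forcing $\level(h')\ge \ell-1$; consequently every new path through $j$ reaches $h_{\ell-1}$ only at depth $\ge \ell$, which is no improvement over its existing level $\ell-1$, and the same reasoning propagates to any agent beyond $h_{\ell-1}$. So your $\Phi_t=\sum_k \level_t(k)$ is \emph{non-decreasing}, not strictly decreasing, and your plan~(i) (``decreases at a lower level dominate increases at a higher level'') is vacuous because there are no decreases to dominate anything.

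Worse, even as an increasing monovariant $\Phi_t$ is not strict: if $h$ has another critical good $j'\ne j$ on a length-$2\ell$ alternating path, then after losing $j$ agent $h$ stays at level $\ell$, $h_{\ell-1}$ stays at level $\ell-1$, and every other agent's level is unchanged; $\Phi$ does not move at all, so no bound follows. Your plan~(ii) does not help here either, since this case involves no disconnection. What the paper does is precisely to patch this stall: it introduces the set $G_{h,t}$ of \emph{critical goods} for each agent and takes the potential $f(t)=\sum_{h\ne i_t}\bigl(m\,(n-\level(h,t))+|G_{h,t}|\bigr)$. A level increase of any agent drops $f$ by at least $m$, which swamps any change in the $|G_{\cdot,t}|$ terms; and in the stall case where all levels are unchanged, $|G_{h,t}|$ strictly shrinks (you removed $j$ from it and added nothing), so $f$ still drops by at least $1$. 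The missing idea in your proposal is exactly this secondary ``critical goods'' tiebreaker.
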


The proofs of \Cref{lem:Cap_on_least-spender's_spending,lem:ALG_Phase2_Swaps_PolynomialBound} rely on a number of intermediate results, and we defer them to \Cref{subsubsec:MissingProof_Cap_on_least-spender's_spending,subsubsec:Proof_Of_ALG_Phase2_Swaps_PolynomialBound} respectively.

\begin{lemma}
 \label{lem:ALG_Phase2_IdentityChangeImplication}
 Consider a series of consecutive time steps consisting entirely of Phase 2 operations, i.e., either swap operations or change in the identity of the least spender. Let $t$ be a time step at which an agent $i$ ceases to be the least spender, and let $t' > t$ be the first time step after $t$ at which $i$ once again becomes the least spender. Let $(\x,\p)$ and $(\x',\p')$ denote the corresponding allocation and price vectors. Then, either $\x_i \subsetneq \x'_i$ or $\p'(\x'_i) \geq (1+\varepsilon) \p(\x_i)$, or both.
\end{lemma}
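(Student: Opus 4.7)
My plan hinges on two facts: throughout Phase 2 the price vector is fixed, so $\p = \p'$; and by \Cref{lem:Least_Spender's_Spending_Never_Decreases}, the spending of the reigning least spender is non-decreasing over time, so at every intermediate time $\tau \in [t,t']$ we have $\p(\x^\tau_{i_\tau}) \geq \p(\x^t_{i_t}) = \p(\x_i)$.

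The first step would be to show that the event at time $t$ which ousts $i$ from the least-spender seat must be a swap that delivers a new good to $i$. Since $i$ is the least spender at time $t$, the $\varepsilon$-path-violator inequality $\p(\x_i \setminus \{j\}) > (1+\varepsilon)\p(\x_i)$ cannot hold for any $j \in \x_i$, so $i$ can only ever appear in a swap as the predecessor $h_{\ell-1}$, never as the violator. Moreover, a swap that does not involve $i$ cannot dislodge $i$: the violator $h$'s post-swap spending is strictly greater than $(1+\varepsilon)\p(\x_i)$ by the defining path-violator condition, and the recipient's spending only grows, so no one's spending drops to or below $\p(\x_i)$. Hence the swap at time $t$ must be at level $\ell = 1$ and hand some good $j$ to $i$, giving $\x^{t+1}_i = \x_i \cup \{j\}$.

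The second step is a case split on whether $i$ loses any good in the open interval $(t,t')$. In Case A ($i$ loses nothing), the bundle of $i$ can only grow after time $t$, so $\x'_i \supseteq \x_i \cup \{j\} \supsetneq \x_i$, yielding the first conclusion. In Case B, let $\tau$ be the latest time in $(t,t')$ at which $i$ loses a good; at $\tau$, agent $i$ is an $\varepsilon$-path-violator in some hierarchy $\H_{i_\tau}$, and the defining inequality combined with \Cref{lem:Least_Spender's_Spending_Never_Decreases} yields
\[
\p(\x^{\tau+1}_i) > (1+\varepsilon)\p(\x^\tau_{i_\tau}) \geq (1+\varepsilon)\p(\x_i).
\]
By the maximality of $\tau$, from time $\tau+1$ to $t'$ agent $i$ only possibly gains goods, so its spending is non-decreasing, giving $\p(\x'_i) \geq \p(\x^{\tau+1}_i) > (1+\varepsilon)\p(\x_i)$, which is the second conclusion.

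The main obstacle I anticipate is the first step: carefully ruling out exotic scenarios in which $i$ loses the least-spender status without any swap directly involving $i$ (for instance, via tie-breaking shifts or cascaded spending changes during the Refresh step). Making this airtight requires leaning on the strict $(1+\varepsilon)$-gap in the path-violator condition, the fact that prices are frozen in Phase 2, and the lexicographic tie-breaking rule used in Line~\ref{algline:Refresh_LeastSpender}. Once Step 1 is in hand, Step 2 is a short and transparent case analysis.
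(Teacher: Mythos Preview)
Your proposal is correct and follows essentially the same approach as the paper's proof: both first argue that the identity change at time $t$ must stem from $i$ receiving a good via a swap, then case-split on whether $i$ subsequently loses a good, and in the losing case use the $\varepsilon$-path-violator inequality at the \emph{last} loss time together with \Cref{lem:Least_Spender's_Spending_Never_Decreases} to get the multiplicative $(1+\varepsilon)$ gain. Your treatment of Step~1 is in fact more careful than the paper's, which simply asserts the observation without justifying why a swap not touching $i$ cannot dislodge it; your argument via the strict $(1+\varepsilon)$ gap is exactly the right way to close that.
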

 \begin{proof}
 Observe that a change in the identity of the least spender during Phase 2 is always preceded by the previous least spender receiving a good via a swap operation. This means that agent $i$ must receive a good at time $t$. If, in addition, agent $i$ does not lose any good during the time interval between $t$ and $t'$, then we already have that $\x_i \subsetneq \x'_i$ and the claim follows. Therefore, for the rest of the proof, we will assume that agent $i$ loses one or more goods between $t$ and $t'$.
 
 Among all time steps between $t$ and $t'$ at which agent $i$ loses a good, let $\overline{t}$ denote the last one. Let $j \in [m]$ denote the good lost by agent $i$ at time $\overline{t}$, and let $k$ denote the least spender at that time. Also, let $(\overline{\x},\overline{\p})$ denote the allocation and price vector just before $i$ loses the good $j$.
 
  \Cref{lem:Least_Spender's_Spending_Never_Decreases} that the spending of the least spender cannot decrease with time. Thus,
 \begin{equation}
 	\overline{\p}(\overline{\x}_k) \geq \p(\x_i).
 	\label{eqn:ALG_Phase2_temp1}
 \end{equation}
 
 Since agent $i$ loses the good $j$ at time $\overline{t}$, it must be an $\varepsilon$-path-violator with respect to $(\overline{\x},\overline{\p})$. Hence,
 \begin{equation}
 	\overline{\p}(\overline{\x}_i \setminus \{j\}) > (1+\varepsilon) \overline{\p}(\overline{\x}_k).
 	\label{eqn:ALG_Phase2_temp2}
 \end{equation}
 
 Finally, since $j$ is the last good lost by $i$ before the time step $t'$, we have that
 \begin{equation}
 	\p'(\x'_i) \geq \overline{\p}(\overline{\x}_i \setminus \{j\}).
 	\label{eqn:ALG_Phase2_temp3}
 \end{equation}
 
 \Cref{eqn:ALG_Phase2_temp1,eqn:ALG_Phase2_temp2,eqn:ALG_Phase2_temp3} together give us the desired result.
\end{proof}

\begin{lemma}
 \label{lem:ALG_Phase2_NumberOfIdentityChanges}
The identity of the least spender can change at most $\poly(n, m, \frac{1}{\varepsilon}) \cdot \ln m v_{\max}$ number of times during Phase 2 before a Phase 3 step occurs.
\end{lemma}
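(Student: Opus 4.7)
The plan is to fix an agent $i$ and bound the number of distinct time steps at which $i$ becomes the least spender during a single maximal run of Phase~2 (i.e., the stretch of consecutive Phase~2 steps before the next Phase~3 step), and then sum over all $n$ agents. Throughout such a run the price vector $\p$ is frozen, which is the key simplification we shall exploit.

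First I would enumerate, for a fixed agent $i$, the time steps $t_1 < t_2 < \cdots$ in this Phase~2 run at which $i$ becomes the least spender after having previously ceased to be one, exactly in the sense of \Cref{lem:ALG_Phase2_IdentityChangeImplication}. That lemma supplies, for each consecutive pair $(t_k, t_{k+1})$, one of two options: either \emph{(A)} $\x^{t_k}_i \subsetneq \x^{t_{k+1}}_i$, or \emph{(B)} $\p(\x^{t_{k+1}}_i) \geq (1+\varepsilon)\, \p(\x^{t_k}_i)$. Since prices are constant within the run, option (B) is a genuine multiplicative increase in the scalar quantity $\p(\x_i)$. By \Cref{lem:Least_Spender's_Spending_Never_Decreases} combined with \Cref{lem:Cap_on_least-spender's_spending} (applied at the next Phase~3 step), this quantity stays bounded above by $m v_{\max}$; moreover, because prices are integer powers of $(1+\varepsilon)$, any positive spending is at least $1$. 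Hence option (B) can be triggered at most $O(\log_{1+\varepsilon}(m v_{\max})) = O(\ln(m v_{\max})/\varepsilon)$ times over the course of the Phase~2 run.

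Next I would bound the number of (A)-type transitions. Between any two consecutive occurrences of (B), and also before the first and after the last, every consecutive pair $(t_k, t_{k+1})$ must be of type (A), forming a strictly ascending chain in the subset lattice of $[m]$ of length at most $m+1$. Multiplying the two bounds gives that agent $i$ can become the least spender at most $(m+1)\cdot O(\ln(m v_{\max})/\varepsilon) = O(m \ln(m v_{\max})/\varepsilon)$ times. Since each identity change is charged to precisely one agent newly becoming the least spender, summing over the $n$ agents produces a total of $O(nm \ln(m v_{\max})/\varepsilon) = \poly(n,m,1/\varepsilon)\cdot \ln(m v_{\max})$ identity changes, as required.

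The hard part will be the careful bookkeeping behind option (A): the bundle $\x_i$ can shrink and regrow many times \emph{between} the sampled times $t_k$ and $t_{k+1}$, so the strict inclusion only holds at the sampled times, not pointwise. The chain-in-lattice argument must therefore be applied to the sampled subsequence only, which is clean but worth being explicit about. A secondary subtlety is the case where $i$ starts Phase~2 with an empty bundle (so $\p(\x_i)=0$) and the multiplicative factor in (B) becomes degenerate; this is handled by the observation that a Phase~2 swap never empties a bundle (the $\varepsilon$-path-violator condition forces the residual bundle to have strictly positive spending), so at most one extra transition from zero to positive spending per agent has to be absorbed into the bound.
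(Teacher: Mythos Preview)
Your proposal is correct and follows essentially the same approach as the paper: both invoke \Cref{lem:ALG_Phase2_IdentityChangeImplication} to obtain the strict-inclusion/multiplicative-growth dichotomy, bound the multiplicative growth via \Cref{lem:Least_Spender's_Spending_Never_Decreases} and \Cref{lem:Cap_on_least-spender's_spending}, and cap the strict-inclusion chains by $m$. The only cosmetic difference is that you do a per-agent accounting and sum over $n$ agents, whereas the paper phrases the same counting via a global pigeonhole argument on blocks of $(n+1)$ identity changes; your chain-in-the-subset-lattice bookkeeping is in fact slightly more explicit than the paper's terse ``for every $m(n+1)$ identity changes'' step.
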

 \begin{proof}
  Recall from \Cref{lem:ALG_Phase2_IdentityChangeImplication} that each time \Alg{} cycles back to an agent $i$ as the least spender, either the allocation of $i$ strictly grows by at least one good, or its spending grows at least by a multiplicative factor of $(1+\varepsilon)$. By pigeonhole principle, for every $(n+1)$ identity change events, \Alg{} must cycle back to some (fixed) agent. Therefore, for every $(n+1)$ consecutive identity change events (possibly interspersed with swap operations), either that agent obtains an extra good (without losing any) or its spending grows by a factor of $(1+\varepsilon)$. This observation, along with the fact that the spending of the least spender can never decrease with time (\Cref{lem:Least_Spender's_Spending_Never_Decreases}), implies that for every $m(n+1)$ identity changes, the spending of the least spender must increase by a factor of $(1+\varepsilon)$. Furthermore, we know from \Cref{lem:Cap_on_least-spender's_spending} that the spending of the least spender at the beginning of each Phase 3 step is at most $m v_{\max}$. Hence, assuming that the initial spending of the least spender is at least $1$ (refer to \Cref{subsec:CornerCases} for explanation of why this assumption is without loss of generality), there can be at most $\poly(m,n) \cdot \log_{(1+\varepsilon)} m v_{\max}$ identity changes during Phase 2 before a Phase 3 step occurs. By using $\ln (1+\varepsilon) \geq \varepsilon - \varepsilon^2$, we obtain the desired result.
\end{proof}

\ALGPhaseTwoRunningTimeBound*
 \begin{proof}
Follows from \Cref{lem:ALG_Phase2_Swaps_PolynomialBound,lem:ALG_Phase2_NumberOfIdentityChanges}.
\end{proof}

\ALGPhaseThreeRunningTimeBound*
 \begin{proof}
Recall that each Phase 3 step increases the prices of the goods owned by the members of the hierarchy by a multiplicative factor of $\alpha$, where $\alpha = \min\{\alpha_1,\alpha_2,\alpha_3\}$. Since the algorithm terminates if $\alpha = \alpha_2$, we will assume for the rest of the proof that $\alpha$ is equal to either $\alpha_1$ or $\alpha_3$.

\emph{Price-rise by $\alpha = \alpha_1$}: In order to bound the number of price-rise steps with $\alpha = \alpha_1$, we will show that each such step multiplicatively increases the spending of the least spender by a positive integral power of $(1+\varepsilon)$. Then, by using \Cref{lem:Least_Spender's_Spending_Never_Decreases,lem:Cap_on_least-spender's_spending}, we can conclude that there can be at most $\log_{(1+\varepsilon)} m v_{\max}$ price-rise steps with $\alpha = \alpha_1$.\footnote{We are assuming here that the initial spending of the least spender is at least $1$; refer to \Cref{subsec:CornerCases} for an explanation of why this assumption is without loss of generality.}

 Notice that since the valuations are power-of-$(1+\varepsilon)$, the initial prices of all the goods at the end of Phase 1 are also power-of-$(1+\varepsilon)$. We claim that if all prices are power-of-$(1+\varepsilon)$ prior to the price-rise step with $\alpha = \alpha_1$, then the same continues to hold after the price-rise. Indeed, if all prices are power-of-$(1+\varepsilon)$, then all bang per buck ratios must also be (possibly negative) integral powers of $(1+\varepsilon)$. Moreover, $\alpha_1$ is itself a ratio of two such bang per buck ratios, and thus it must also be an integral power of $(1+\varepsilon)$. Each price-rise by $\alpha = \alpha_1$ multiplicatively increments the prices by a power of $(1+\varepsilon)$. Furthermore, this increment must be strict because the price-rise step introduces a new $\MBB{}$ edge between a member of the hierarchy and a good outside the hierarchy. Hence, the spending of the least spender grows by a positive integral power of $(1+\varepsilon)$.

\emph{Price-rise by $\alpha = \alpha_3$}: We will now provide a bound on the number of price-rise steps with $\alpha = \alpha_3$. While the price-rise in this case is also in integral powers of $(1+\varepsilon)$ (by definition of $\alpha_3$), the spending of the least spender may not increase by the same factor after \emph{each} such step due to a possible change in the identity of the least spender. However, by pigeonhole principle, the algorithm must cycle back to the same agent after $n$ steps, and therefore the spending of the least spender must grow multiplicatively (at least) by a positive integral power of $(1+\varepsilon)$ after every $n$ price-rise steps with $\alpha = \alpha_3$. By a similar reasoning as for the case $\alpha = \alpha_1$, we get the desired result.
\end{proof}

\subsubsection{Proof of Lemma~\ref{lem:Cap_on_least-spender's_spending}}
\label{subsubsec:MissingProof_Cap_on_least-spender's_spending}

The proof of \Cref{lem:Cap_on_least-spender's_spending} relies on a series of intermediate results (\Cref{lem:EnvySetShrinks,lem:pEF1envied_agents_do_not_receive_goods,lem:Hierarchy__pEF1free_after_Phase2,lem:Adversary's_Spending_Never_Increases}), which we state and prove below. We will use $E_t \subset [n]$ to denote the set of all $3\varepsilon$-violators at time $t$. That is, $$E_t \coloneqq \left\{ h \in [n] : (1+3\varepsilon) \p^t(\x^t_{i_t}) < \p^t(\x^t_h \setminus \{j\}) \text{ for every good } j \in \x^t_h \right\}.$$

\begin{restatable}{lemma}{EnvySetShrinks}
 \label{lem:EnvySetShrinks}
 Let $t$ and $t'$ be two time steps at which \Alg{} performs a price-rise in Phase 3 such that $t < t'$. Then, $E_{t'} \subseteq E_t$.
\end{restatable}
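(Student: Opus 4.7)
The plan is to prove the contrapositive: any agent $h$ with $h \notin E_t$ also satisfies $h \notin E_{t'}$. To this end, I would track across every event of \Alg{} a ``witness'' good $j^\star \in \x_h$ whose removal certifies that $h$ is not a $3\varepsilon$-violator, and argue event-by-event that such a witness always survives (or is replaced by another one).

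The opening observation is that, by \Cref{lem:Least_Spender's_Spending_Never_Decreases}, the spending $\p(\x_i)$ of the least spender is non-decreasing in time, so the threshold $(1+3\varepsilon)\p(\x_i)$ only grows between $t$ and $t'$. Hence, for any agent $h$ whose bundle and the prices of its goods never change between $t$ and $t'$, a witness at $t$ remains a witness at $t'$. This immediately handles every agent that never enters the hierarchy during the interval.

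For events that do affect $h$, I would analyze each case separately. (i) A Phase 3 price-rise scales the prices of $\x_{\H_{i_s}}$ by $\alpha$; since $i_s \in \H_{i_s}$, both sides of $h$'s witness inequality scale by $\alpha$ whenever $h \in \H_{i_s}$, and the left side is unchanged while the right side only grows when $h \notin \H_{i_s}$. (ii) In a Phase 2 swap where $h$ surrenders a good $j$: if $j \neq j^\star$, the witness $j^\star$ survives and removing it yields $\p^{s+1}(\x^{s+1}_h \setminus \{j^\star\}) = \p^s(\x^s_h \setminus \{j^\star\}) - p_j \leq (1+3\varepsilon)\p^{s+1}(\x^{s+1}_{i_{s+1}})$; if $j = j^\star$, then $\p^{s+1}(\x^{s+1}_h) = \p^s(\x^s_h \setminus \{j^\star\}) \leq (1+3\varepsilon)\p^{s+1}(\x^{s+1}_{i_{s+1}})$, so \emph{any} remaining good serves as witness. (iii) An identity change, whether in Phase 2 or as part of a Phase 3 triggered by $\alpha=\alpha_3$, leaves bundles intact while only raising $\p(\x_i)$, so the witness inequality is preserved (for $\alpha_3$, its definition as the smallest power of $(1+\varepsilon)$ exceeding the relevant spending ratio gives the required quantitative control).

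The main obstacle is a Phase 2 swap in which $h$ \emph{gains} a good, i.e., $h = h_{\ell-1}$ on an alternating path $P = (i, j_1, h_1, \ldots, j_{\ell-1}, h_{\ell-1}, j, h_\ell)$ with $h_\ell$ an $\varepsilon$-path-violator. My plan is to take the freshly acquired $j$ as the new witness, so that $\p^{s+1}(\x^{s+1}_h \setminus \{j\}) = \p^s(\x^s_h)$, and then to show $\p^s(\x^s_h) \leq (1+3\varepsilon)\p^s(\x^s_{i_s})$. Since the Phase 2 loop has already cleared $\varepsilon$-path-violators at levels $1, \ldots, \ell-1$ of the current hierarchy, we know $\p(\x_h \setminus \{j_{\ell-1}\}) \leq (1+\varepsilon)\p(\x_i)$, so what remains is to control $p_{j_{\ell-1}}$. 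Bounding $p_{j_{\ell-1}}$ tightly by iterating the non-path-violator condition back along the alternating path, without losing more than the $2\varepsilon$ multiplicative slack between the $\varepsilon$-path-violator and $3\varepsilon$-violator thresholds, is where I expect the analysis to be most delicate.
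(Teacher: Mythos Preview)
Your invariant is too strong and does not hold. You are trying to maintain, at \emph{every} time step, a witness $j^\star \in \x_h$ with $\p(\x_h \setminus \{j^\star\}) \leq (1+3\varepsilon)\p(\x_{i})$. But in the critical case where $h$ gains a good $j$ via a swap, your proposed witness $j$ requires $\p(\x_h^{\mathrm{old}}) \leq (1+3\varepsilon)\p(\x_{i})$, and this is simply false in general. The non-path-violator condition at level $\ell-1$ only gives $\p(\x_h \setminus \{j_{\ell-1}\}) \leq (1+\varepsilon)\p(\x_{i})$; it says nothing about $p_{j_{\ell-1}}$, which can be orders of magnitude larger than $\p(\x_{i})$ (a single expensive good whose removal certifies non-violation). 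Iterating the condition back along the path gives you bounds of the form $\p(\x_{h_k} \setminus \{j_k\}) \leq (1+\varepsilon)\p(\x_{i})$ for each $k$, but none of these bounds $p_{j_{\ell-1}}$ itself, so you cannot squeeze out the $2\varepsilon$ slack you need.

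The paper's proof does not attempt to maintain non-violation at every step. Instead it allows $h$ to \emph{become} a $3\varepsilon$-violator momentarily upon receiving $j$, and then argues about what the algorithm does next: since the alternating path through the old critical good $j' = j_{\ell-1}$ is still intact and $h$ is now the unique $\varepsilon$-path-violator at level $\leq \ell$, the very next swap removes $j'$ from $h$. After this second swap, $h$'s bundle is $\x_h \cup \{j\} \setminus \{j'\}$, and now $j$ is a valid witness because $\p((\x_h \cup \{j\} \setminus \{j'\}) \setminus \{j\}) = \p(\x_h \setminus \{j'\}) \leq (1+\varepsilon)\p(\x_{i})$. The key idea you are missing is this two-swap ``gain then lose'' oscillation; without it, the gain case cannot be closed. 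Since $E_t$ is only evaluated at Phase~3 steps, the transient violation between the two swaps is harmless.
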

\begin{proof}
It suffices to prove \Cref{lem:EnvySetShrinks} for consecutive price-rise steps $t$ and $t'$ (possibly including Phase 2 events between them). Suppose, for contradiction, that there exists an agent $k \in E_{t'} \setminus E_t$. Our proof consists of two main arguments: First, we will show that $k$ cannot turn into a $3\varepsilon$-violator due to the price-rise at time $t$. (This would imply that the only way $k$ can turn into a $3\varepsilon$-violator is via a swap operation.) Second, we will show that if there is a swap operation at time $\overline{t}$ (for some $t < \overline{t} < t'$) that turns $k$ into a $3\varepsilon$-violator, then there is a subsequent swap operation at $(\overline{t} + 1)$ that turns it back into a non-$\varepsilon$-violator. This will contradict the fact that $k$ is a $3\varepsilon$-violator at the beginning of the price-rise step at $t'$.

We will start by showing that $k$ cannot turn into a $3\varepsilon$-violator due to the price-rise at time $t$. We perform case analysis for whether or not $k \in \H_{i_t}$. To begin with, suppose that $k \in \H_{i_t}$. Then, $k$ cannot be an $\varepsilon$-violator \emph{before} the price-rise at time $t$ (otherwise it would also be an $\varepsilon$-path-violator, and $\Alg{}$ would continue with Phase 2). Thus,
\begin{equation*}
	(1+\varepsilon) \p^t(\x^t_{i_t}) \geq \p^t(\x^t_k \setminus \{j\}) \text{ for some } j \in \x^t_k.
\end{equation*}

A similar condition continues to hold \emph{after} the price-rise, since prices are always raised uniformly.
\begin{equation}
\label{eqn:No_eps_violator_after_price_rise}
	(1+\varepsilon) \p^{t+1}(\x^{t+1}_{i_t}) \geq \p^{t+1}(\x^{t+1}_k \setminus \{j\}) \text{ for some } j \in \x^{t+1}_k.
\end{equation}

Therefore, at time $(t+1)$, agent $k$ cannot be an $\varepsilon$-violator with respect to any agent in $\H_{i_t}$. It is, however, possible that $k$ is an $\varepsilon$-violator at time $(t+1)$ with respect to some agent \emph{outside} $\H_{i_t}$. Nevertheless, we will show that $k$ cannot be a $3\varepsilon$-violator. Specifically, let $h$ be the least spender outside $\H_{i_t}$ at time $t$, i.e., $h \in \arg\min_{a \in [n] \setminus \H_{i_t}} \p^t(\x^t_a)$. Recall that the condition $\alpha \leq \alpha_3$ in Line~\ref{algline:alpha_defn} of \Alg{} implies that
\begin{equation*}
	(1+\varepsilon) \p^{t+1}(\x^{t+1}_h) \geq \p^{t+1}(\x^{t+1}_{i_t}).
\end{equation*}

Along with \Cref{eqn:No_eps_violator_after_price_rise}, this gives
\begin{equation*}
	(1+\varepsilon)^2 \p^{t+1}(\x^{t+1}_h) \geq \p^{t+1}(\x^{t+1}_k \setminus \{j\}) \text{ for some } j \in \x^{t+1}_k.
\end{equation*}

Since $\varepsilon < 1$, we have that $(1+\varepsilon)^2 < 1+3\varepsilon$, which implies that $k$ is not a $3\varepsilon$-violator at time $(t+1)$ with respect to any agent.

Now suppose that $k \notin \H_{i_t}$. Since $k \notin E_t$ by assumption, and the spending of the agents outside the hierarchy remains unaffected due to the price-rise, we once again get that $k \notin E_{t+1}$. This proves that $k$ cannot turn into a $3\varepsilon$-violator due to the price-rise at time $t$.

We will now proceed to show that if there is a swap operation at time $\overline{t}$ (for some $t < \overline{t} < t'$) that turns $k$ into a $3\varepsilon$-violator, then there is a subsequent swap operation at $(\overline{t} + 1)$ that turns it back into a non-$\varepsilon$-violator. Suppose that $k$ (at level $\ell$ in the hierarchy) becomes a $3\varepsilon$-violator after receiving a good $j$ via a swap at time step $\overline{t}$. Recall that a swap operation involves transferring a good from an agent at a higher level to another agent at a lower level in the hierarchy. Furthermore, \Alg{} performs a swap for an agent at level $(\ell+1)$ only if no agent in the levels $1,2,\dots,\ell$ is an $\varepsilon$-path violator. Therefore, $k$ cannot be an $\varepsilon$-path violator before the swap, i.e., there exists a good $j'$ on an alternating path of length $2\ell$ from $i_{\overline{t}}$ to $k$ such that
\begin{equation*}
	(1+\varepsilon) \p^{\overline{t}}(\x^{\overline{t}}_{i_{\overline{t}}}) \geq \p^{\overline{t}}(\x^{\overline{t}}_k \setminus \{j'\}).
\end{equation*}
Moreover, since $k$ becomes a $3\varepsilon$-violator (and hence, an $\varepsilon$-path violator) after receiving the good $j$, we have that
\begin{equation}
\label{eqn:temp2_EnvySetShrinks}
	(1+3\varepsilon) \p^{\overline{t}+1}(\x^{\overline{t}+1}_{i_{\overline{t}+1}}) < \p^{\overline{t}+1}(\x^{\overline{t}}_k \cup \{j\} \setminus \{j'\}).
\end{equation}
Since neither the identity (or allocation) of the least spender nor the price-vector changes in this process, \Cref{eqn:temp2_EnvySetShrinks} can be rewritten as
\begin{align}
\label{eqn:temp3_EnvySetShrinks}
	(1+3\varepsilon) \p^{\overline{t}}(\x^{\overline{t}}_{i_{\overline{t}}}) < \p^{\overline{t}}(\x^{\overline{t}}_k \cup \{j\} \setminus \{j'\}).
\end{align}
Notice that the swap involving the good $j$ does not affect the alternating path from $i_{\overline{t}}$ to $k$ via the good $j'$, and therefore $k$ continues to be at level $\ell$. In fact, $k$ is the only agent on level $\ell$ or below that is an $\varepsilon$-path-violator. Therefore, in a subsequent swap operation, the good $j'$ will be taken away from $k$, resulting in the allocation $(\x^{\overline{t}}_k \cup \{j\} \setminus \{j'\})$ for $k$. After this step, agent $k$ once again becomes a non-$\varepsilon$-violator with respect to the good $j$, providing the desired contradiction.
\end{proof}

\begin{restatable}{lemma}{EnviedAgentsDoNotGetGoods}
 \label{lem:pEF1envied_agents_do_not_receive_goods}
  Let $t$ and $t'$ be two time steps at which \Alg{} performs a price-rise in Phase 3 such that $t < t'$. Then, for any agent $k \in E_{t'}$, $\x^{t'}_k \subseteq \x^t_k$.
\end{restatable}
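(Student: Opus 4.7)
}
The plan is to reduce the statement to the case of two consecutive Phase~3 steps using Lemma~\ref{lem:EnvySetShrinks}, and then to handle the base case by a monotonicity argument that combines the ``swap-back'' phenomenon already used inside the proof of Lemma~\ref{lem:EnvySetShrinks} with the monotonicity of the least spender's spending given by Lemma~\ref{lem:Least_Spender's_Spending_Never_Decreases}. For the reduction, enumerate the Phase~3 steps in $[t,t']$ as $t=t_0<t_1<\cdots<t_r=t'$; Lemma~\ref{lem:EnvySetShrinks} gives the chain $E_{t_r}\subseteq E_{t_{r-1}}\subseteq\cdots\subseteq E_{t_0}$, so any $k\in E_{t'}$ lies in every $E_{t_q}$. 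If I establish $\x^{t_{q+1}}_k\subseteq\x^{t_q}_k$ whenever $k\in E_{t_{q+1}}$ and $t_q,t_{q+1}$ are consecutive Phase~3 steps, chaining these containments immediately delivers $\x^{t'}_k\subseteq\x^t_k$.

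For the base case, fix consecutive Phase~3 steps $t,t'$: between them only swaps and identity changes occur and the price vector $\p$ is fixed. Arguing by contradiction, suppose some $k\in E_{t'}$ receives a good during $[t,t')$, and let $\tau$ be the \emph{last} such time. At $\tau$, $k$ sits at some level $\ell$ of the hierarchy of $i_\tau$, one level below an $\varepsilon$-path-violator, and receives a good $g$ via the swap. Since the algorithm processes lower levels first, $k$ itself cannot be an $\varepsilon$-path-violator along its canonical alternating path at time $\tau$, so for the good $g'\in\x^\tau_k$ on that path we have $\p(\x^\tau_k\setminus\{g'\})\le(1+\varepsilon)\p(\x^\tau_{i_\tau})$. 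I would then split on whether $k$ becomes a $3\varepsilon$-violator at $\tau+1$: if not, set $\sigma\coloneqq\tau+1$ and record that $f(k)\coloneqq\min_{j\in\x_k}\p(\x_k\setminus\{j\})\le(1+3\varepsilon)\p(\x^\sigma_{i_\sigma})$ already holds; if so, mirror the reasoning in the second half of the proof of Lemma~\ref{lem:EnvySetShrinks} to conclude that a subsequent swap at some time $\tau'>\tau$ removes $g'$ from $k$, so setting $\sigma\coloneqq\tau'+1$ I obtain $f(k)\le\p(\x^\sigma_k\setminus\{g\})=\p(\x^\tau_k\setminus\{g'\})\le(1+\varepsilon)\p(\x^\sigma_{i_\sigma})$.

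In either case, the bound $f(k)\le(1+3\varepsilon)\p(\x^\sigma_{i_\sigma})$ holds at time $\sigma$. From $\sigma$ onward $k$ cannot receive further goods (by our choice of $\tau$), and a short calculation shows that losing a good can only non-increase $f(k)$, while Lemma~\ref{lem:Least_Spender's_Spending_Never_Decreases} gives that the least spender's spending is non-decreasing over time. Propagating the bound from $\sigma$ to $t'$ therefore yields $f(k)\le(1+3\varepsilon)\p(\x^{t'}_{i_{t'}})$, certifying $k\notin E_{t'}$ and producing the desired contradiction. The main obstacle will be in the $3\varepsilon$-violator case: I must carefully verify that the structural claim from Lemma~\ref{lem:EnvySetShrinks}---that $k$ remains at level $\ell$ after the swap at $\tau$ and is the unique $\varepsilon$-path-violator at level $\ell$ or below---continues to hold through intervening identity changes or swaps not involving $k$, so that the algorithm is indeed forced to remove $g'$ from $k$ at $\tau'$ before any other event can disturb the bound.
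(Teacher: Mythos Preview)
Your proposal is correct and follows essentially the same route as the paper's own proof (which is given only as a sketch): both argue that an agent receiving a good in Phase~2 cannot be an $\varepsilon$-path-violator, hence not a $3\varepsilon$-violator, at that moment, and then invoke the swap-back mechanism from the proof of Lemma~\ref{lem:EnvySetShrinks} together with monotonicity of the least spender's spending to propagate this to~$t'$. Your explicit reduction to consecutive Phase~3 steps and the $f(k)$-monotonicity bookkeeping are more careful than the paper's sketch, but the underlying argument is identical; the obstacle you flag about intervening events is in fact a non-issue, since the Lemma~\ref{lem:EnvySetShrinks} reasoning shows the swap-back occurs at the very next time step.
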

 \begin{proof}(Sketch)
 The proof is very similar to that of \Cref{lem:EnvySetShrinks}. Suppose, for contradiction, that there exists a good $j \in \x^{t'}_k \setminus \x^t_k$ for some agent $k \in E_{t'}$. Then, agent $k$ must have acquired the good $j$ via a swap operation at time $\overline{t}$ (between $t$ and $t'$). This means that agent $k$ cannot be an $\varepsilon$-path violator at time $\overline{t}$, and thus cannot be a $3\varepsilon$-violator. By an argument similar to \Cref{lem:EnvySetShrinks}, we can argue that $k$ cannot be a $3\varepsilon$-violator at any subsequent price-rise event, contradicting $k \in E_{t'}$.
\end{proof}

\begin{restatable}{lemma}{HierarchyPriceEnvyFreeAfterPhaseTwo}
 \label{lem:Hierarchy__pEF1free_after_Phase2}
  At the beginning of each price-rise step at time $t$, $E_t \cap \H_{i_t} = \emptyset$.
\end{restatable}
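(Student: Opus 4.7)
The plan is to trace back the condition under which Phase 3 is entered and extract a loop invariant of Phase 2 strong enough to exclude $3\varepsilon$-violators from the hierarchy.

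First, I would observe that the algorithm invokes a price-rise step in Phase 3 at time $t$ only when the preceding execution of the Phase 2 while-loop exited \emph{without} $(\x^t, \p^t)$ having become $3\varepsilon$-$\pEFone$; otherwise the algorithm would have returned at the guarded if-statement rather than proceeding to Phase 3. The only remaining exit from that loop is for the level counter $\ell$ to grow past every non-empty level of $\H_{i_t}$. Since $\ell$ is incremented only in the else branch (i.e., when no agent at the current level is an $\varepsilon$-path-violator), this exit certifies the loop invariant that at every level $\ell \geq 1$ of $\H_{i_t}$, no agent is an $\varepsilon$-path-violator along the alternating path that witnessed its placement at that level.

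Next, I would use this invariant to rule out $3\varepsilon$-violators inside $\H_{i_t}$. Fix any $h \in \H_{i_t}^\ell$ with $\ell \geq 1$ and let $P = (i_t, j_1, h_1, \ldots, j_\ell, h)$ be the alternating path witnessing $h$'s placement. The fact that $h$ is not an $\varepsilon$-path-violator along $P$ gives $\p^t(\x^t_h \setminus \{j_\ell\}) \leq (1+\varepsilon)\,\p^t(\x^t_{i_t})$. Thus the good $j_\ell \in \x^t_h$ already witnesses that $h$ is not even an $\varepsilon$-violator, and in particular not a $3\varepsilon$-violator, so $h \notin E_t$. The only remaining member of $\H_{i_t}$ is the source $i_t$ itself, at level $0$; because $i_t$ is the least spender, for every $j \in \x^t_{i_t}$ we have $\p^t(\x^t_{i_t} \setminus \{j\}) \leq \p^t(\x^t_{i_t}) < (1+3\varepsilon)\,\p^t(\x^t_{i_t})$, so $i_t$ cannot satisfy the $3\varepsilon$-violator inequality either. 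Combining these cases yields $E_t \cap \H_{i_t} = \emptyset$.

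I do not anticipate a significant technical obstacle; the proof is essentially bookkeeping once the Phase 2 loop invariant is made explicit. The one subtlety worth flagging is that ``not an $\varepsilon$-path-violator along $P$'' is a statement only about the particular good $j_\ell$ at the end of $P$, not about an arbitrary good in $\x^t_h$. However, this is exactly what the violator definition requires: avoiding $\varepsilon$-violator status only calls for the \emph{existence} of some good whose removal brings the spending within a factor of $(1+\varepsilon)$ of $\p^t(\x^t_{i_t})$, and $j_\ell$ is such a good.
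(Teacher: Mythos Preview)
Your proof is correct and takes essentially the same approach as the paper's: both hinge on the observation that a $3\varepsilon$-violator in $\H_{i_t}$ would necessarily be an $\varepsilon$-path-violator along its defining alternating path, contradicting the exit condition of the Phase~2 loop. The paper packages this as a two-line proof by contradiction (a $3\varepsilon$-violator $k\in\H_{i_t}$ has $\p^t(\x^t_k\setminus\{j\})>(1+3\varepsilon)\p^t(\x^t_{i_t})$ for \emph{every} $j\in\x^t_k$, in particular for the path good $j'$, so $k$ is an $\varepsilon$-path-violator and Phase~2 would have swapped rather than exited), whereas you unfold the same reasoning as a direct argument via an explicit loop invariant and handle the source $i_t$ separately; the content is the same.
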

\begin{proof}
Suppose, for contradiction, that there exists an agent $k \in E_t \cap \H_{i_t}$. Since $k \in E_t$, we have $(1+3\varepsilon) \p^t(\x^t_{i_t}) < \p^t(\x^t_k \setminus \{j\})$ for every good $j \in \x^t_k$. Furthermore, since $k \in \H_{i_t}$, there must exist an alternating path from the least spender $i_t$ to $k$ that involves some good $j' \in \x^t_k$. Thus, $k$ is also an $\varepsilon$-path violator, which means that \Alg{} will perform a swap operation in Phase 2 at time $t$, as opposed to a price-rise operation.
\end{proof}

\begin{restatable}{lemma}{SpendingOfAdversaryCanNeverIncrease}
 \label{lem:Adversary's_Spending_Never_Increases}
 Let $t$ and $t'$ be two time steps at which \Alg{} performs a price-rise such that $t < t'$. Then, the spending of any agent $k \in E_{t'}$ at time $t'$ is at most that at time $t$, i.e., $\p^{t'}(\x^{t'}_k) \leq \p^t(\x^t_k)$.
\end{restatable}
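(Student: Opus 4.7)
The plan is to show that for every good $j \in \x^{t'}_k$, its price is identical at times $t$ and $t'$, and then use $\x^{t'}_k \subseteq \x^t_k$ to conclude. Enumerate the price-rise steps in the window $[t,t')$ as $\tau_1 = t < \tau_2 < \dots < \tau_q$, where $\tau_q$ is the last price-rise strictly before $t'$ (if $q=1$ the argument below simplifies). Prices are only modified during Phase~3 and in Phase~2 (swaps, identity changes) remain fixed, so it suffices to control the prices across each $\tau_r$.

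First I would assemble the following facts about $k$ at each $\tau_r$. By \Cref{lem:EnvySetShrinks} applied to the pair $(\tau_r, t')$, we have $k \in E_{t'} \subseteq E_{\tau_r}$; in particular $k$ is a $3\varepsilon$-violator at every intermediate price-rise step. Combined with \Cref{lem:Hierarchy__pEF1free_after_Phase2} (which ensures that no $3\varepsilon$-violator lies in the current hierarchy at a price-rise), this yields $k \notin \H_{i_{\tau_r}}$ for each $r \in \{1,\dots,q\}$. Next, by \Cref{lem:pEF1envied_agents_do_not_receive_goods} applied to the pair $(\tau_r, t')$, we get $\x^{t'}_k \subseteq \x^{\tau_r}_k$ for each $r$. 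So every good $j \in \x^{t'}_k$ was already owned by $k$ at each intermediate price-rise step.

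Now fix any $j \in \x^{t'}_k$. For each $r$, since $j \in \x^{\tau_r}_k$ and $k \notin \H_{i_{\tau_r}}$, the good $j$ does not lie in $\x^{\tau_r}_{\H_{i_{\tau_r}}}$, and hence its price is not modified by the Phase~3 update at time $\tau_r$. Between consecutive price-rises (and between $\tau_q$ and $t'$), only Phase~2 events occur, which do not change prices. Iterating, $p^{t'}_j = p^{\tau_q}_j = p^{\tau_{q-1}}_j = \cdots = p^{\tau_1}_j = p^t_j$. Summing over $j \in \x^{t'}_k$,
\begin{align*}
\p^{t'}(\x^{t'}_k) \;=\; \sum_{j \in \x^{t'}_k} p^{t'}_j \;=\; \sum_{j \in \x^{t'}_k} p^t_j \;=\; \p^t(\x^{t'}_k) \;\leq\; \p^t(\x^t_k),
\end{align*}
where the last inequality is non-negativity of prices combined with $\x^{t'}_k \subseteq \x^t_k$ (which itself follows from \Cref{lem:pEF1envied_agents_do_not_receive_goods} applied to $(t,t')$).

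The main conceptual point is that the three preceding lemmas work in concert: EnvySetShrinks propagates $k \in E_{t'}$ backwards to every intermediate price-rise, the hierarchy lemma then forces $k$ to be outside each intermediate hierarchy (so the price-rise skips $k$'s holdings), and the ``envied agents don't receive goods'' lemma guarantees that the particular goods $\x^{t'}_k$ we care about were indeed in $k$'s bundle at those moments. The only mildly subtle step is the last point: one might worry that $k$ could temporarily lose and re-acquire a good $j \in \x^{t'}_k$ between some $\tau_r$ and $t'$, during which its price could rise. Applying \Cref{lem:pEF1envied_agents_do_not_receive_goods} with $\tau_r$ playing the role of the earlier price-rise rules this out, and that is precisely where the lemma is needed.
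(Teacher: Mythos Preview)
Your proposal is correct and follows essentially the same approach as the paper: both arguments combine \Cref{lem:EnvySetShrinks}, \Cref{lem:Hierarchy__pEF1free_after_Phase2}, and \Cref{lem:pEF1envied_agents_do_not_receive_goods} to show that every good in $\x^{t'}_k$ stays with $k$ and outside the hierarchy at each intermediate price-rise, so its price never changes. The only cosmetic difference is that the paper reduces without loss of generality to \emph{consecutive} price-rise steps and handles a single interval, whereas you explicitly enumerate all intermediate price-rises $\tau_1,\dots,\tau_q$ and iterate; unfolding the paper's reduction yields exactly your argument.
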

 \begin{proof}
 Assume, without loss of generality, that $t$ and $t'$ correspond to consecutive price-rise steps (possibly including Phase 2 events between them). From \Cref{lem:pEF1envied_agents_do_not_receive_goods}, we have that $\x^{t'}_k \subseteq \x^t_k$. Therefore, it suffices to show that $\p^{t'}(\x^{t'}_k) = \p^t(\x^{t'}_k)$, i.e., the prices of the goods in the set $\x^{t'}_k$ do not vary between $t$ and $t'$. We know from \Cref{lem:EnvySetShrinks} that $E_t \supseteq E_{t'}$, and hence $k \in E_{t}$. \Cref{lem:Hierarchy__pEF1free_after_Phase2} then implies that $k \notin \H_{i_t}$, which means that the price-rise step at time $t$ does not affect the prices of the goods owned by $k$ at time $t$, namely $\x^t_k$. Since $\x^{t'}_k \subseteq \x^t_k$, the same holds for the goods in $\x^{t'}_k$. The lemma now follows since, by assumption, there is no other price-rise step between $t$ and $t'$.
\end{proof}

We will finish this section with a proof of \Cref{lem:Cap_on_least-spender's_spending}.

\CapOnLeastSpendersSpending*
 \begin{proof}
Notice that the set $E_t$ must be non-empty at the beginning of each price-rise step, otherwise $(\x,\p)$ is already $3\varepsilon$-$\pEFone{}$ and \Alg{} will terminate before entering Phase 3. By definition of the set $E_t$, the spending of the least spender $i_t$ is at most the spending of any agent $h \in E_t$, i.e., $\p^t(\x^t_{i_t}) \leq \p^t(\x^t_{h})$ for every $h \in E_t$. From \Cref{lem:Adversary's_Spending_Never_Increases}, we know that $\p^t(\x^t_h) \leq \p^{0}(\x^{0}_h)$, where the superscript ``$0$'' denotes the first price-rise event. The spending of agent $h$ just before the first price-rise event, namely $\p^{0}(\x^{0}_h)$, is at most the sum of prices of all goods right after Phase 1 (this is because Phase 2 does not affect the prices of the goods). The price of any good at the end of Phase 1 is simply the highest valuation of an agent for that good. Therefore, $\p^t(\x^t_{i_t}) \leq \p^t(\x^t_{h}) \leq m v_{\max}$.
\end{proof}

\subsubsection{Proof of Lemma~\ref{lem:ALG_Phase2_Swaps_PolynomialBound}}
\label{subsubsec:Proof_Of_ALG_Phase2_Swaps_PolynomialBound}

Let $\H_{i_t}$ denote the hierarchy at time step $t$. The \emph{level} of an agent $h \in [n]$ at time $t$ is defined as
\[
\level(h,t) \coloneqq
\begin{cases}
    \ell  & \quad \text{if } h \in \H_{i_t}^\ell, \text{ and}\\
    n     & \quad \text{if } h \notin \H_{i_t}.
  \end{cases}
\]
Notice that for any agent $h \in \H_{i_t}$, $\level(h,t) \leq n-1$, since there are $n$ agents overall, and the least spender $i_t$ is assumed to be at level $0$.

We say that a good $j \in [m]$ is \emph{critical} for an agent $h \in \H_{i_t}^\ell \setminus \{i_t\}$ at time $t$ if $j \in \x^t_h$, and there is an alternating path of length $2\ell$ from the least spender $i_t$ to $h$ that includes the good $j$. We denote the set of all critical goods for an agent $h$ at time $t$ by $G_{h,t}$. It is easy to see that the set $G_{h,t}$ is non-empty if and only if $h \in \H_{i_t}^\ell \setminus \{i_t\}$ for some $\ell \in \{1,2,\dots\}$.

\PhaseTwoSwapPolyBound*

 \begin{proof}
 We will show via a potential function argument that each swap operation causes a drop of at least $1$ in the value of a non-negative function that is bounded above by $\poly(m,n)$, implying that \Alg{} can perform at most $\poly(m,n)$ such swaps.

Suppose that \Alg{} performs a swap operation at time step $t$ that involves transferring a good $j \in [m]$ from an agent $h_\ell$ at level $\ell$ to another agent $h_{\ell-1}$ at level $\ell - 1$. Consider the function $f(t)$ defined as
$$f(t) \coloneqq \sum\limits_{h \in [n] \setminus \{i_t\}} m \left( n - \level(h,t) \right) + |G_{h,t}|.$$
Notice that $f(t)$ is always non-negative, and its maximum value is at most $\poly(n,m)$, since $\level(h,t) \geq 0$ and $|G_{h,t}| \leq m$ for every agent $h \in [n] \setminus \{i_t\}$. The rest of the proof will separately analyze the contribution of the agents $h_\ell$, $h_{\ell - 1}$, and any other agent $h \in [n] \setminus \{i_t,h_{\ell-1},h_\ell\}$ to the term $f(t)$.

First, consider the agent $h_\ell$. We will argue that either $\level(h_\ell,t+1) = \level(h_\ell,t)$ and $G_{h_\ell,t+1} \subsetneq G_{h_\ell,t}$, or $\level(h_\ell,t+1) > \level(h_\ell,t)$. Notice that both conditions cause the potential term to drop by at least $1$ (all else being constant): The first condition simply causes a drop of at least $1$ in the term $|G_{h,t}|$ without a change in the level term; the second term causes a drop of at least $1$ in the level term, which is scaled by a factor of $m$, and thus cannot be offset by any possible increase in the critical goods term. The reason why one of these conditions always holds is as follows: Notice that $j$ is a critical good for $h_\ell$ before the swap takes place, i.e., $j \in G_{h,t}$. If $j$ is the only good in the set $G_{h,t}$, then, after the swap operation, either $h_\ell$ moves to a higher level (than $\ell$) in the hierarchy, or it ceases to be a member of the hierarchy. Either way, we have that $\level(h_\ell,t+1) > \level(h_\ell,t)$. On the other hand, if $j$ is not the only good in the set $G_{h,t}$, then the level of $h_\ell$ does not change upon the loss of the good $j$. 
 However, since $h_\ell $ does not receive any good during the swap operation, and none of the previously non-critical goods become critical, we have that $G_{h_\ell,t+1} \subsetneq G_{h_\ell,t}$.
 
Next, consider the agent $h_{\ell - 1}$. We will show that $\level(h_{\ell - 1},t+1) = \level(h_{\ell - 1},t)$ and $G_{h_{\ell - 1},t+1} = G_{h_{\ell - 1},t}$, implying that all else being constant, the potential cannot increase. Notice that a swap operation can (possibly) create one or more new alternating paths from the least spender $i$ to the agent $h_{\ell-1}$. We will argue that any such path must be of length at least $2\ell$, and therefore it can neither lower the level nor make $j$ a critical good for $h_{\ell-1}$. Indeed, any new path to $h_{\ell-1}$ must involve the good $j$ and another agent $h'$ such that $j \in \MBB_{h'}$. Since $j$ was previously owned by $h_\ell$, there must already exist an alternating path from $i$ to $h_\ell$ via $h'$ \emph{before} the swap takes place. Since the level of $h_\ell$ before the swap is $\ell$, any such path must be of length at least $2(\ell-1)$, implying that $\level(h',t) \geq \ell - 1$. This, in turn, means that any new path to $h_{\ell-1}$ (via $h'$) must be of length at least $2\ell$, which is strictly longer than the existing path of length $2(\ell-1)$ that defines its level.
 
Finally, we will show that the contribution of any agent $h \in [n] \setminus \{i,h_{\ell-1},h_\ell\}$ to the potential $f$ does not increase due to the swap operation. To begin with, note that $\level(h,t+1) \geq \level(h,t)$: the only way in which the level of agent $h$ can decrease is via a newly created alternating path (at time $t+1$) from $i$ to $h$ that involves the good $j$ (this is because any path from $i$ to $h$ that does not involve the good $j$ must also have existed prior to the swap operation). Any such path must consist of two parts: (1) an alternating path from $i$ to agent $h_{\ell-1}$ and (2) an alternating path $P$ from $h_{\ell-1}$ to $h$. Since the level of $h_{\ell-1}$ remains unchanged and all the edges in $P$ were present before the swap operation, we get that $\level(h,t+1) \geq \level(h,t)$.

An analogous argument shows that the distance (i.e., the length of the shortest alternating path) of any good $j' \neq j$ from $i$ cannot decrease after the swap operation. This observation implies that the containment $G_{h,t+1} \subseteq G_{h,t}$ does not hold only if the level of $h$ strictly increases: consider a $j' \in G_{h,t+1} \setminus G_{h,t}$ (if one exists) and note that $j'$ is allocated to $h$ even before the swap (the allocation of agent $h$ does not change at time $t$). Since $j'$ was not a critical good for $h$ before the swap, its distance from $i$ must have be strictly greater than $\level(h,t)$. The distance of $j'$ from $i$ defines the level of agent $h$ at time $t+1$, hence we get that $\level(h,t+1) > \level(h,t)$. In the potential $f$, for any agent $h$ an increase in the level compensates for any possible change in the cardinality of the critical set of goods. Hence, the contribution of any agent $h \in [n] \setminus \{i,h_{\ell-1},h_\ell\} $ to $f$ cannot increase at time $t$. 

Thus, we have shown that for each agent $h \in [n] \setminus \{i\}$, the contribution to the potential does not increase and, in fact, for agent $h_\ell$ it strictly decreases. This forces the potential function to drop by at least $1$ after each swap operation, giving us the desired result.
\end{proof}

\subsection{Proof of \Cref{lem:Small_delta_PO}}
\label{subsec:OmittedProofs_ALG_analysis_generalcase}


Let $\p$ denote the price vector returned by $\Alg{}$ on the input instance $\I'$. Our first result in this section (\Cref{lem:Bound_on_final_price_of_a_fixed_good}) provides a bound on the final price of a good $j \in [m]$. 

\begin{restatable}{lemma}{FixedGoodPriceUpperBound}
\label{lem:Bound_on_final_price_of_a_fixed_good}
For any good $j \in [m]$, $p_j \leq m^2 v_{\max}^3$, where $p_j$ is the price of good $j$ upon termination of \Alg{}.
\end{restatable}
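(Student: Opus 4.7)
The plan is to reduce bounding $p_j$ to bounding agent spendings, and then bound each spending by combining the analysis of the final Phase~3 step with the terminal $3\varepsilon$-$\pEFone$ property. Throughout we use that at every step of \Alg{}, the pair $(\x,\p)$ is a Fisher market equilibrium with $\x_k\subseteq\MBB_k$ for every $k$, so market-clearing gives $p_j\leq \p(\x_k)$ for the owner $k$ of $j$, and the $\MBB$ identity gives $\alpha_k = v_k(\x_k)/\p(\x_k)$, hence $p_j = v_{k,j}/\alpha_k \leq v_{\max}\,\p(\x_k)/v_k(\x_k)$ whenever $v_k(\x_k)\geq 1$. It therefore suffices to bound $\max_k \p(\x_k)$ at termination.

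Next I would bound the terminal least spender's spending. Termination occurs during a Phase~3 step with $\alpha = \alpha_2\leq\alpha_3$; let $i$ and $\q$ denote the least spender and price vector immediately before the final price rise, and let $h\in\arg\min_{k\notin \H_i}\q(\x_k)$. By the definition of $\alpha_3$, $\alpha\cdot \q(\x_i) \leq (1+\varepsilon)\,\q(\x_h)$. Since $(\x,\q)$ is not yet $3\varepsilon$-$\pEFone$, the set $E$ of $3\varepsilon$-violators is non-empty, and \Cref{lem:Hierarchy__pEF1free_after_Phase2} forces it to lie entirely outside $\H_i$; applying \Cref{lem:Adversary's_Spending_Never_Increases} back to the first Phase~3 step (at which spendings are bounded by the sum of initial prices, i.e.\ by $m v_{\max}$) then yields $\q(\x_h)\leq m v_{\max}$. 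Consequently the terminal spending of $i$, which upper-bounds that of the new least spender $i'$, is at most $(1+\varepsilon)\, m v_{\max}$.

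Finally I would propagate the bound to every agent. The terminal $3\varepsilon$-$\pEFone$ guarantee provides, for each $k$, some $j_k\in\x_k$ with $\p(\x_k)\leq (1+3\varepsilon)\,\p(\x_{i'})+p_{j_k}$. The first term is $\O(m v_{\max})$ by the previous step; for the tail term, applying the $\MBB$ identity to an agent who values $j_k$ (for example to $i'$ when $v_{i',j_k}\geq 1$, via $\alpha_{i'}\geq v_{i',j_k}/p_{j_k}$) gives $p_{j_k}\leq v_{\max}\,\p(\x_{i'})/v_{i'}(\x_{i'}) = \O(m v_{\max}^2)$. Combining yields $\p(\x_k) = \O(m v_{\max}^2)$ for every $k$, and substituting once more into $p_j \leq v_{\max}\,\p(\x_k)/v_k(\x_k)$ gives $p_j = \O(m v_{\max}^3)$, comfortably within the claimed $m^2 v_{\max}^3$. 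The main obstacle I foresee is the degenerate cases --- singleton bundles (for which the $\pEFone$ bound is vacuous), bundles with $v_k(\x_k)<1$, and agents $k$ for whom no good in $\x_{i'}$ is valued --- each of which forces one to chase an $\MBB$-allocation edge from $k$ to a bundle whose spending is already bounded; keeping these hops under control, and verifying that the slack factor of $m$ suffices to absorb them, is the crux of the argument.
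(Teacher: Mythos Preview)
Your approach is genuinely different from the paper's, and it has a concrete gap in the final step that does not seem reparable within the outline you sketch.

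The key error is in your third step, where you try to bound the ``tail'' price $p_{j_k}$ using the least spender $i'$. You write that $\alpha_{i'}\geq v_{i',j_k}/p_{j_k}$ yields $p_{j_k}\leq v_{\max}\,\p(\x_{i'})/v_{i'}(\x_{i'})$. But the inequality goes the other way: $\alpha_{i'}\geq v_{i',j_k}/p_{j_k}$ gives $p_{j_k}\geq v_{i',j_k}/\alpha_{i'}$, a \emph{lower} bound on $p_{j_k}$. To get an \emph{upper} bound on $p_{j_k}$ via an $\MBB$ ratio you must use the owner $k$ of $j_k$, for whom $p_{j_k}=v_{k,j_k}/\alpha_k = (v_{k,j_k}/v_k(\x_k))\,\p(\x_k)$; but this is circular, since $\p(\x_k)$ is exactly the quantity you are trying to bound. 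In particular, when $\x_k=\{j_k\}$ is a singleton, the $3\varepsilon$-$\pEFone$ inequality $\p(\x_k\setminus\{j_k\})\leq (1+3\varepsilon)\p(\x_{i'})$ is vacuous and gives no control on $\p(\x_k)=p_{j_k}$ at all. Your suggested fix of ``chasing an $\MBB$-allocation edge'' does not break the circularity: following such an edge replaces $p_{j_k}$ by another unknown price, and nothing in the terminal equilibrium prevents a chain of high-priced singletons.

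The paper's proof avoids this circularity entirely by taking a dynamic rather than a static view: it fixes the good $j$ and tracks the sequence $t_1<\cdots<t_\ell$ of all Phase~3 steps at which $p_j$ rises, with multiplicative factors $\alpha_{t_1},\ldots,\alpha_{t_\ell}$. The crux is a coupling claim showing that between consecutive such steps the least spender's spending grows by at least the same factor, i.e.\ $s(t_{r+1})\geq \alpha_{t_r}s(t_r)$; combined with the cap $s(t_\ell)\leq m v_{\max}$ (\Cref{lem:Cap_on_least-spender's_spending}) this bounds the product $\alpha_{t_1}\cdots\alpha_{t_{\ell-1}}$ by $m v_{\max}$, and a separate bound $\alpha_{t_\ell}\leq m v_{\max}$ on the final factor then yields $p_j\leq p_j^0\cdot m^2 v_{\max}^2\leq m^2 v_{\max}^3$. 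Proving the coupling claim requires a fairly delicate ``marking'' argument that tracks which agents have experienced a price-rise and shows that the least spender at $t_{r+1}$ must have been inside the hierarchy at some earlier price-rise in $[t_r,t_{r+1})$. This structural input is precisely what your terminal-snapshot approach is missing.
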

\begin{proof}
Let $\{t_1,t_2,\dots,t_\ell\}$ denote the set of all time steps (during the execution of $\Alg{}$) at which the price of good $j$ increases, and let  $\{ \alpha_{t_1},\alpha_{t_2},\dots,\alpha_{t_\ell} \}$ denote the corresponding set of multiplicative price jumps. The set of least spenders at these time steps is given by $\{i_{t_1},i_{t_2},\dots,i_{t_\ell}\}$. Let $s(t) \coloneqq \p^t(\x^t_{i_t})$ denote the spending of the least spender at time step $t$.\footnote{Recall from \Cref{subsec:ALG_analysis_powersofr} that spending `at time step $t$' refers to the spending \emph{before} the event at time step $t$ takes place.} Our proof relies on the following two claims:
\begin{enumerate}[\text{Claim} 1:]
	\item $s(t_2) \geq \alpha_{t_1} s(t_1)$, $s(t_3) \geq \alpha_{t_2} s(t_2)$, $\dots$, $s(t_\ell) \geq \alpha_{t_{\ell-1}} s(t_{\ell-1})$,  and
	\item $\alpha_{t_\ell} \leq m v_{\max}$.
\end{enumerate}
Claim 1 shows that the spending of the least spender (say, $s (t_{i+1})$) \emph{before} a price-rise involving the good $j$ is at least that of the previous least spender (i.e., the agent that was the least spender for the previous price-rise involving the good $j$) \emph{after} the corresponding price-rise (i.e., $\alpha_{t_i} s(t_i)$).\footnote{Recall that in a price-rise step involving good $j$, the spending of the least spender grows by the same multiplicative factor as the price of good $j$.} Claim 2 provides a bound on the final price-rise involving the good $j$. Before proving these claims, we will describe how they lead to the desired relation $p_j \leq m^2 v_{\max}^3$.

First, observe that Claim 1 implies that $s(t_\ell) \geq \alpha_{t_1} \alpha_{t_2} \ldots \alpha_{t_{\ell-1}} s(t_1)$. Using \Cref{lem:Cap_on_least-spender's_spending} for the time step $t_\ell$, we have that $s(t_\ell) \leq m v_{\max}$. This implies that $\alpha_{t_1} \alpha_{t_2} \ldots \alpha_{t_{\ell-1}} \leq m v_{\max}$, since the \emph{initial} spending of each agent (i.e., spending at the end of Phase 1) is assumed to be at least $1$ (\Cref{subsec:CornerCases}), and the spending of the least spender cannot decrease with time (\Cref{lem:Least_Spender's_Spending_Never_Decreases}). Along with Claim 2, this gives $\alpha_{t_1} \alpha_{t_2} \ldots \alpha_{t_{\ell-1}} \alpha_{t_{\ell}} \leq m^2 v^2_{\max}$. The final price of good $j$ is given by $p_j = p^{0}_j \alpha_{t_1} \alpha_{t_2} \ldots \alpha_{t_{\ell-1}} \alpha_{t_\ell}$, where $p^{0}_j$ denotes the price of good $j$ at the end of Phase 1. The desired bound on $p_j$ follows by observing that the initial price of any good is at most $v_{\max}$.

We will first prove Claim 2. Recall that the price-rise factor $\alpha$ in \Alg{} is chosen as $\alpha =\min\{\alpha_1,\alpha_2,\alpha_3\}$; thus, in particular, $\alpha \leq \alpha_2$. Therefore, in order to prove a bound on $\alpha_\ell$, we can assume without loss of generality that $\alpha = \alpha_2$. Under this assumption, the price-rise step in this case involves raising the spending of the least spender $i_t$ until the allocation becomes $3\varepsilon$-$\pEFone$ (or equivalently, until the set $E_t$ of the $\varepsilon$-violators at time step $t$ becomes empty). Using arguments similar to those in the proof of \Cref{lem:Cap_on_least-spender's_spending}, we can show that spending of the highest spender in $E_t$ can be at most its initial spending (i.e., spending at the end of Phase 1), hence at most $m v_{\max}$. Thus, the price-rise factor $\alpha_\ell$ is also at most $m v_{\max}$. This proves Claim 2.

We will now prove Claim 1 for the time-steps $t_1$ and $t_2$ (that is, we will show that $s(t_2) \geq \alpha_{t_1} s(t_1)$); the analysis for other time-steps follows analogously. Note that if the identity of the least-spender does not change after a price-rise at $t_1$, then we have that $s(t_1+1) = \alpha_{t_1} s(t_1)$. Additionally, since the spending of the least spender is non-decreasing with time (\Cref{lem:Least_Spender's_Spending_Never_Decreases}), we have that $s(t_2) \geq s(t_1+1) = \alpha_{t_1} s(t_1)$, and the claim follows. Therefore, we will assume, without loss of generality, that the identity of the least-spender necessarily changes after the price-rise at $t_1$.

For ease of presentation, we will use $[t_a,t_b] \coloneqq \{t_a, t_a + 1, \dots, t_b - 1, t_b\}$ and $[t_a,t_b) \coloneqq \{t_a, t_a + 1, \dots, t_b - 1\}$ to denote the set of \emph{all} time-steps (both Phase 2 and Phase 3) between $t_a$ and $t_b$ including and excluding $t_b$ respectively.\footnote{Here, $t_a$ and $t_b$ are \emph{any} two time-steps and do not necessarily correspond to price-rise steps involving good $j$.} In addition, we will say that an agent $k \in [n]$ \emph{experiences price-rise at time $t$} if $k$ belongs to the hierarchy during the price-rise at time-step $t$, i.e., $k \in \H_{i_t}$. Similarly, we will say that agent $k$ experiences price-rise \emph{during} $[t_a,t_b]$ (respectively, $[t_a,t_b)$) if $k$ experiences price-rise for some $t \in [t_a,t_b]$ (respectively, $t \in [t_a,t_b)$).

Let $\tau \in [t_1,t_2]$ be a time-step (either Phase 2 or Phase 3) such that
\begin{enumerate}
	\item the least-spender at $\tau$, namely $i_\tau$, experiences price-rise during $[t_1,\tau)$, and
	\item there does not exist $\hat{\tau} \in [t_1,t_2]$ with $\hat{\tau} < \tau$ such that the least-spender at $\hat{\tau}$, namely $i_{\hat{\tau}}$, experiences price-rise during $[t_1,\hat{\tau})$.
\end{enumerate}

Among all the time-steps in $[t_1,\tau)$ at which the agent $i_\tau$ (as defined above) experiences price-rise, let $\tau'$ be the last one. Our proof relies on the following three observations:
\begin{enumerate}[\text{Fact }I:]
	\item There exists $\tau \in [t_1,t_2]$ that satisfies condition (1).
	\item The spending of $i_\tau$ at time-step $\tau'+1$ is at least $\alpha_{t_1} s(t_1)$, i.e., $\p^{\tau'+1}(\x^{\tau'+1}_{i_\tau}) \geq \alpha_{t_1} s(t_1)$.
	\item The spending of $i_\tau$ at $\tau$ is at least that at $\tau'+1$, i.e., $\p^\tau(\x^\tau_{i_\tau}) \geq \p^{\tau'+1}(\x^{\tau'+1}_{i_\tau})$.
\end{enumerate}

Fact I makes the above formulation well-defined, whereas Facts II and III give us the desired implication via the following chain of inequalities:
$$s(t_2) \geq s(\tau) = \p^\tau(\x^\tau_{i_\tau}) \geq \p^{\tau'+1}(\x^{\tau'+1}_{i_\tau}) \geq \alpha_{t_1} s(t_1).$$

The first inequality holds because the spending of the least spender is non-decreasing with time (\Cref{lem:Least_Spender's_Spending_Never_Decreases}), the equality denotes change of notation, and the final two inequalities follow from Facts II and III. The remainder of the proof consists of proving Facts I-III.

\paragraph{Proof of Fact I:}
Suppose, for contradiction, that condition (1) is violated for every $\tau \in [t_1,t_2]$. That is, for every $\tau \in [t_1,t_2]$, the least-spender at $\tau$, namely $i_\tau$, does not experience price-rise during $[t_1,\tau)$.

For each $t \in [t_1,t_2]$, let $\H_t$ denote the set of agents in the hierarchy, and let $M_t$ (the ``marked'' set) denote the set of agents that experience price-rise during $[t_1,t)$. We set $M_{t_1} \coloneqq \{\emptyset\}$, i.e., all agents are ``unmarked'' \emph{before} the price-rise at $t_1$. After the price-rise at $t_1$, we have $M_{t_1+1} = \H_{t_1}$, i.e., we ``mark'' the agents that experience price-rise. Thus, the contradiction hypothesis is equivalent to requiring that for every $\tau \in [t_1,t_2]$, $i_\tau$ is unmarked at time-step $\tau$, i.e., $i_\tau \notin M_\tau$. In particular, we require the agent $i_{t_2}$ to be unmarked at $t_2$.
 
 The update rule for $M_t$ is given by
\[ M_{t+1} =
  \begin{cases}
    M_t       & \quad \text{if $t$ is a Phase 2 step},\\
    M_t \cup \H_t  & \quad \text{if $t$ is a Phase 3 step}.
  \end{cases}
\]

Indeed, no agent experiences price-rise during Phase 2 and therefore the marked set remains unchanged. For a Phase 3 operation at time-step $t$, only the agents in the hierarchy $\H_t$ experience price-rise at $t$, and they are added to the current set $M_t$. Observe that a marked agent never becomes unmarked.

We will call a good \emph{marked} (at $t$) if it is owned by a marked agent (at $t$), i.e., $j' \in [m]$ is marked at $t$ if $j' \in \x^t_k$ for some $k \in M_t$. Notice that at the time-step $t_1 + 1$, the agents in $[n] \setminus \H_{t_1}$ constitute the unmarked set, and no such agent has an alternating path to a marked good, i.e., any good in the set $\textstyle{ \left\{ \cup_{k \in M_{t_1 + 1}} \x_k^{t_1+1} \right\} }$.\footnote{This is because the price-rise at $t_1$ forces the removal of all $\MBB{}$ edges between the agents in $[n] \setminus \H_{t_1}$ and the goods in $\textstyle{ \left\{ \cup_{k \in \H_{t_1}} \x_k^{t_1+1} \right\} }$. Therefore, the set of agents with an alternating path to some good in $\textstyle{ \left\{ \cup_{k \in \H_{t_1}} \x_k^{t_1+1} \right\} }$ is a subset of $\H_t$.}

The key observation underlying our proof is that for any time-step $t \in [t_1+1,t_2)$, if there is no alternating path from an unmarked agent (i.e., an agent in $[n] \setminus M_t$) to a marked good (i.e., a good in the set $\textstyle{ \left\{ \cup_{k \in M_t} \x^t_k \right\} }$), then the same holds for the time-step $t+1$, i.e., there is no alternating path from an agent in $[n] \setminus M_{t+1}$ to a good in the set $\textstyle{ \left\{ \cup_{k \in M_{t+1}} \x^{t+1}_k \right\} }$. Before proving this observation, notice that it gives us the desired contradiction as follows: By the contradiction hypothesis, the agent $i_{t_2}$ is unmarked for all $t \in [t_1,t_2]$. In addition, it can be shown that a marked good never becomes unmarked, i.e., $\textstyle{ \left\{ \cup_{k \in M_t} \x^t_k \right\} \subseteq \left\{ \cup_{k \in M_{t+1}} \x^{t+1}_k \right\} }$. (Indeed, a marked good cannot be allocated to an unmarked agent in Phase 2 because a swap is always along an existing alternating path. Similarly, in Phase 3, any newly created $\MBB{}$ edge must be from some agent in the hierarchy $\H_t$. By the update rule, such an agent is included in the marked set $M_{t+1}$.) In particular, the good $j$ is marked at time-step $t_1+1$, and stays so thereafter. By repeated application of the above observation, we get that there is no alternating path from the agent $i_{t_2}$ to the good $j$ at $t = t_2$. This contradicts the assumption that good $j$ is involved in the price-rise at $t_2$ when $i_{t_2}$ is the least-spender.

We will now prove the aforementioned observation via induction. We have already seen that the base case is true; indeed, since $M_{t_1+1} = \H_{t_1}$, no agent in $[n] \setminus M_{t_1+1}$  has an alternating path to any good in the set $\textstyle{ \left\{ \cup_{k \in M_{t_1+1}} \x^{t_1+1}_k \right\} }$. Let us assume that the observation holds for all $t \in [t_1+1,t']$ and prove that it holds for $t = t' + 1$, where $t'+1 < t_2$. We will use case analysis depending on whether $t'+1$ occurs during Phase 2 or Phase 3.
\begin{itemize}
	\item Suppose that $t'+1$ is a Phase 2 operation. By the contradiction assumption, the least-spender at $t'+1$, namely $i_{t'+1}$, is unmarked at $t'+1$. Furthermore, by the induction hypothesis, there is no alternating path from an unmarked agent to a marked good before the swap operation at $t'+1$. This implies that all agents and goods in the hierarchy at $t'+1$, namely $\H_{t'+1}$, are unmarked. Any swap operation must therefore occur between two unmarked agents. Since the swap operation does not create a new $\MBB{}$ edge, any newly created alternating path must involve the swapped good and the new owner of this good. However, the new owner did not have any alternating path to a marked good before the swap operation at $t'+1$. As a result, a new alternating path from an unmarked agent to a marked good cannot be created.
	 
	\item Now suppose that $t'+1$ is a Phase 3 operation. From the update rule, we know that $M_{t'+2} = M_{t'+1} \cup \H_{t'+1}$. Since a price-rise operation does not change the allocation, we have that $\textstyle{ \left\{ \cup_{k \in M_{t'+2}} \x^{t'+2}_k \right\} = \left\{ \cup_{k \in M_{t'+1}} \x^{t'+1}_k \right\} \bigcup \left\{ \cup_{k \in \H_{t'+1}} \x^{t'+1}_k \right\} }$. Consider any agent $i \in [n] \setminus M_{t'+2}$ that is unmarked at $t'+2$. Clearly, $i$ must be unmarked at $t'+1$. By the induction hypothesis, $i$ does not have an alternating path to any good in the set $\textstyle{ \left\{ \cup_{k \in M_{t'+1}} \x^{t'+1}_k \right\} }$ at $t'+1$. Furthermore, the price-rise at $t'+1$ forces the removal of any $\MBB{}$ edges from $i$ to any good in the set $\textstyle{ \left\{ \cup_{k \in \H_{t'+1}} \x^{t'+1}_k \right\} }$. Finally, any newly created $\MBB{}$ edge due to the price-rise is from some agent in $\H_{t'+1}$, and any such agent is added to the marked set $M_{t'+2}$. Therefore, once again, the set of agents that have an alternating path to some good in the set $\textstyle{ \left\{ \cup_{k \in M_{t'+2}} \x_k^{t'+2} \right\} }$ is a subset of $M_{t'+2}$. This finishes the proof of Fact I.
\end{itemize}

\paragraph{Proof of Fact II:}

Our proof of Fact II uses case analysis depending on whether $\tau' = t_1$ or $\tau' > t_1$. First, suppose that $\tau' = t_1$. In this case, we have that
$$\p^{\tau'+1}(\x^{\tau'+1}_{i_\tau}) = \alpha_{t_1} \p^{\tau'}(\x^{\tau'}_{i_\tau}) \geq \alpha_{t_1} \p^{\tau'}(\x^{\tau'}_{i_{\tau'}}) = \alpha_{t_1} s(t_1),$$
where the first equality holds because $i_{\tau}$ experiences price-rise at $\tau'$ by a factor $\alpha_{\tau'} = \alpha_{t_1}$, the inequality holds because $i_{\tau'}$ is the least-spender at $\tau'$, and the final equality uses alternative notation.

Next, suppose that $\tau' > t_1$. We now have that
$$\p^{\tau'+1}(\x^{\tau'+1}_{i_\tau}) = \alpha_{\tau'} \p^{\tau'} (\x^{\tau'}_{i_{\tau}}) \geq \alpha_{\tau'} \p^{\tau'} (\x^{\tau'}_{i_{\tau'}}) \geq (1+\varepsilon) \p^{\tau'} (\x^{\tau'}_{i_{\tau'}}) \geq (1+\varepsilon) s(t_1 + 1) \geq \alpha_{t_1} s(t_1),$$
where the equality holds because $i_\tau$ experiences price-rise at $\tau'$, the first inequality holds because $i_{\tau'}$ is the least-spender at $\tau'$, the second inequality holds because the price-rise factor is always at least $(1+\varepsilon)$,\footnote{When $\alpha = \alpha_3$, Line~\ref{algline:alpha_3} of Algorithm~\ref{alg:Main} forces $\alpha$ to be a positive integral power of $(1+\varepsilon)$. The same implication holds for the case $\alpha = \alpha_1$ as argued in the proof of \Cref{lem:ALG_Phase3_RunningTimeBound}. Note that we do not need to consider the case $\alpha = \alpha_2$ since it only occurs right before the termination of $\Alg{}$.} the third inequality holds because the spending of the least-spender is non-decreasing with time (\Cref{lem:Least_Spender's_Spending_Never_Decreases}), and the final inequality holds by definition of $\alpha_3$ (see Line~\ref{algline:alpha_3} of Algorithm~\ref{alg:Main}). This proves Fact II.

\paragraph{Proof of Fact III:}
From condition (2), we know that for all $t \in [\tau'+1,\tau)$, the least-spender at $t$, namely $i_t$, is unmarked at $t$. By contrast, the agent $i_\tau$ remains marked throughout $[\tau'+1,\tau)$. As in the proof of Fact I, note that no unmarked agent has an alternating path to a marked agent during $[t_1 + 1, \tau)$. We therefore have that none of the least-spenders $i_{\tau'+1},i_{\tau'+2},\dots,i_{\tau-1}$ have $i_\tau$ in their hierarchy, i.e., $i_\tau \notin \H_{\tau'+1} \cup \H_{\tau'+2} \cup \dots \cup \H_{\tau-1}$. This, in particular, implies that $i_\tau$ does not lose a good via a swap operation during any of these time-steps, which is the only way in which its spending can drop. This proves Fact III, and with it, finishes the proof of \Cref{lem:Bound_on_final_price_of_a_fixed_good}.
\end{proof}

The following result is a straightforward consequence of \Cref{lem:Bound_on_final_price_of_a_fixed_good}.

\begin{corollary}\label{cor:Bound_on_overall_spending}
$\p([m]) \leq m^3 v_{\max}^3$, where $\p$ is the price vector returned by $\Alg{}$ upon termination.
\end{corollary}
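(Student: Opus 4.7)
The plan is very short, since the corollary is essentially an immediate aggregation of the per-good price bound established in \Cref{lem:Bound_on_final_price_of_a_fixed_good}. That lemma tells us that at termination of \Alg{}, every individual good $j \in [m]$ satisfies $p_j \leq m^2 v_{\max}^3$. To obtain the desired bound on the aggregate spending $\p([m]) = \sum_{j \in [m]} p_j$, I would simply sum the per-good bound over all $m$ goods.

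Concretely, the proof would go: by \Cref{lem:Bound_on_final_price_of_a_fixed_good}, $p_j \leq m^2 v_{\max}^3$ for every $j \in [m]$. Therefore
\[
\p([m]) \;=\; \sum_{j \in [m]} p_j \;\leq\; \sum_{j \in [m]} m^2 v_{\max}^3 \;=\; m \cdot m^2 v_{\max}^3 \;=\; m^3 v_{\max}^3,
\]
as claimed.

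There is essentially no obstacle here; all of the real work is encapsulated in \Cref{lem:Bound_on_final_price_of_a_fixed_good}, which already accounts for the dynamics of Phase~2 swaps and Phase~3 price-rises through the chained inequality $s(t_{k+1}) \geq \alpha_{t_k} s(t_k)$ and the bound $\alpha_{t_\ell} \leq m v_{\max}$. The corollary is just the coarse union bound across goods, and no additional structural argument about \Alg{} is needed.
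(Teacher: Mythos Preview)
Your proposal is correct and matches the paper's approach exactly: the paper simply states that the corollary is a straightforward consequence of \Cref{lem:Bound_on_final_price_of_a_fixed_good} without giving any further argument, and your summation of the per-good bound over all $m$ goods is precisely that straightforward consequence.
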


\SmallDeltaPO*
\begin{proof}
Suppose, for contradiction, that the allocation $\x$ is Pareto dominated by another integral allocation $\y$ in the instance $\I$. That is, $v_k(\y_k) \geq v_k(\x_k)$ for each agent $k \in [n]$ and $v_i(\y_i) > v_i(\x_i)$ for some agent $i \in [n]$. Integrality of valuations in $\I$ implies that $v_i(\y_i) \geq v_i(\x_i) + 1$.

For any agent $k$, let $\alpha_k$ and $\alpha'_k$ denote maximum bang per buck ratios (with respect to the price vector $\p$) in the instances $\I$ and $\I'$ respectively. Recall that for the $\varepsilon$-rounded version $\I'$, we have $v_{k,j} \leq v'_{k,j} \leq (1+\varepsilon)v_{k,j}$ for each agent $k$ and each good $j$. Thus,
$\alpha_k = \max_{j \in [m]} v_{k,j}/p_j \leq \max_{j \in [m]} v'_{k,j}/p_j = \alpha'_k.$
We therefore have
\begin{alignat*}{2}
 \frac{v_k(\x_k)}{\alpha_k} & \geq \frac{v'_k(\x_k)}{(1+\varepsilon) \alpha_k} & \qquad \text{(since $\I'$ is $\varepsilon$-rounded)} &\\
 & \geq \frac{v'_k(\x_k)}{(1+\varepsilon) \alpha'_k} & \qquad \text{(since $\alpha_k \leq \alpha'_k$)} &\\
 & = \frac{ \p(\x_k) } {1+\varepsilon}, & \text{(via $\MBB$ condition in $\I'$)} &
\end{alignat*}
or equivalently, 
\begin{equation}
\frac{v_k(\x_k)}{\p(\x_k)} \geq \frac{\alpha_k} {1+\varepsilon}.
\label{eqn:Small_delta_PO_temp1}
\end{equation}
 In other words, the allocation $\x$---which is guaranteed to $\fPO{}$ for the instance $\I'$---is \emph{close} to being $\fPO{}$ for the original instance $\I$. The remainder of the proof will show that for a small enough $\varepsilon$, the allocation $\x$ turns out to be $\PO$ for the instance $\I$.

Consider the allocation $\y$. By definition of the maximum bang per buck ratio, we have that $\alpha_k \p(\y_k) \geq v_k(\y_k)$ for each agent $k \in [n]$. Since $\y$ Pareto dominates $\x$ in the instance $\I$, we have $\alpha_k \p(\y_k) \geq v_k(\x_k)$, which, along with \Cref{eqn:Small_delta_PO_temp1}, implies that 
\begin{equation}
\p(\y_k) \geq \frac{\p(\x_k)}{1+\varepsilon}.
\label{eqn:Small_delta_PO_temp2}
\end{equation}
Using a similar reasoning for the agent $i$ (and the observation that $v_i(\y_i) \geq v_i(\x_i) + 1$), we get
\begin{equation}
\p(\y_i) \geq \frac{\p(\x_i)}{1+\varepsilon} + \frac{1}{\alpha_i}.
\label{eqn:Small_delta_PO_temp3}
\end{equation}
The combined spending over all goods can be rewritten as follows:
\begin{alignat*}{2}
 \p([m]) & = \sum_{k \in [n]} \p(\y_k) && \text{(since all goods are allocated under $\y$)}\\
  & = \p(\y_i) + \sum_{k \in [n] \setminus \{i\}} \p(\y_k) &&\\
  & \geq \frac{\p(\x_i)}{1+\varepsilon} + \frac{1}{\alpha_i} + \sum_{k \in [n] \setminus \{i\}} \frac{\p(\x_k)}{1+\varepsilon} && \qquad (\text{from \Cref{eqn:Small_delta_PO_temp2,eqn:Small_delta_PO_temp3}})\\
  & = \frac{\p([m])}{1+\varepsilon} + \frac{1}{\alpha_i}&& \text{(since all goods are allocated under $\x$)}.
\end{alignat*}
This simplifies to
\begin{equation}
\varepsilon \left( \p([m] \alpha_i - 1 \right) \geq 1.
\end{equation}
It is easy to see that $\alpha_i \leq v_{\max}$, since the initial price of each good is at least $1$ (by integrality of valuations), and prices cannot decrease during the execution of $\Alg{}$. Furthermore, from \Cref{cor:Bound_on_overall_spending}, we know that $\p([m]) \leq m^3 v_{\max}^3$. Combining these observations, we get that $\varepsilon \geq \frac{1}{m^3 v_{\max}^4}$, which contradicts the choice of $\varepsilon$.
\end{proof}

\section{Appendix-II}
 \label{sec:Appendix_II}

\subsection{Corner Cases}
\label{subsec:CornerCases}


In \Cref{subsec:Proof_Of_ALG_RunningTime_PowersOfr}, we showed two results---namely, \Cref{lem:ALG_Phase3_RunningTimeBound,lem:ALG_Phase2_NumberOfIdentityChanges}---that provide running time bounds for Phase 2 and Phase 3 by showing that the spending of the least spender increases by some \emph{multiplicative} factor in every polynomial number of steps. These results implicitly assume that the spending of the least spender is nonzero to begin with. In this section, we will show that this assumption holds without loss of generality. 

Our reasoning will depend on whether or not a given fair division instance is a \emph{Hall's violator} \citep{H35representatives}. We will show that if an instance satisfies Hall's condition (i.e, is not a Hall's violator), then the spending of the least spender (in $\Alg$) becomes nonzero in $\O(n^2)$ steps. If, on the other hand, the instance is a Hall's violator, then we can break down its analysis into (a) a smaller instance that satisfies Hall's condition, and (b) a trivial instance.

Consider any fair division instance $\I = \langle [n], [m], \V \rangle$. Write $\G = ([n],[m],E)$ to denote the unweighted bipartite graph between the set of agents and the set of goods such that $E = \{(i,j) \, : \, v_{i,j}>0\}$. The instance $\I$ is said to be a \emph{Hall's violator} if there exists a set of agents $T\subseteq [n]$ that together (positively) value at most $|T|-1$ goods, i.e., the set $T$ violates Hall's condition in the graph $\G$. We call the set $T$ a \emph{maximal} Hall's violator if no strict superset of $T$ is a Hall's violator.

We will first show that an instance can be checked for Hall's condition in $\O(\poly(n,m))$ time. Let $M$ be a maximum (unweighted) matching of $\G$. A given instance $\I$ is a Hall's violator if and only if there exists an agent that is unmatched under $M$. The matching $M$ can be computed in $\O(\poly(n,m))$ time.

\paragraph{Instances that satisfy Hall's condition}
Our first result in this section (\Cref{lem:not_hall_violator}) pertains to instances that satisfy Hall's condition.

\begin{lemma}\label{lem:not_hall_violator}
Let $\I = \langle [n], [m], \V \rangle$ be an input instance to \Alg{} that satisfies power-of-$(1+\varepsilon)$ and Hall's conditions. Then, at each time step after the first $\O(n^2)$ steps, the spending of each agent under \Alg{} is strictly greater than zero.
\end{lemma}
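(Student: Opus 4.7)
The plan is to show that the set $Z(t) := \{k \in [n] : \p^t(\x^t_k) = 0\}$ of zero-spending agents is monotonically non-increasing under \Alg{}, and shrinks to $\emptyset$ within $O(n^2)$ steps.

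\emph{Stability of positive spending.} I would first record two invariants. \textbf{(i)} Because every good is positively valued by at least one agent, after Phase~1 every good carries a strictly positive price, so an agent has zero spending if and only if it owns no goods. \textbf{(ii)} A swap is only performed on an $\varepsilon$-path-violator $h$, for which $\p^t(\x^t_h \setminus \{j\}) > (1+\varepsilon)\,\p^t(\x^t_{i_t}) \geq 0$, forcing $|\x^t_h| \geq 2$; hence $h$ retains at least one good after the swap. Since Phase~3 does not reallocate goods, the set $Z(t)$ is non-increasing in $t$.

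\emph{Progress via Hall's condition.} Next I would argue that as long as $Z(t) \neq \emptyset$, \Alg{} cannot stall. Fix $t$ and let the tie-broken least spender $i$ lie in $Z(t)$. Since $\p^t(\x^t_i) = 0$, both $\alpha_2$ and $\alpha_3$ in Phase~3 are infinite, so whenever Phase~3 fires we must have $\alpha = \alpha_1 < \infty$, which adds at least one new agent to $\H_i$. To see that $\alpha_1$ is indeed finite whenever no Phase~2 swap is possible, suppose every agent of $\H_i$ owns at most one good; then $|\x_{\H_i}| \leq |\H_i| - 1$ (since $i$ contributes $0$), whereas Hall's condition applied to $\H_i$ yields $|N(\H_i)| \geq |\H_i|$, where $N(\H_i)$ denotes the set of goods positively valued by some member of $\H_i$. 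Therefore some $h \in \H_i$ positively values a good $j^* \notin \x_{\H_i}$, giving $\alpha_1 \leq \alpha_h/(v_{h,j^*}/p_{j^*}) < \infty$. Moreover, once $|\H_i| = n$, this hypothesis must fail: Hall's condition forces $m \geq n$, so $|\x_{\H_i}| = m \geq n > |\H_i| - 1$, which by pigeonhole gives some $h \in \H_i$ with $|\x^t_h| \geq 2$; such an $h$ is automatically an $\varepsilon$-path-violator against $i$ (as $\p^t(\x^t_h \setminus \{j\}) > 0 = (1+\varepsilon)\cdot 0$), triggering a Phase~2 swap.

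\emph{Step count.} For the $O(n^2)$ bound I would analyze a single \emph{episode} lying between consecutive decrements of $|Z|$, during which $i$ remains the tie-broken least spender. Each Phase~3 step strictly grows $\H_i$, so at most $n$ Phase~3 steps can occur per episode. To bound the Phase~2 swaps, I would use the potential
\[
\Phi(t) \;:=\; \sum_{j \in [m]} \min\!\bigl(\level^t(\x^{-1}(j)),\, n\bigr),
\]
where levels are taken with respect to $\H_i$. A swap moves one good from a level-$\ell$ agent to a level-$(\ell{-}1)$ agent, decreasing $\Phi$ by exactly $1$; Phase~3 only introduces new $\MBB$ edges from hierarchy members to outside goods, which can only shorten alternating paths and hence only decrease $\Phi$. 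Since the cascade of interest involves only the $O(n)$ goods lying on alternating paths to $i$, this yields $O(n)$ swaps per episode and hence $O(n)$ total steps per episode. Summing over the at most $n$ initial zero-spenders gives the desired $O(n^2)$ bound. The main obstacle will be formalizing the cascade across hierarchy rebuilds: after each swap the algorithm rebuilds $\H_i$ from scratch, and I would handle this by tracking the minimum level at which a path-violator sits and showing that this level strictly decreases from swap to swap until a good descends to level~$0$ (i.e.,~to~$i$).
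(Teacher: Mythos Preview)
Your overall plan---nonzero spending is absorbing, Hall's condition forces $\alpha_1 < \infty$ so each Phase~3 step strictly enlarges $\H_i$, and the current zero-spending least spender receives a good within $O(n)$ events---is correct and is essentially the paper's approach. Your verification that $\alpha_1$ is finite via $|N(\H_i)| \geq |\H_i| > |\x_{\H_i}|$ is in fact more explicit than the paper's treatment of that point.

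The gap is in the Phase~2 swap count. The potential $\Phi(t) = \sum_{j} \min(\level^t(\x^{-1}(j)), n)$ does \emph{not} drop by $1$ per swap: when good $j$ moves from $h_\ell$ to $h_{\ell-1}$, the level of $h_\ell$ (and of any agent previously reachable only through $h_\ell$) can strictly increase, so the goods those agents still hold raise their contributions to $\Phi$, and the net change can be positive. (The paper's own Phase~2 potential in \Cref{lem:ALG_Phase2_Swaps_PolynomialBound} faces exactly this complication, which is why it weights levels by $m$ and adds a ``critical goods'' correction term.) Your side remark about ``$O(n)$ goods lying on alternating paths'' does not rescue this either---the cascade does not push a single good downward; each swap transfers a different good. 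The correct argument is the one you flag only as a future ``obstacle'': since $h_{\ell-1}$ already owns its critical good (that is what places it at level $\ell-1$), after receiving $j$ it holds at least two goods and is the unique $\varepsilon$-path-violator at level $\ell-1$; because levels of other agents cannot decrease under a swap, no violator appears at any lower level. Hence the minimum violator level drops by exactly $1$ per swap, yielding at most $n-1$ swaps before $i$ receives a good. Replace the $\Phi$ argument with this direct level-tracking (which is precisely what the paper does) and the proof goes through.
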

\begin{proof}
 Observe that once the spending of an agent becomes nonzero during the run of \Alg{}, it can never become zero again. This is because for the spending of an agent to drop back to zero, it must lose its last good via a swap operation in Phase 2. However, such an exchange is disallowed by the $\varepsilon$-path-violator condition. Therefore, it suffices to show that the spending of each agent under \Alg{} becomes nonzero after the first $\O(n^2)$ steps.

 Let agent $i$ be the least spender at the end of Phase 1 of \Alg{}. Assume, without loss of generality, that the spending of agent $i$ at the end of Phase 1 is zero (otherwise the lemma follows immediately). We will show that after $\O(n)$ steps, the spending of agent $i$ strictly exceeds zero. The desired running time bound of $\O(n^2)$ will follow from a similar argument for the other agents.
 
 Our proof for the above claim consists of case analysis for whether or not at the end of Phase 1, there exists some agent in the hierarchy $\H_i$ that owns two or more goods. Suppose there exists an agent $k \in \H_i$ that owns two or more goods (if there are multiple such agents, tie-break in favor of agents at a lower level in $\H_i$, and then according to a prespecified lexicographic ordering). Then, $k$ must be an $\varepsilon$-violator, and therefore also an $\varepsilon$-path-violator (along some alternating path $P$). Additionally, by the choice of agent $k$, no agent at a lower level is an $\varepsilon$-path-violator. As a result, $k$ must lose a good under a swap operation in Phase 2 to its predecessor along path $P$, who acquires two goods as a result, and becomes the new $\varepsilon$-path-violator. This series of swaps continues for $\O(n)$ steps, and ends with the least spender receiving a new good.

Next, suppose that each agent in $\H_i$ owns exactly one good. Since the  given instance $\I$ satisfies Hall's condition, and agent $i$ does not yet own any good, there must exist an agent $k \notin \H_i$ that owns two or more goods. We will show that such an agent must get added to the hierarchy in $\O(n)$ steps. Then, by the above argument, in further $\O(n)$ steps, agent $i$ must receive a good that takes its spending strictly above zero.

Since each agent in the hierarchy owns exactly one good, there are no $\varepsilon$-path-violators in $\H_i$, and \Alg{} proceeds directly to Phase 3. Once again, since the least spender does not own any good, its spending cannot change as a result of price-rise, and therefore a new agent gets added to the hierarchy. If this agent has more than one good, then the lemma follows. Otherwise, the price-rise step is repeated. Therefore, after $\O(n)$ such steps, an agent with two or more goods must get added to the hierarchy, as desired.
\end{proof}

\paragraph{Instances that violate Hall's condition} If the given instance $\I = \langle [n], [m], \V \rangle$ is a Hall's violator, then we will first find a maximum (cardinality) matching $M$ in $\G$. Write $A \subset [n]$ to denote the set of agents that get matched in $M$. Note that the instance $\I'= \langle A, [m], \V'\rangle$ satisfies Hall's condition; here $\V'$ is the set of valuations of the agents in $A$. Also, the maximality of $M$ ensures that for every good $j \in [m]$, there exists $i \in A$ such that $v_{i,j} >0$. We will show that the allocation obtained by applying \Alg{} on $\I'$ is $\EFone{}$ and $\fPO{}$ for $\I$. 

Let $\x$ and $\p$ denote the allocation and price vector retuned by \Alg{} on $\I'$. For each of the remaining agents $i \in [n]\setminus A$, set $\x_i = \{\emptyset\}$. By construction, $\x$ satisfies the $\EFone{}$ condition for the agents in $A$. Hence, in order to show that $\x$ is $\EFone{}$ for $\I$, we need to show that this condition also holds for the agents in $[n]\setminus A$. Suppose, for contradiction, that there exists an agent $b \in [n]\setminus A$ that $\EFone{}$ envies $a \in A$. This happens if and only if $a$ is allocated two or more goods valued by $b$, i.e., $ |\x_a \cap \{j \in [m] \, : \, v_{b,j} > 0 \}| \geq 2 $. We will show that this contradicts the fact that $M$ is a maximum matching, and hence prove that $\x$ is an $\EFone{}$ allocation. 

Since $\I'$ satisfies Hall's condition, we have from \Cref{lem:not_hall_violator} that each agent in $A$ has nonzero spending under $(\x,\p)$, i.e., for each agent $i \in A$, we have $|\x_i| \geq 1$. Consider a new matching $M'$ (in $\G$) wherein $ i \in A \setminus \{ a \}$ is matched to a good in $\x_i$, $a$ is matched to some good $j' \in \x_a$, and $b$ is matched to a good from the non-empty set $\{ \x_a \setminus \{j'\} \} \cap \{ j \, : \, v_{b,j} > 0\}$. The size of the matching $M'$ is strictly greater than the size of $M$, which is a contradiction. Hence, $\x$ must be $\EFone{}$ for $\I$. 

Finally, to show that $\x$ is $\fPO$ for $\I$, we can set the endowments of all the agents in $[n]\setminus A$ to be zero, and obtain a market equilibrium $(\x,\p)$ for $\I$. Since $\x$ is the equilibrium outcome of some Fisher market, we have from \Cref{prop:FirstWelfareTheorem} that $\x$ must be $\fPO$ for $\I$.

\subsection{Additional Market Preliminaries}
\label{sec:Market_Appendix}

\paragraph{Eisenberg-Gale program} 

The convex program of \citet{EG59consensus} is known to characterize the equilibria of the Fisher market.

\begin{equation*}
\begin{array}{ll@{}ll}
\text{maximize}  & \sum\limits_{i=1}^n e_i \cdot \ln(u_i) &\\
\text{subject to}& u_i = \sum\limits_{j=1}^m v_{i,j} x_{i,j}  & \quad \forall \, i \in [n],\\
& \sum\limits_{i=1}^n x_{i,j} \leq 1 & \quad \forall \, j \in [m], \text{ and}\\
&x_{i,j} \geq 0 & \quad \forall \, i \in [n] \text{ and } j \in [m].
\end{array}
\end{equation*}

\paragraph{KKT conditions and maximum bang per buck}
We will now describe the KKT conditions for the Eisenberg-Gale program. Let $p_j$ denote the Lagrangian variable corresponding to the constraint $\sum_{i=1}^n x_{i,j} \leq 1$. Then,
\begin{enumerate}
	\item \emph{Dual feasibility}: $p_j \geq 0$ for each $j \in [m]$.
	\item \emph{Complementary slackness}:
	\begin{enumerate}
		\item For each $j \in [m]$, $p_j > 0 \implies \sum_{i=1}^n x_{i,j} = 1.$
		\item For each $i \in [n]$ and $j \in [m]$, $\frac{v_{i,j}}{p_j} \leq \frac{\sum_{j=1}^m v_{i,j} x_{i,j}}{e_i}.$
		\item For each $i \in [n]$ and $j \in [m]$, $x_{i,j} > 0 \implies \frac{v_{i,j}}{p_j} = \frac{\sum_{j=1}^m v_{i,j} x_{i,j}}{e_i}.$
	\end{enumerate}	 
\end{enumerate}

Under the assumptions that each good has an interested buyer (i.e., for each good $j$, $v_{i,j} > 0$ for some buyer $i \in [n]$) and each buyer is interested in some good (i.e., for each buyer $i \in [n]$, $v_{i,j} > 0$ for some good $j \in [m]$), it can be verified that optimal solutions of the Eisenberg-Gale program characterize the market equilibria of the corresponding Fisher market.

\subsection{Second Welfare Theorem for Fisher Markets}
\label{subsec:Second_Welfare_Theorem_for_Fisher_Markets}

\begin{restatable}{theorem}{Second Welfare Theorem for Fisher Markets}
 \label{thm:SecondWelfareTheoremFisherMarkets}
 Let $\I = \langle [n], [m], \V \rangle$ be an instance of the fair division problem, and let $\x$ be a fractionally Pareto efficient $(\fPO{})$ allocation for $\I$. Then, there is a price vector $\p = (p_1,\dots,p_m)$ and an endowment vector $\e = (e_1,\dots,e_n)$ such that $(\x,\p)$ is a market equilibrium for the market instance $\langle [n], [m], \V, \e \rangle$.
\end{restatable}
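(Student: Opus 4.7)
The plan is to prove the Second Welfare Theorem for Fisher markets via LP duality, extracting prices from the dual of a weighted welfare maximization linear program. This is the standard textbook approach adapted to the Fisher market setting.

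First, I would argue that because $\x$ is $\fPO$ for $\I$, there exists a positive weight vector $\lambda = (\lambda_1, \dots, \lambda_n) \in \mathbb{R}_{++}^n$ such that $\x$ is an optimal solution to the linear program
\begin{align*}
\max \; & \sum_{i \in [n]} \sum_{j \in [m]} \lambda_i \, v_{i,j} \, x_{i,j} \\
\text{s.t.} \; & \sum_{i \in [n]} x_{i,j} \leq 1 \quad \forall\, j \in [m], \quad x_{i,j} \geq 0 \quad \forall\, i,j.
\end{align*}
Indeed, the set of achievable utility vectors $U = \{(v_1(\y_1),\dots,v_n(\y_n)) : \y \in \FFrac\}$ is a polytope (since $\FFrac$ is a polytope and the utility map is linear in $\y$), and $\u^* := (v_1(\x_1),\dots,v_n(\x_n))$ lies on its Pareto frontier. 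A supporting hyperplane argument at $\u^*$ (separating $U$ from the open upper orthant $\u^* + \mathbb{R}_{++}^n$) yields a nonzero $\lambda \geq 0$ with $\u^* \in \arg\max_{\u \in U} \lambda \cdot \u$; with additional care (e.g., handling coordinates $i$ with $v_i(\x_i) = 0$ via a small perturbation of $\lambda$), this can be upgraded to $\lambda \in \mathbb{R}_{++}^n$.

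Next, I would take the dual of this LP, which has variables $p_j \geq 0$ for each good $j$, objective $\min \sum_{j \in [m]} p_j$, and constraints $p_j \geq \lambda_i v_{i,j}$ for every pair $(i,j)$. By strong LP duality, an optimal dual $\p^* = (p_1^*,\dots,p_m^*)$ exists, and complementary slackness gives two key implications: (i) $p_j^* > 0$ implies $\sum_i x_{i,j} = 1$, which is exactly the market-clearing condition; and (ii) $x_{i,j} > 0$ implies $\lambda_i v_{i,j} = p_j^*$, so that $v_{i,j}/p_j^* = 1/\lambda_i$, while the dual feasibility $p_{j'}^* \geq \lambda_i v_{i,j'}$ guarantees $v_{i,j'}/p_{j'}^* \leq 1/\lambda_i$ for every other good $j'$. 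Consequently $1/\lambda_i$ is the maximum bang-per-buck ratio of agent $i$ under $\p^*$, and $\x_i \subseteq \MBB_i$.

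Finally, define endowments $e_i := \p^*(\x_i) = \sum_{j \in [m]} p_j^* x_{i,j}$. Budget exhaustion is then automatic, while market clearing and the MBB condition were established above, so $(\x, \p^*)$ is an equilibrium for the Fisher market $\langle [n],[m],\V,\e \rangle$. The main obstacle, as noted, is ensuring the strict positivity of $\lambda$ (and hence $e_i > 0$); this is a classical technicality in second-welfare arguments. Fortunately, for the downstream use in the proof of \Cref{thm:EF1+fPO_Existence} only the MBB relationship is needed, which already follows from the weaker statement with $\lambda \geq 0$, so any degenerate coordinates can be handled by a limiting argument without affecting the applications.
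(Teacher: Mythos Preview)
Your approach is correct and yields the same conclusion, but it differs from the paper's argument in a meaningful way. You first invoke a supporting-hyperplane argument to obtain weights $\lambda \in \mathbb{R}_{++}^n$ and then dualize the weighted-welfare LP $\max \sum_{i,j} \lambda_i v_{i,j} x_{i,j}$. The paper instead dualizes a single LP that \emph{builds the weights in}: it maximizes unweighted social welfare $\sum_{i,j} v_{i,j} y_{i,j}$ subject to the extra constraints $\sum_j v_{i,j} y_{i,j} \geq u_i$ (where $u_i = v_i(\x_i)$), and since $\x$ is $\fPO$ it is primal optimal. The dual variables $d_i$ attached to these utility constraints then play the role of your $\lambda_i$; concretely, complementary slackness gives $x_{i,j} > 0 \Rightarrow p_j^* = (1+d_i^*) v_{i,j}$, so $1+d_i^*$ corresponds to your $1/\lambda_i$.

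The practical payoff of the paper's formulation is that it entirely sidesteps the technicality you flag about strict positivity of $\lambda$: because $d_i^* \geq 0$ by dual feasibility, one automatically has $1 + d_i^* \geq 1 > 0$, and no separate perturbation or limiting argument is needed. Your route is the classical second-welfare-theorem argument and is perfectly valid, but the paper's ``utility-constrained'' LP packages the hyperplane step and the duality step into one and delivers strictly positive effective weights for free.
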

 \begin{proof}
 Our proof relies on the formulation of a linear program that characterizes the set of all $\fPO{}$ allocations with respect to a given set of utilities. The market equilibrium conditions then follow from linear programming duality and complementary slackness conditions.
 
 Let $\u = (u_1,\dots,u_n)$ denote the utility vector induced by the given $\fPO{}$ allocation $\x$. That is, $u_i = \sum_{j=1}^m v_{i,j} x_{i,j}$ for each $i \in [n]$. Consider the linear program given by \Cref{eqn:fPO_LP_Primal}: 
\begin{equation}
\begin{array}{ll@{}ll}
\text{maximize}  & \sum\limits_{i=1}^n \sum\limits_{j=1}^m v_{i,j} y_{i,j} &\\
\text{subject to}& \sum\limits_{j=1}^m v_{i,j} y_{i,j} \geq u_i  & \quad \forall \, i \in [n],\\
& \sum\limits_{i=1}^n y_{i,j} \leq 1 & \quad \forall \, j \in [m], \text{ and}\\
&y_{i,j} \geq 0 & \quad \forall \, i \in [n] \text{ and } j \in [m].
\end{array}
\label{eqn:fPO_LP_Primal}
\end{equation}
A feasible solution of \Cref{eqn:fPO_LP_Primal} is a fractional allocation $\y$ that provides each buyer $i \in [n]$ with utility at least $u_i$. Notice that the objective in \Cref{eqn:fPO_LP_Primal} is precisely the sum of utilities of all buyers under $\y$. Since the allocation $\x$ is $\fPO{}$, we know that the optimal objective value must be equal to $\sum_{i=1}^n u_i$. Hence, $\x$ is a primal optimal solution.

The dual of the program in \Cref{eqn:fPO_LP_Primal} is given by \Cref{eqn:fPO_LP_Dual} below:
\begin{equation}
\begin{array}{ll@{}ll}
\text{minimize} & \sum\limits_{j=1}^m p_j  - \sum\limits_{i=1}^n u_i d_i &\\
\text{subject to}& p_j - d_i v_{i,j} \geq v_{i,j}  & \quad \forall \, i \in [n] \text{ and } j \in [m],\\
& d_i \geq 0 & \quad \forall \, i \in [n], \text{ and}\\
&p_j \geq 0 & \quad \forall \, j \in [m].
\end{array}
\label{eqn:fPO_LP_Dual}
\end{equation}
The dual variables $d_i$ and $p_j$ correspond to the first and second primal constraints respectively.

Let $(\x,\p^*,\d^*)$ be a tuple of optimal primal and dual solutions given a utility vector $\u$ induced by the $\fPO{}$ allocation $\x$. As discussed above, the optimal primal objective must be $\sum_{i=1}^n u_i$. Therefore, by strong duality,
\begin{equation}
	\sum_{i=1}^n u_i = \sum_{j=1}^m p^*_j  - \sum_{i=1}^n u_i d^*_i.
\label{eqn:temp1}
\end{equation}

Furthermore, dual feasiblity implies that for every $i \in [n]$ and $j \in [m]$,
\begin{equation}
\begin{array}{cccl}
p^*_j - d^*_i v_{i,j} \geq v_{i,j} & \implies & p^*_j & \geq (1+d^*_i) v_{i,j}\\
& \implies & 1+d^*_i & \leq \frac{p^*_j}{v_{i,j}}\\
& \implies & 1+d^*_i & \leq \min_{k} \frac{p^*_k}{v_{i,k}}.
\end{array}
\label{eqn:temp2}
\end{equation}

The final inequality in \Cref{eqn:temp2} in fact holds with an equality (otherwise the dual objective can be improved without violating feasibility). Thus, for each $i \in [n]$,
\begin{equation}
1+d^*_i = \min_{k} \frac{p^*_k}{v_{i,k}}.
\label{eqn:temp3}
\end{equation}

Let $\e = (e_1,\dots,e_n)$ be an endowment vector defined as $e_i = \sum_{j=1}^m p^*_j x_{i,j}$ for every $i \in [n]$. We claim that $(\x,\p^*)$ is a market equilibrium for the market instance $\langle [n], [m], \V, \e \rangle$. The proof of the claim follows from checking the market equilibrium conditions, as below:
\begin{itemize}
	\item \emph{Market clearing}: Recall that $p_j$ is the dual variable corresponding to the primal constraint $\sum_{i=1}^n y_{i,j} \leq 1$. Hence, by complementary slackness, for each $j \in [m]$, either $p^*_j = 0$ or $\sum_{i=1}^n x_{i,j} = 1$, which is precisely the market clearning condition.
	\item \emph{Budget exhaustion}: This follows from the choice of the endowment vector $\e$.
	\item \emph{$\MBB{}$ allocation}: Notice that $y_{i,j}$ is the primal variable corresponding to the dual constraint $p_j - d_i v_{i,j} \geq v_{i,j}$. Therefore, by complementary slackness, we have that
\begin{equation*}
\begin{array}{ccrlr}
x_{i,j} > 0 & \implies & p^*_j - d^*_i v_{i,j} & = v_{i,j} &\\
& \implies & p^*_j & = (1 + d^*_i) v_{i,j} &\\
& \implies & \frac{v_{i,j}}{p^*_j} & = \max_{k} \frac{v_{i,k}}{p^*_k} & (\text{from \Cref{eqn:temp3}}),
\end{array}
\end{equation*}
which gives the $\MBB{}$ allocation condition.
\end{itemize}
We have therefore shown the existence of a price vector $\p^*$ and an endowment vector $\e^*$ such that $(\x,\p)$ is a market equilibrium for the market instance $\langle [n], [m], \V, \e \rangle$, as desired.
\end{proof}

\subsection[Utility Maximization Does Not Imply MBB Condition in Fisher Markets]{Utility Maximization Does Not Imply MBB Condition in Fisher Markets}
\label{subsec:Fisher_counterexample}

Consider a Fisher market instance with two buyers and three goods. The initial endowments are $e_1 = 130$ for buyer 1 and $e_2 = 50$ for buyer 2. The valuations of the buyers are as follows: $v_{1,1} = 100$, $v_{1,2} = 50$, $v_{1,3} = 1$, $v_{2,1} = 1$, $v_{2,2} = 99$, and $v_{2,3} = 100$.

Let $\p = (p_1,p_2,p_3)$ be a price vector, where $p_1 = 70$, $p_2 = 60$, and $p_3 = 50$. The bang per buck ratios are given by $\alpha_{1,1} = \frac{10}{7}$, $\alpha_{1,2} = \frac{5}{6}$, $\alpha_{1,3} = \frac{1}{50}$, $\alpha_{2,1} = \frac{1}{70}$, $\alpha_{2,2} = \frac{99}{60}$, and $\alpha_{2,3} = \frac{100}{50}$. Thus, $\MBB_1 = \{g_1\}$ and $\MBB_2 = \{g_3\}$. Consider an allocation $\x$ given by $\x_1 = \{g_1,g_2\}$ and $\x_2 = \{g_3\}$. Notice that $\x$ does not satisfy the maximum bang per buck condition with respect to $\p$. However, the pair $(\x,\p)$ is utility maximizing under the given budget constraints.

\subsection{An Instance where a 1.44-approximate NSW and \fPO{} Allocation Does Not Exist}
\label{subsec:approxNSW+fPO_non-existance}

This section provides an example of a fair division instance where the Nash social welfare of any fractionally Pareto efficient $(\fPO{})$ allocation is at most $\frac{1}{1.44}$ times that of the Nash optimal allocation.

We define an instance $\I$ with $n=3$ agents and $m = 5$ goods. The set of goods is divided into two \emph{high-valued} goods $\{h_1,h_2\}$ and three \emph{signature} goods $\{g_1,g_2,g_3\}$. Each agent $i \in [n]$ values a high-valued good $h_j$ at $v_{i,j} = c$ (where $c$ is a large constant). Each signature good $g_j$ is valued by the agent with the same index $j \in [n]$ at $v_{j,j} = \frac{1}{3}$, and by every other agent $i \in [n] \setminus \{j\}$ at $v_{i,j} = \frac{1}{3} - \delta$, where $\delta \in \left(0,\frac{3\varepsilon}{2(1+3\varepsilon)} \right)$ is a prespecified constant.

Let $\x$ be any integral $\fPO{}$ allocation. \Cref{claim:fPO-example} below asserts that there must exist an agent $i \in [n]$ such that $\x_i \subseteq \{ g_i \}$. 

\begin{claim}
\label{claim:fPO-example}
	Let $\x$ be any integral $\fPO{}$ allocation with respect to the instance $\I$. Then, there exists an agent $i\in [n]$ such that $\x_i \subseteq \{ g_i \}$.
\end{claim}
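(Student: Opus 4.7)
The plan is to argue by contradiction. Suppose every agent $i \in [n]=\{1,2,3\}$ has a bundle $\x_i$ that is \emph{not} contained in $\{g_i\}$, i.e., each agent owns at least one good from $\{h_1,h_2\} \cup \{g_\ell : \ell \neq i\}$. I would first observe that since every agent positively values every good, $\fPO$ forces all $5$ goods to be allocated (otherwise giving an unallocated good to any agent strictly improves their utility). I would then split into two cases based on whether any signature good is ``misplaced.''

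\emph{Case 1: every signature good is with its matching agent}, i.e., $g_i \in \x_i$ for all $i$. Then each $\x_i$ must contain an additional good beyond $g_i$, and by hypothesis this good lies in $\{h_1,h_2\}$. Pigeonhole on $2$ high-valued goods among $3$ agents immediately produces the contradiction.

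\emph{Case 2: some $g_j \in \x_i$ with $i \neq j$.} Here I would produce an explicit Pareto-dominating (possibly fractional) allocation, exploiting the $\delta$-gap in how $g_j$ is valued by agent $j$ (worth $1/3$) versus agent $i$ (worth $1/3-\delta$). Subcase~2a: if agent $j$ owns a high-valued good $h_k$, I would swap $g_j$ from $i$ to $j$ while fractionally moving a tiny $\varepsilon$-share of $h_k$ from $j$ to $i$, choosing $\varepsilon$ in the nonempty interval $((1/3-\delta)/c,\ 1/(3c))$ so that both agents strictly gain, yielding a fractional Pareto improvement. Subcase~2b: if agent $j$ owns no high-valued good, then $\x_j \subseteq \{g_1,g_2,g_3\}\setminus\{g_j\}$, and the hypothesis $\x_j \not\subseteq \{g_j\}$ forces $\x_j$ to be nonempty, say $g_\ell \in \x_j$ with $\ell \neq j$; I would swap $g_j$ and $g_\ell$ between $i$ and $j$ — agent $j$ strictly gains $\delta$ (replacing a foreign signature good by his own), and agent $i$ either gains $\delta$ (if $\ell=i$) or is indifferent (if $\ell \neq i$), again contradicting $\fPO$.

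The main obstacle is Subcase~2a, which is where the high-valued goods must be used to enable a \emph{fractional} Pareto improvement — an integral swap of $g_j$ for $h_k$ would hurt agent $j$ enormously (losing value $c$), so the argument crucially needs the freedom to transfer only an $\varepsilon$-fraction of $h_k$, and this is the only place where $\fPO$ (rather than mere Pareto efficiency) is used. Every other step is a pure combinatorial/integral swap exploiting the $\delta$-gap, and the parameter constraint $\delta \in (0,\, 3\varepsilon/(2(1+3\varepsilon)))$ from the construction is not needed for this claim itself, only $\delta > 0$.
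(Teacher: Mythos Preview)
Your proof is correct but takes a genuinely different route from the paper's. The paper invokes the Second Welfare Theorem (\Cref{thm:SecondWelfareTheoremFisherMarkets}) to obtain equilibrium prices $\p$ supporting the $\fPO$ allocation $\x$, and then argues via bang-per-buck ratios: since some agent (say agent~$1$) receives no high-valued good, if she holds a foreign signature good $g_2$ then $g_2\in\MBB_1$; comparing $\alpha_{1,\cdot}$ and $\alpha_{2,\cdot}$ coordinatewise (they agree on $\{h_1,h_2,g_3\}$, while $\alpha_{2,g_2}>\alpha_{1,g_2}$ and $\alpha_{2,g_1}<\alpha_{1,g_1}$) shows $\MBB_2=\{g_2\}$, forcing $\x_2\subseteq\{g_2\}$. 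There is no case split and no explicit dominating allocation is constructed.

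Your argument is more elementary: you avoid the welfare theorem and supporting prices entirely, instead building explicit (fractional) Pareto improvements in each case. The cost is a somewhat longer case analysis; the benefit is that the proof is fully self-contained and makes transparent exactly where fractionality is needed (only your Subcase~2a, where an $\varepsilon$-share of a high-valued good must be transferred). Your remark that only $\delta>0$ is used here, not the specific upper bound on $\delta$, is also correct.
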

\begin{proof}	
Observe that the number of high-valued goods is strictly smaller than the number of agents. Hence, some agent (say agent $1$) must miss out on a high-valued good under the allocation $\x$. If $\x_1 \subseteq \{g_1\}$, then the claim follows. So, let us assume that agent $1$ gets the signature good of some other agent (say agent $2$), i.e., $g_2 \in \x_1$. We will now show that the claim must hold for the agent $2$.

Since $\x$ is $\fPO{}$, we have $g_2 \in \MBB_1$. In addition, from \Cref{thm:SecondWelfareTheoremFisherMarkets}, we know that there exists a price vector $\p$ such that $(\x,\p)$ is a market equilibrium. Hence, the bang per buck ratio of agent $1$ for the good $g_2$ is at least that for any other good. That is,
\begin{equation}
	\alpha_{1,g_2} \geq \alpha_{1,j} \text{ for every } j \in [m].
\label{eqn:approxMMS+fPO_nonexistence_temp1}
\end{equation}
Furthermore, since $v_{1,j} = v_{2,j}$ for every $j \in [m] \setminus \{g_1,g_2\}$, we have that
\begin{equation}
	\alpha_{1,j} = \alpha_{2,j} \text{ for every } j \in [m] \setminus \{g_1,g_2\}.
\label{eqn:approxMMS+fPO_nonexistence_temp2}
\end{equation}

Next, since $v_{2,g_2} > v_{1,g_2}$ by construction, we also have that $\alpha_{2,g_2} > \alpha_{1,g_2}$. Along with \Cref{eqn:approxMMS+fPO_nonexistence_temp1,eqn:approxMMS+fPO_nonexistence_temp2}, this gives
\begin{alignat*}{4}
	\alpha_{2,g_2} & > & \; & \alpha_{1,g_2} & \quad &&&\\
	                 & \geq & \; & \alpha_{1,j} \text{ for every } j \in [m]  & \quad  \text{(\Cref{eqn:approxMMS+fPO_nonexistence_temp1})} &&&\\
                     & = & \; & \alpha_{2,j} \text{ for every } j \in [m] \setminus \{g_1,g_2\} & \quad \text{(\Cref{eqn:approxMMS+fPO_nonexistence_temp2})} &.&&
\end{alignat*}

 Similarly, since $v_{1,g_1} > v_{2,g_1}$, we have that $\alpha_{1,g_1} > \alpha_{2,g_1}$. Combining this with the above implications, we have that $\alpha_{2,g_2} > \alpha_{2,j}$ for every $j \in [m] \setminus \{g_2\}$. In other words, the $\MBB{}$ set of agent $2$ consists only of the good $g_2$. Since $\x$ is an $\fPO{}$ allocation, we have that $\x_2 \subseteq \MBB_2 = \{g_2\}$. Hence, the claim holds for the agent $2$.
\end{proof}

Among all allocations that are $\fPO{}$ for the instance $\I$, let $\x$ denote the one with the largest Nash social welfare. From Claim~\ref{claim:fPO-example}, we know that there exists some agent, say agent 3, such that $\x_3 \subseteq \{g_3\}$. If $\x_3 = \{ \emptyset \}$, then $\NW(\x)=0$. However, since there exists an $\fPO{}$ allocation with nonzero Nash social welfare (namely the welfare maximizing allocation), we must have that $\x_3=\{g_3\}$. Indeed, $\NW(\x) = \left( (c+\frac{1}{3}) \cdot (c+\frac{1}{3}) \cdot \frac{1}{3} \right)^{\frac{1}{3}}$ for the allocation $\x = (\{h_1,g_1\},\{h_2,g_2\},\{g_3\})$.
	
	It is easy to check that the allocation $\y=(\{h_1\},\{h_2\},\{g_1,g_2,g_3\})$ is the Nash optimal allocation (without the $\fPO{}$ constraint), and $\NW(\y)= \left( c^2 \cdot (1 - 2\delta) \right)^{\frac{1}{3}} \approx c^{2/3}$ for small $\delta$. Therefore, 
	$$\frac{\NW(\y)}{\NW(\x)}\geq \left( \frac{c^2}{\frac{1}{3}(c+\frac{1}{3})^2} \right)^\frac{1}{3} = 3^{\frac{1}{3}} \left( \frac{3c}{3c+1} \right)^\frac{2}{3} \geq 1.44 \ \text{ for large } c.$$

\paragraph{An allocation that is $\PO{}$ but not $\fPO{}$} Notice that the allocation $\y$ in the above example is Nash optimal, and hence, by the result of \citet{CKM+16unreasonable}, is Pareto efficient ($\PO$). However, there is no agent that receives only a subset of its signature goods under $\y$. Therefore, from Claim~\ref{claim:fPO-example}, $\y$ cannot be $\fPO$.

\subsection{\EFone{} Allocations can be Highly Inefficient}
\label{subsec:EF1-inefficient}

	Define an instance $\langle [n],[m], \V \rangle$ with $m = n$ such that $v_{i,j} = 1$ if $i = j$ and $0$ otherwise. Consider an allocation $\x$ such that $\x_i=\{g_{i+1}\}$ for all $i\in[n-1]$ and $\x_n=\{g_1\}$. Clearly, $\x$ is an $\EFone$ allocation. However, $\x$ is highly inefficient since each agent gets a valuation of zero.

\subsection{Every Rounding of Spending Restricted Outcome Violates \EFone}
\label{subsec:spending-restricted}

	In this section, we provide an example of a fair division instance where every rounding of the spending restricted equilibrium defined by \citet{CG15approximating} violates $\EFone$ condition. Our example also serves as a counterexample for the rounding of CEEI outcomes as well.
	
	Consider an instance with $n=5$ agents and $m=7$ goods. Let $\{g_1,g_2,\dots,g_7\}$ denote the set of goods, and let $v_1,v_2,\dots,v_5$ denote the valuation functions. 
	
	\begin{center}
		\begin{tabular}{| l | l | l | l | l | l | l | l |}
			\hline
			&$g_1$&$g_2$&$g_3$&$g_4$&$g_5$&$g_6$& $g_7$ \\ \hline
			Agent~$1$ & $3/4$   & 0   & 0   & $3/4$   & 0   & 0   & 0 \\ \hline
			Agent~$2$ & 0   & $3/4$   & 0   & $3/4$   & 0   & 0   & 0 \\ \hline
			Agent~$3$ & 0   & 0   & $3/4$   & $3/4$   & 0   & 0   & 0 \\ \hline
			Agent~$4$ & $0.7$ & $0.7$ & $0.7$ & $0.7$   & $2/3$   & 0   & $2/3$ \\ \hline
			Agent~$5$ & $0.7$ & $0.7$ & $0.7$ & $0.7$   & 0   & $2/3$   & $2/3$ \\ 
			\hline
		\end{tabular}
	\end{center}
	
	The unique CEEI price vector is given by $\p=\left(\frac{3}{4},\frac{3}{4},\frac{3}{4},\frac{3}{4},\frac{2}{3},\frac{2}{3},\frac{2}{3}\right)$, and the unique CEEI fractional allocation is $\y= \left( \{g_1,\frac{g_4}{3}\},\{g_2,\frac{g_4}{3}\},\{g_3,\frac{g_4}{3}\},\{g_5,\frac{g_7}{2}\},\{g_6,\frac{g_7}{2}\} \right)$.
	Since the price of each good is strictly less than $1$, $\y$ is also the unique spending restricted outcome. Let $\x$ be any rounding of the fractional allocation $\y$. Then, there exists an agent $i\in \{1,2,3\}$ and an agent $k\in\{4,5\}$ such that $\x_i=\{g_i,g_4\}$ and $\x_k=\{g_{k+1}\}$. Hence $v_k(\x_k)=\frac{2}{3}< 0.7 = v_k(\x_i\setminus\{j\})$ for any $j\in\x_i$, i.e., the allocation $\x$ is not $\EFone$.

\subsection{Approximate NSW may not be \EFone{} or \PO{}}
\label{subsec:ApproxNash_NotEF1_NotPO}

	Consider an instance $\I = \langle [n],[m],\V \rangle$ with $m=2n$. Each good $j \in [2n-2]$ is valued at $v_{i,j}=2^{n-1}$ by each agent $i \in [n]$. In addition, we have $v_{i,(2n-1)} = v_{i,2n} = 0$ for each $i \in [n-1]$, and $v_{n,(2n-1)}=1$ and $v_{n,2n}=2^n - 1$.
	
	The allocation $\x = \left( \{1,2\},\{3,4\},\dots,\{(2n-1),2n\} \right)$ is Nash optimal, and $\NSW(\x)= 2^n$. The allocation $\y = \left( \{1,2,2n\},\{3,4\},\{5,6\},\dots,\{(2n-1)\} \right)$ is a $2$-approximation to Nash social welfare since $\NSW(\y)=2^{n-1}$. However, the allocation $\y$ is not \EFone{}, since agent $n$ envies every other agent by more than up to one good. Also, $\y$ is not $\PO$ since the allocation $\x$ Pareto dominates $\y$.

\end{document}